\renewcommand{\widetilde}{\tilde}
\newcommand{\cal}{\mathcal}
\newcommand{\lcont}{\swarrow\atop\searrow}
\newcommand{\rcont}{\nwarrow\atop\nearrow}
\newcommand{\Ai}{\operatorname{Ai}}
\newcommand{\Id}{\mathbh{1}}
\newcommand{\Or}{\mathcal{O}}
\newcommand{\Pb}{\mathbb{P}}
\newcommand{\I}{{\mathrm i}}
\newcommand{\C}{\mathbb{C}}
\newcommand{\R}{\mathbb{R}}
\newcommand{\N}{\mathbb{N}}
\newcommand{\Z}{\mathbb{Z}}
\newcommand{\la}{\langle}
\newcommand{\ra}{\rangle}
\newcommand{\BK}{\mathbb{K}}
\newcommand{\lla}{\langle\!\langle}
\newcommand{\rra}{\rangle\!\rangle}
\newtheorem{proposition}{Proposition}[section]
\newtheorem{theorem}[proposition]{Theorem}
\newtheorem{lemma}[proposition]{Lemma}
\begin{document}
\begin{frontmatter}

\title{Nonintersecting random walks in the neighborhood of a symmetric tacnode}
\runtitle{Tacnode process}

\begin{aug}
\author[A]{\fnms{Mark} \snm{Adler}\thanksref{t1}\ead[label=e1]{adler@brandeis.edu}},
\author[B]{\fnms{Patrik L.} \snm{Ferrari}\thanksref{t2}\ead[label=e2]{ferrari@uni-bonn.de}}
\and
\author[A,C]{\fnms{Pierre} \snm{van Moerbeke}\corref{}\thanksref{t3}\ead[label=e3]{pierre.vanmoerbeke@uclouvain.be}\ead[label=e4]{vanmoerbeke@brandeis.edu}}
\runauthor{M. Adler, P. L. Ferrari and P. van Moerbeke}
\affiliation{Brandeis University, Bonn University, and Universit\'e de
Louvain and~Brandeis~University}
\address[A]{M. Adler\\
P. van Moerbeke\\
Department of Mathematics\\
Brandeis University\\
Waltham, Massachusetts 02453\\
USA\\
\printead{e1}\\
\hphantom{E-mail: }\printead*{e4}} 
\address[B]{P. L. Ferrari\\
Institute for Applied Mathematics\\
Bonn University\\
Endenicher Allee 60\\
53115 Bonn\\
Germany\\
\printead{e2}}
\address[C]{P. van Moerbeke\\
Department of Mathematics\\
Universit\'e de Louvain\\
1348 Louvain-la-Neuve\\
Belgium\\
\printead{e3}}
\end{aug}

\thankstext{t1}{Supported by NSF Grant DMS-07-04271.}
\thankstext{t2}{Supported by the German National Foundation via the
SFB611-A12 project.}
\thankstext{t3}{Supported by NSF Grant DMS-07-04271, a European Science
Foundation grant (MISGAM),
a Marie Curie grant (ENIGMA), FNRS and ``Inter-University Attraction Pole
(Belgium)'' (NOSY) grants.}

\received{\smonth{5} \syear{2011}}
\revised{\smonth{10} \syear{2011}}

%
\begin{abstract}
Consider a continuous time random walk in $\Z$ with independent and
exponentially distributed jumps $\pm1$. The model in this paper
consists in an infinite number of such random walks starting from the
complement of $\{-m,-m+1,\ldots,m-1,m\}$ at time $-t$, returning to the
same starting positions at time $t$, and \textit{conditioned not to
intersect}. This yields a determinantal process, whose gap
probabilities are given by the Fredholm determinant of a kernel. Thus
this model consists of two groups of random walks, which are contained
within two ellipses which, with the choice $m\simeq2t$ to leading
order, just touch: so we have a \textit{tacnode}. We determine the new
limit extended kernel under the scaling $m=\lfloor2t+\sigma
t^{1/3}\rfloor$, where parameter $\sigma$ controls the strength of
interaction between the two groups of random walkers.
\end{abstract}

%
\begin{keyword}[class=AMS]
\kwd[Primary ]{60G60}
\kwd{60G55}
\kwd{35Q53}
\kwd[; secondary ]{60G10}
\kwd{35Q58}.
\end{keyword}
\begin{keyword}
\kwd{PNG-models}
\kwd{nonintersecting random walks}
\kwd{kernels}
\kwd{Tacnode process}.
\end{keyword}

\end{frontmatter}

\section{Introduction}
In the past decade, systems of vicious random walks and
nonintersecting Brownian motions have been investigated, and
quantities such as the correlation functions~\cite{NKT02}, the
one-point distribution functions and limit processes under appropriate
scaling limits have been studied. Nonintersecting Brownian motions
arise in the study of random matrices~\cite{FN98,KT04,KT07b}, and space
(and/or) time discrete versions in random tiling and growth
models~\cite{Jo02b,Jo03b,Jo04,PS02,SI03,FS03,OR01,Fer04}. Most of these works use
the mathematical framework shared by Brownian motions starting from a
point, and either ending at the same point after a given time or the
boundary condition is free (with possible extra boundary conditions
like staying positive~\cite{Nag03,TW07b}).

Consider $N$ nonintersecting Brownian bridges $x_i(\tau)$ on $\R$,
leaving from $0$ at time $\tau=-2N$ and forced to $0$ at time $\tau
=2N$. For large $N$, the mean density of Brownian paths\vspace*{1pt} has support,
for each $-2N<\tau<2N$, on the interval $(-\sqrt{4N^2-\tau^2},\sqrt
{4N^2-\tau^2})$. This means that on the macroscopic scale, where space
and time units are set equal to $N$, one sees a circle. Near its
boundary, the density  of Brownian paths is of order $N^{-1/3}$, thus to
see something nontrivial one needs to look in a space window\vspace*{1pt} of size
$N^{1/3}$ and, by Brownian scalings, a time window of size $N^{2/3}$.
We call this the ``\textit{Airy microscope},'' since it holds
%
\begin{equation}\label{EqIntro1}
\lim_{N\to\infty}\Pb\bigl(\mbox{all }N^{-1/3}\bigl(x_i
({2s}{N^{2/3}})-{2N}\bigr)\in{E^c-s^2} \bigr)= \Pb
\bigl({\cal
A}_2(s)\cap E=\varnothing\bigr),\hspace*{-28pt}
\end{equation}
where ${\cal A}_2$ is the so-called \textit{Airy$_2$ process}. It has a
\textit{universal} character and was discovered in the context of the
so-called multilayer PNG model~\cite{PS02}. The scaling (\ref
{EqIntro1}) is equivalent to the customary $N^{-1/6}$-GUE-edge
rescaling along the circle for nonintersecting Brownian motions
leaving from the origin at time $t=0$ and returning to the origin at
time $t=1$; this is done by an appropriate change of the variance of
the Brownian motions.

In the context of growth models, generalizations have been introduced
with external sources~\cite{BR00,SI04}. Its analog in terms of Brownian
motions is to require that a finite number of Brownian motions end up
at some point $\alpha N$. Then under the scaling in (\ref{EqIntro1}),
the limit process is a transition process from Airy$_2$ to Brownian
motion. For extensions to more general sources, see~\cite{BBP06,BP07},
while for the case that the top $r$ Brownian motion ends up at $2N$,
see~\cite{AvMD08} and~\cite{FAvM08}.

A further known situation occurs when a fraction $pN$ of the $N$
nonintersecting Brownian motions (leaving from the origin at time
$t=-2N$) end at time $t=2N$ at position $a N$ and another fraction
$(1-p)N$ at $b N$, with $a<b$. When $N\to\infty$, the mean density of
Brownian particles has its support on one interval in the beginning and
on two intervals near the end. Thus a bifurcation appears for some
intermediate time $\tau_0$, where one interval splits into two
intervals, creating a ``heart-like'' shape with a cusp at the origin.
Near this cusp appears a new \textit{universal} process, upon looking
through the ``\textit{Pearcey microscope},'' where the space window is
$N^{1/4}$, and the time window is $N^{1/2}$. The new process is called
the \textit{Pearcey process}~\cite{TW06} and is independent of the values
of $a, b$ and $p$; see~\cite{AOvM10}. Once the bifurcation has taken
place, the Brownian motions will eventually fluctuate like the Airy$_2$
process near the edge, with a transition from the Pearcey to the
Airy$_2$ process~\cite{ACvM10}. The Pearcey process has also been
obtained as the limit of discrete models; see~\cite{OR07,BK09,BD10}.

The motivation of our work is to understand what happens when half of
the nonintersecting Brownian motions start and end at a point, while
the second half start and end at another point.
When the two starting points are sufficiently far apart from each
other, the mean density of particles will be confined to two separate
circles, with Airy$_2$ processes appearing near the boundary, as
described above. When the two starting points move away from each other
at an appropriate rate proportional to $N$, the two circles will just
touch, creating a tacnode. A new \textit{critical} process appears by
looking at the two sets of nonintersecting Brownian motions, which
experience a brief meeting in the neighborhood of the tacnode, but
looked at with the \textit{Airy scaling}; we call it the
\textit{tacnode process}. Pictorially it can be thought of as two Airy$_2$
processes touching; see Figure~\ref{FigTacnode}.%

%
\begin{figure}

\includegraphics{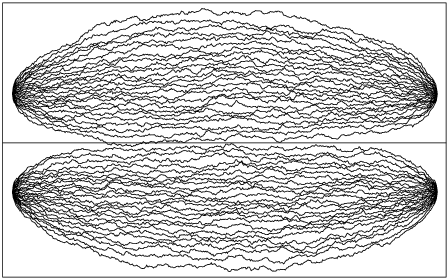}

\caption{Illustration of the tacnode with $N=50$ Brownian bridges.}
\label{FigTacnode}
\end{figure}

In this paper we obtain an explicit formula for the kernel governing
this tacnode process for nonintersecting continuous-time
random walks, rather than nonintersecting Brownian motions. The same
result is expected to hold for the Brownian motion case, since under
the scaling the discrete nature of the random walks is lost, and the
random walks become Brownian paths. Our main result is the limiting
kernel at the tacnode under appropriate scaling limit, stated in
Theorem~\ref{ThmExtKernelAsympt}. Before taking the limit, the kernel
is given by Theorem~\ref{MainTheorem}. The model is to let two groups
of nonintersecting random walks with jumps $\pm1$, rate~$1$ and
$2m+1$ integers apart evolve during a total time or order $m$, with
space--time rescaled \textit{\`a la Airy}, namely $x\sim\xi m^{1/3}$ and
$\tau\sim s m^{2/3}$ as suggested by formula~(\ref{EqIntro1}). The
parameter $m$, defined here, plays the role of the number of particles
$N$, previously defined.

There is an important difference with respect to the Airy$_2$ and
Pearcey cases: here we have a one-parameter family of
processes, which is obtained by modulating the endpoints' distance
between the two sets of Brownian motions over distance of order
$N^{1/3}$. For the Pearcey processes (and the Airy$_2$ process),
geometric changes of this type only have the effect of modifying the
position (and orientation) of the cusp, but the underlying Pearcey
process remains unchanged. In the literature there is another known
situation with a process in a tacnode-like geometry~\cite{BD10}, which,
however, differs from the present one.


For Brownian motions the problem can be approached using multiple
orthogonal polynomials~\cite{DK08}; then Delvaux, Kuijlaars and Zhang~\cite{DKZ10}
carry out asymptotics for these polynomials yielding a Riemann--Hilbert
description of the tacnode process kernel (which meanwhile appeared on
the arXiv). In the forthcoming paper~\cite{Joh10}, Johansson uses a different
approach leading to an explicit kernel for the Brownian motion problem,
but seemingly and surprisingly different from the one obtained in the
present paper. In another forthcoming paper Adler, Johansson and van
Moerbeke~\cite{AJvM11} consider a tacnode process in the context of domino
tilings of two overlapping Aztec diamonds and found yet another kernel;
in the same paper they show that the kernels obtained are all
equivalent! A direct relation with the Riemann--Hilbert type
formulation of the kernel~\cite{DKZ10} remains an open problem. In a recent
preprint about nonintersecting Brownian motions, Ferrari and Veto~\cite{FV12}
discuss a kernel for a nonsymmetric tacnode, which contains a
parameter sensing the relative number of Brownian motions, or
equivalently, the ratio of the curvature of the curves meeting at the
tacnode.

\subsection*{Outline}
In Section~\ref{Model} we define the model and state the two main
results. In Section~\ref{SectFiniteSyst}, Theorem~\ref{thmKernelPart},
we derive the finite time result for $\tau=0$, which is reshaped in
Section~\ref{shape} as a preparation to carrying out the large time
limit. Before actually doing this, we indicate in Section~\ref{s5} how
to introduce the time, leading to the finite multi-time kernel in
Theorem~\ref{ThmExtKernel}, an extension of the kernel appearing in
Proposition~\ref{propDecomp}. In Section~\ref{SectAsymptotics}, we
take the limit of the multi-time kernel, leading to the proof of the
first formula of Theorem~\ref{ThmExtKernelAsympt}. In Section \ref
{IntegralReprKernel}, we sketch the proof of the double integral
representation of the kernel, the second formula of Theorem~\ref
{ThmExtKernelAsympt}, using the steepest descent analysis.

\section{Model and results}\label{Model}
Consider a continuous time random walk in $\Z$ with jumps $\pm1$,
occurring independently with rate $1$; that is, the waiting times of
the up- and down-jumps are independent and exponentially distributed
with mean $1$. The transition probability $p_t(x,y)$ of going from $x$
to $y$ during a time interval of length $t$ is given by
%
\begin{equation} \label{tr-pr0}
p_t(x,y) = e^{-2t} I_{|x-y|}(2t),
\end{equation}
where $I_n$ is the modified Bessel function of degree $n$; see~\cite{AS84}.

Consider now an infinite number of continuous time random walks
starting from $\{\ldots,-m-2,-m-1\}\cup\{m+1,m+2,\ldots\}$ at
time $\tau=-t$, returning to the starting~positions at time $\tau=t$,
and \textit{conditioned not to intersect}; see Figure~\ref{figTrajectoriesB}.
%
\begin{figure}

\includegraphics{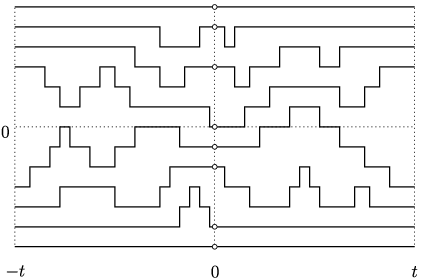}

\caption{The lines are the nonintersecting walks $\tilde{\mathbf{x}}$.
The white circles are the support of the point process $\tilde\eta$.}
\label{figTrajectoriesB}
\end{figure}
Denote $\tilde x_k(\tau)$ the position of the walk that starts and ends
at position $k$. Then, the point process $\tilde\eta$ on $\Z$
(described by the little white circles in Figure \ref
{figTrajectoriesB}) defined by
%
\begin{equation}
\tilde\eta(x)=\sum_{k\in\Z\setminus\{-m,\ldots,m\}} \delta
_{x,\tilde x_k(0)}
\end{equation}
with $\delta$ the Kronecker-delta, is determinantal; that is, there
exists a kernel $\widetilde\BK_{m}$ such that the $k$-point correlation
function $\rho^{(k)}$ is given by
$\rho^{(k)}(y_1,\ldots,y_k)=\det(\widetilde\BK
_{m}(y_i,y_j))_{1\leq
i,j\leq k}$. One of the interesting quantities is the \textit{gap
probability of a set $E$}, which is given by $\Pb(\tilde
\eta(\Id_E)=0)$, that is, the probability that \textit{none of the random
walks are in $E$} at time $\tau=0$. For a determinantal point process
the gap probability is given by the Fredholm determinant of the
associated kernel $\widetilde\BK_{m}$ projected onto $E$. For more
informations on determinantal point processes,
see~\cite{Lyo03,BKPV05,Sos06,Jo05,Spo05}.

The determinantal structure still holds if we consider the point
process on a set of time-slices instead of a single time $\tau=0$. This
means that given times $\tau_1<\tau_2<\cdots<\tau_p$ in the interval
$(-t,t)$, the point process on $\{\tau_1,\ldots,\tau_p\}\times\Z$
defined by
%
\begin{equation}
\tilde\eta(\tau,x)=\sum_{r=1}^p\sum_{k\in\Z\setminus\{-m,\ldots
,m\}}
\delta_{(\tau,x),(\tau_r,\tilde x_k(\tau_r))}
\end{equation}
is determinantal. That is, the space--time correlation functions are
given by the determinant of an extended kernel, which we denote by
$\widetilde\BK_m^{\mathrm{ext}}(t_1,x_1;t_2,x_2)$, where $t_i\in\{\tau
_1,\ldots,\tau_p\}$ and $x_i\in\Z$.

%
\begin{figure}

\includegraphics{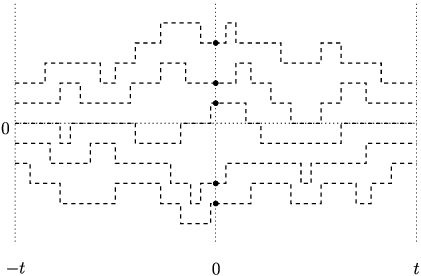}

\caption{The dotted lines are the nonintersecting
walks $\mathbf{x}$, the dual process of $\tilde{\mathbf{x}}$ of
Figure~\protect\ref{figTrajectoriesB}. The black circles are the
support of the
point process $\eta$.} \label{figTrajectories}
\end{figure}

It is more convenient to first study the \textit{dual} or complementary
process $ {\mathbf{x}} (\tau)$. The \textit{dual proceeds along the
gaps of }$\tilde{\mathbf{x}} (\tau)$. In this instance, the dual $
{\mathbf{x}} (\tau)$ of $\tilde{\mathbf{x}} (\tau)$ is described by
$n=2m+1$ ($m\in\N$) \textit{nonintersecting} continuous-time random
walks, starting from $-m,-m+1,\ldots,m-1,m$ at time $\tau=-t$,
returning to the starting positions at time $\tau=t$; see Figure
\ref{figTrajectories}, and Figure~\ref{figTrajectoriesC} for the
superposition of the trajectories of ${\mathbf{x}}(\tau)$ and
$\tilde{\mathbf{x}}(\tau)$.

\begin{figure}[b]

\includegraphics{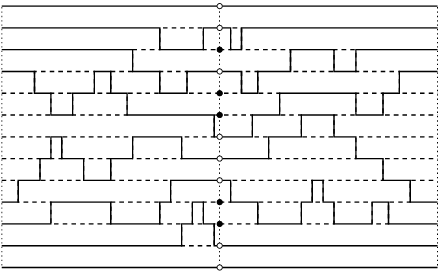}

\caption{Superposition of Figures \protect\ref{figTrajectoriesB} and
\protect\ref{figTrajectories}.}
\label{figTrajectoriesC}
\end{figure}

In particular, the dual process ${\mathbf{x}}(\tau)$ at $\tau=0$ is
given by the little black circles in Figure~\ref{figTrajectories}. The
probability measure at time $\tau=0$ is obtained by the
Karlin--McGregor formula~\cite{KM59}, and thus it is a determinantal
process for a kernel $\BK_m$. Finally, the complementation principle by
Borodin, Olshanski and Okounkov (see Appendix of~\cite{BOO00}) tells us
that, if the kernel $\BK_m$ governs the process $ {\mathbf{x}} (\tau)$,
then the kernel $\widetilde\BK_m=\Id-\BK_m$ describes the dual process~$\tilde{\mathbf{x}} (\tau)$.
%
\begin{theorem}\label{MainTheorem}
The determinantal point process $\tilde\eta(\tau,x)$ on $\{\tau
_1,\ldots,\tau_p\}\times\R$, $\tau_i\in(-t,t)$, defined by the two
groups of nonintersecting walkers, starting and ending $2m+1$ apart,
at times $-t$ and $t$, respectively, has gap probabilities on any
compact set $E\subset\{\tau_1,\ldots,\tau_p\}\times\R$ given by
%
\begin{equation}
\Pb\bigl(\tilde\eta(\Id_E)=0\bigr)=\det(\Id-\widetilde\BK_m^{\mathrm{ext}})_{L^2(E)},
\end{equation}
where the kernel $\widetilde\BK_m^{\mathrm{ext}}$ is given by
%
\begin{eqnarray}\label{eqThmKernelHolesIntroExt}
&&
\frac{e^{2t_2}}{e^{2t_1}}\widetilde\BK_m^{\mathrm{ext}}(t_1,x_1;t_2,x_2)\nonumber\\
&&\qquad=-\Id_{[t_2<t_1]}I_{|x_1-x_2|}\bigl(2(t_2-t_1)\bigr)\nonumber\\
&&\qquad\quad{}-\frac{V_m}{(2\pi\I)^2} \oint_{\Gamma_0}dz\oint_{\Gamma_{0,z}}dw
\frac{e^{t(z-z^{-1})}}{e^{t(w-w^{-1})}}
\frac{e^{-t_1(z+z^{-1})}}{e^{-t_2(w+w^{-1})}}
\frac{w^{x_2-m-1}}{z^{x_1-m}}\nonumber\\
&&\qquad\quad\hspace*{110pt}{}\times\frac{H_{2m+1}(w)
H_{2m+1}(z^{-1})}{z-w}\\
&&\qquad\quad{} - \frac{V_m}{(2\pi\I)^2} \oint_{\Gamma_0}dw\oint_{\Gamma_{0,w}}dz
\frac{e^{t(w-w^{-1})}}{e^{t(z-z^{-1})}}
\frac{e^{-t_1(z+z^{-1})}}{e^{-t_2(w+w^{-1})}}
\frac{w^{x_2+m}}{z^{x_1+m+1}}\nonumber\\
&&\qquad\quad\hspace*{112pt}{}\times\frac{H_{2m+1}(z)H_{2m+1}(w^{-1})}{w-z}\nonumber\\
&&\qquad\quad{} -\Id_{[x_1\neq x_2]}\frac{V_m}{2\pi\I}\oint_{\Gamma_0}dz
\frac{e^{(t_2-t_1)(z+z^{-1})}}{z^{x_1-x_2+1}}
H_{2m+1}(z^{-1})H_{2m+1}(z)\nonumber
\end{eqnarray}
with $V_m:=1/(H_{2m+1}(0)H_{2m+2}(0))$. The function $H_n$ is itself
the Fredholm determinant on $\ell^2(\{n,n+1,\ldots\})$
%
\begin{equation}
H_n(z^{-1}):=\det\bigl(\Id-K(z^{-1})\bigr)_{\ell^2(\{n,n+1,\ldots\})}
\end{equation}
of the kernel
%
\begin{equation}
K(z^{-1})_{k,\ell}:=\frac{(-1)^{k+\ell}}{(2\pi\I)^2}\oint_{\Gamma
_0}du \oint_{\Gamma_{0,u}}dv\frac{u^{\ell}}{v^{k+1}} \frac
{1}{v-u}\frac
{u-z}{v-z}\frac{e^{2t(u-u^{-1})}}{e^{2t(v-v^{-1})}},
\end{equation}
where $\Gamma_{0}$ is any anticlockwise simple loop enclosing $0$ and
similarly $\Gamma_{0,u}$ encircles the poles at $0$ and $u$ (but
not $z$).\setcounter{footnote}{3}\footnote{For any set of points $S$, the notation $\oint
_{\Gamma
_{S}} dz f(z)$ means that the integration path goes anticlockwise
around the points in $S$ but does not include any other poles of $f$.}

The extended kernel, governing the process $\tilde\eta(\tau,x)$, is
given in terms of the kernel $\widetilde\BK_m(x_1,x_2)=\widetilde\BK
_m^{\mathrm{ext}}(0,x_1;0,x_2)$, governing the distribution $\tilde\eta
(0,x)$, by
%
\begin{eqnarray}\label{27}
\widetilde\BK_m^{\mathrm{ext}}(t_1,x_1;t_2,x_2) &=&-\Id_{[t_2<t_1]}
\bigl(e^{(t_2-t_1) {\cal H}}\bigr)(x_1,x_2)\nonumber\\[-8pt]\\[-8pt]
&&{}+(e^{-t_1 \cal H}\widetilde\BK_m e^{t_2 \cal H})(x_1,x_2),\nonumber
\end{eqnarray}
where ${\cal H}$ is the discrete Laplacian
%
\begin{equation}\label{Laplacian0}
({\cal H} f)(x)= f(x+1)+f(x-1)-2f(x).
\end{equation}
\end{theorem}

Remark that the transition probability of (\ref{tr-pr0}), defined for
$t\geq0$, can be written as $p_t(x,y)= e^{t{\cal H}}\Id
(x,y)=:e^{t{\cal H}} (x,y)$. Here, $\Id$ denotes the identity operator
on $\Z$, that is, $\Id(x,y)=1$ if $x=y$ and $\Id(x,y)=0$ if $x\neq y$.

The formula for the kernel $\widetilde\BK_m(x_1,x_2)=\widetilde\BK
_m^{\mathrm{ext}}(0,x_1;0,x_2)$ at $t_1=t_2=0$ of Theorem~\ref{MainTheorem},
will be established in Section~\ref{SectFiniteSyst}, whereas the one
for $\widetilde\BK^{\mathrm{ext}}_m$ will be shown in Section~\ref{s5}. In
Sections~\ref{shape} and~\ref{s5}, it will be shown that both kernels
$\widetilde\BK_m(x,y)$ and $\widetilde\BK^{\mathrm{ext}}_m(t_1,x_2;t_2,x_2)$ have a representation, whose constituents can
be expressed in terms of Bessel functions; see the expression (\ref
{eqpropDecomp}) and the time-dependent kernel~(\ref{Kext}), derived
from (\ref{eqpropDecomp}), via recipe (\ref{27}).
Also, note that the kernel $K(z^{-1})$ is a rank-one perturbation of
the kernel $K(0)$, whose Fredholm determinant
%
\begin{equation}\label{eq313}
H_n(0)=\det\bigl(\Id-K(0)\bigr)_{\ell^2(\{n,n+1,\ldots\})}
\end{equation}
is the distribution of the longest increasing subsequence of a random
permutation in the Poissonized version, or, equivalently, it
yields the distribution of the
height function in the polynuclear growth (PNG) model~\cite{BDJ99,PS02}.
In the scaling limit, considered in Section~\ref{SectAsymptotics},
$H_n(0)$ will converge to the Tracy--Widom distribution $F_2$.

To study the limiting behavior, when $m,t\to\infty$, consider first
the system of nonintersecting random walks starting at time $-t$ and
ending at positions $\{\ldots,-m-2,-m-1\}$ at time $t$. This is,
up to a shift by $m+1$, the multilayer PNG model studied by Pr\"ahofer
and Spohn in~\cite{PS02}. Their work\vspace*{1pt} shows that the top random walk at
time $\tau=0$ has fluctuations around $x=-m+2t$ of order $t^{1/3}$. By
symmetry, if one considers only the nonintersecting random walks
starting and ending at position $\{m+1,m+2,\ldots\}$, the bottom random
walk at time $\tau=0$ fluctuates around $x=m-2t$ also in the spatial
scale $t^{1/3}$.

The top and bottom random walks interact if the proportion of deleted
configurations, due to interaction, is nonzero. This happens when
$m=2t$ to leading order in $t$. The first scaling where interaction is
relevant is given by $m=2t+\sigma t^{1/3}$. The parameter $\sigma$
modulates the strength of interaction of the two sets of
nonintersecting random walks. In the extreme cases $\sigma\to\infty$,
we clearly (by a simple probabilistic argument) go back to the
situation of two independent PNG models; thus the top of the lower
walks and the bottom of the upper walks are governed by the Airy$_2$
process~\cite{PS02}. On the other hand, when $\sigma\to-\infty$, one
expects to see a point process governed by the sine kernel or the
Pearcey process. Moreover, locally the paths will looks like random
walks, so the exponents in the scaling for time and space are in a
ratio $2:1$. Thus, we set the scaling\footnote{We do not write explicitly
the integer parts, since in the $t\to\infty$ limit it is irrelevant.}
%
\begin{equation}\label{eqScalingSpaceTime}
m=2t+\sigma t^{1/3},\qquad x_i=\xi_i t^{1/3},\qquad t_i=s_i
t^{2/3},\qquad i=1,2.
\end{equation}
Also note that for each time $-t<\tau<t$, the density of particles has
its support on two semi-infinite intervals, whose boundary, as a
function of~$\tau$, describes two curves, which at $\tau=0$ form a
\textit{tacnode}. The purpose of Theorem~\ref{ThmExtKernelAsympt} is to
describe the fluctuations of the random walks in the $t\to\infty$
limit in the neighborhood of $(x,\tau)=(0,0)$, but in the new
space--time scale, given by (\ref{eqScalingSpaceTime}).

In order to state the second main result, define the standard Airy kernel,
%
\begin{equation}\label{AiryK}
K_{\mathrm{Ai}}(\xi_1,\xi_2):=\int_0^\infty d\lambda \Ai(\xi
_1+\lambda)\Ai
(\xi_2+\lambda)
\end{equation}
and the function ${\cal Q}(\kappa)$, already appearing in~\cite{TW94},
%
\begin{equation}\label{Q}
{\cal Q}(\kappa):=[(\Id-\chi_{\tilde\sigma} K_{\mathrm{Ai}} \chi
_{\tilde
\sigma})^{-1}\chi_{\tilde\sigma} {\Ai}](\kappa)\qquad \mbox{with
}\tilde
\sigma:=2^{2/3}\sigma
\end{equation}
and where $\chi_{a}(x)=\Id_{[x>a]}$. We further set
%
\begin{equation}\label{Airy}
\Ai^{(s)} (\xi):= e^{\xi s+(2/3) s^3}\Ai(\xi+ s^2),
\end{equation}
which equals the standard Airy function $\Ai(\xi)$, when $s=0$, and
define the functions
%
\begin{eqnarray}\label{eqSpaceTimeAB}\quad
{\cal A}(s,\xi)&:=& \Ai^{(s)}(\sigma-\xi)\nonumber\\
&&{}+\int_{\tilde\sigma
}^\infty
d\kappa\int_0^\infty d\alpha\,
{\cal
Q}(\kappa) \Ai(\kappa+\alpha) \Ai^{(s)}(2^{1/3}\alpha+\sigma-\xi
),\\
{\cal B}(s,\xi)&:=& \int_{\tilde\sigma}^\infty d\kappa\,{\cal
Q}(\kappa)\Ai^{(s)}(2^{1/3}\kappa-\sigma+\xi)\nonumber
\end{eqnarray}
and
%
\begin{eqnarray}\label{eqSpaceTimeC}
{\cal C}(s,\xi)&:=&2^{-1/3}\int_{\tilde\sigma}^\infty d\kappa\,{\cal
Q}(\kappa) \biggl[ \Ai^{(2^{-2/3}s)}(\kappa+2^{-1/3}\xi)\nonumber\\
&&\hspace*{87pt}{}+ \int_{\tilde\sigma}^\infty d\lambda\,{\cal Q}(\lambda) \int
_0^\infty
d\alpha \Ai(\alpha+\lambda) \nonumber\\[-8pt]\\[-8pt]
&&\hspace*{174pt}{}\times\Ai^{(2^{-2/3}s)}(\alpha+\kappa
+2^{-1/3}\xi)\biggr]\nonumber\\
&&{}+(\xi\leftrightarrow-\xi),
\nonumber
\end{eqnarray}
where with $(\xi\leftrightarrow-\xi)$ we mean the same expression with
$\xi$ replaced by $-\xi$.
Finally, we define the following two Laplace transforms, $ \hat{\cal
P}(u)$ and $ \hat{\cal Q}(u)$:
%
\begin{eqnarray}\label{Laplace}
\hat{\cal Q}(u ) &:=& \int^{\infty}_{\tilde\sigma} d\kappa\, {\cal
Q}(\kappa) e^{\kappa u 2^{1/3}},\nonumber\\[-8pt]\\[-8pt]
\hat{\cal P}(u)&:=&- \int_0^\infty d\kappa\, e^{-\kappa u 2^{1/3}}
\int
^{\infty}_{\tilde\sigma} d\mu\, {\cal Q}(\mu) \Ai(\mu+\kappa).
\nonumber
\end{eqnarray}
%
\begin{theorem}\label{ThmExtKernelAsympt}
Near the tacnode appears a new determinantal process on $\{s_1,\ldots
,s_p\}\times\R$, the tacnode process ${\cal T}$, whose gap
probabilities on any compact set $E\subset\{s_1,\ldots,s_p\}\times\R$
are given by
%
\begin{equation}
\Pb\bigl({\cal T}(\Id_E)=0\bigr)=\det(\Id-{\cal K}^{\mathrm{ext}})_{L^2(E)}.
\end{equation}
The kernel ${\cal K}^{\mathrm{ext}}$ is the limit of $\widetilde\BK_m^{\mathrm{ext}}$ under the scaling (\ref{eqScalingSpaceTime}),
%
\begin{equation}
{\cal K}^{\mathrm{ext}}(s_1,\xi_1;s_2,\xi_2):=\lim_{t\to\infty} \frac
{(-1)^{x_2} e^{4t_2}}{(-1)^{x_1} e^{4t_1}} t^{1/3} \widetilde\BK
_m^{\mathrm{ext}}(t_1,x_1;t_2,x_2),
\end{equation}
where the convergence is uniform for $\xi_1, \xi_2$ and $s_1,s_2$ in
bounded sets. The kernel ${\cal K}^{\mathrm{ext}}$ has the following
representations:
%
\begin{eqnarray}\label{ExtKernelA}
\hspace*{-4pt}&&
{\fontsize{10.15pt}{11pt}\selectfont{\mbox{$\displaystyle {\cal
K}^{\mathrm{ext}}(s_1,\xi_1;s_2,\xi_2)$}}}
\nonumber\\
\hspace*{-4pt}&&
{\fontsize{10.15pt}{11pt}\selectfont{\mbox{$\displaystyle \qquad= -\frac{\Id_{[s_2<s_1]}}{\sqrt{4\pi(s_1-s_2)}} \exp\biggl(-\frac
{(\xi
_1-\xi_2)^2}{4(s_1-s_2)}\biggr)
+{\cal C}(s_1-s_2,\xi_1-\xi_2)$}}}\nonumber\\
\hspace*{-4pt}&&
{\fontsize{10.15pt}{11pt}\selectfont{\mbox{$\displaystyle \qquad\quad{}+ \int_0^\infty d\gamma\bigl({\cal A}(s_1,\xi_1-\gamma){\cal
A}(-s_2,\xi_2-\gamma)
+{\cal A}(s_1,-\xi_1-\gamma){\cal A}(-s_2,-\xi
_2-\gamma)$}}} \nonumber\\[-8pt]\\[-8pt]
\hspace*{-4pt}&&
{\fontsize{10.15pt}{11pt}\selectfont{\mbox{$\displaystyle \qquad\quad\hspace*{44.5pt}{}-{\cal A}(s_1,\xi_1-\gamma){\cal
B}(-s_2,\xi_2-\gamma)
-{\cal
A}(s_1,-\xi_1-\gamma){\cal B}(-s_2,-\xi_2-\gamma)$}}}\nonumber\\
\hspace*{-4pt}&&
{\fontsize{10.15pt}{11pt}\selectfont{\mbox{$\displaystyle \qquad\quad\hspace*{44.5pt}{}-{\cal B}(s_1,\xi_1-\gamma){\cal A}(-s_2,\xi_2-\gamma)
-{\cal
B}(s_1,-\xi_1-\gamma){\cal A}(-s_2,-\xi_2-\gamma)\bigr)$}}}\nonumber\\
\hspace*{-4pt}&&
{\fontsize{10.15pt}{11pt}\selectfont{\mbox{$\displaystyle \qquad\quad{}-\int_{-\infty}^0 d\gamma\bigl({\cal B}(s_1,\xi_1-\gamma){\cal
B}(-s_2,\xi_2-\gamma)
+{\cal B}(s_1,-\xi_1-\gamma){\cal
B}(-s_2,-\xi_2-\gamma)\bigr)$}}}\nonumber
\end{eqnarray}
as well as (with arbitrary $\delta>0$)
%
\begin{eqnarray}\label{ExtKernelB}
&&{\cal K}^{\mathrm{ext}}(s_1,\xi_1;s_2,\xi_2)\nonumber\\
&&\qquad= -\frac{\Id_{[s_2<s_1]}}{\sqrt{4\pi(s_1-s_2)}} \exp\biggl(-\frac
{(\xi
_1-\xi_2)^2}{4(s_1-s_2)}\biggr)
+{\cal C}(s_1-s_2,\xi_1-\xi_2)\nonumber\\
&&\qquad\quad{}+\frac{1}{(2\pi\I)^2}\int_{\delta+\I\R} du\int
_{-\delta
+\I\R} dv
\frac{e^{{u^3}/3-\sigma u}}{e^{{v^3}/3-\sigma v}}\frac{e^{s_1
u^2}}{e^{s_2 v^2}}
\biggl(\frac{e^{\xi_1 u}}{e^{\xi_2 v}}+\frac{e^{-\xi_1 u}}{e^{-\xi_2
v}}\biggr)\nonumber\\
&&\hphantom{dv}\hspace*{116pt}\qquad\quad{}\times\frac{(1-\hat{\cal P}(u))(1-\hat{\cal P}(-v))}{u-v}\nonumber\\
&&\qquad\quad{}-\frac{1}{(2\pi\I)^2}\int_{2\delta+\I\R} du \int
_{\delta+\I\R} dv
\frac{e^{{u^3}/3-\sigma u}}{e^{-{v^3}/3-\sigma v}}\frac{e^{s_1
u^2}}{e^{s_2 v^2}}
\biggl(\frac{e^{\xi_1 u}}{e^{\xi_2 v}}+\frac{e^{-\xi_1 u}}{e^{-\xi_2
v}}\biggr)\nonumber\\[-8pt]\\[-8pt]
&&\hphantom{dv}\hspace*{113pt}\qquad\quad{}\times
\frac{(1-\hat{\cal P}(u)) \hat{\cal Q}(-v)}{u-v}\nonumber\\
&&\qquad\quad{}-\frac{1}{(2\pi\I)^2}\int_{-\delta+\I\R} du\int
_{-2\delta+\I\R} dv
\frac{e^{-{u^3}/3-\sigma u}}{e^{{v^3}/3-\sigma v}} \frac{e^{s_1
u^2}}{e^{s_2 v^2}}
\biggl(\frac{e^{\xi_1 u}}{e^{\xi_2 v}}+\frac{e^{-\xi_1 u}}{e^{-\xi_2
v}}\biggr)\nonumber\\
&&\hphantom{dv}\hspace*{127pt}\qquad\quad{}\times
\frac{(1-\hat{\cal P}(-v)) \hat{\cal Q}(u)}{u-v}\nonumber\\
&&\qquad\quad{}+\frac{1 }{(2\pi\I)^2}\int_{-\delta+\I\R} du \int
_{\delta+\I\R} dv
\frac{e^{-{u^3}/3-\sigma u}}{e^{-{v^3}/3-\sigma v}}
\frac{e^{s_1 u^2}}{e^{s_2 v^2}}
\biggl(\frac{e^{\xi_1 u}}{e^{\xi_2 v}}+\frac{e^{-\xi_1 u}}{e^{-\xi_2
v}}\biggr)\nonumber\\
&&\hphantom{dv}\hspace*{116pt}\qquad\quad{}\times
\frac{\hat{\cal Q}(u) \hat{\cal Q}(-v)}{u-v}.\nonumber
\end{eqnarray}
\end{theorem}

Note the kernel (\ref{ExtKernelB}) is invariant under the involution
$(s_1,\xi_1; s_2,\xi_2) \mapsto (-s_2,-\xi_2; -s_1,-\xi_1)$, thus
reflecting the symmetry of the symmetric tacnode.

The form (\ref{ExtKernelA}) of the limiting extended kernel in
Theorem~\ref{ThmExtKernelAsympt} will be shown in Section \ref
{SectAsymptotics}, whereas a sketch of the proof of its double integral
representation (\ref{ExtKernelB}) will be given in Section \ref
{IntegralReprKernel}.

In the preprint~\cite{DKZ10}, the analogous problem for Brownian Motion
will be analyzed with the Riemann--Hilbert approach applied to multiple
orthogonal polynomials. It would be interesting to see how to relate
the two formulas (which we expect to be equivalent).

\section{\texorpdfstring{Finite system at $\tau=0$}{Finite system at tau=0}}\label{SectFiniteSyst}
In this section we will prove Theorem~\ref{MainTheorem}, in particular
the formula for kernel $\widetilde{\BK}_m(x,y)=\widetilde\BK_m^{\mathrm{ext}}(0,x;0,y)$, as in (\ref{eqThmKernelHolesIntroExt}), for
$t_1=t_2=0$. Consider a continuous time random  walk in $\Z$ with jumps
$\pm1$, which occur independently with rate $1$; that is, the waiting
times of the up- and down-jumps are independent and exponentially
distributed with mean~$1$. Thus, the number of up-jumps (and similarly
down-jumps) during the time interval $[0,t]$ is Poisson distributed,
%
\begin{equation}
\Pb(k \mbox{ up-jumps during } [0,t])=e^{-t}\frac{t^k}{k!}.
\end{equation}
As will be shown, the transition probability $p_t(x,y)$ of going from
$x$ to $y$ during a time interval of length $t$ is given by
%
\begin{equation}\label{tr-pr}
p_t(x,y) = e^{-2t} I_{|x-y|}(2t),
\end{equation}
where $I_n$ is the modified Bessel function of degree $n$; see \cite
{AS84}. To prove (\ref{tr-pr}), first notice that by symmetry, it is
enough to consider $y-x\geq0$. To go from $x$ to $y$, the process must
perform $k$ steps down and $k+y-x$ steps up. Since the moment, at which
the down or up steps occur, is independent of whether it is a down or
an up step, one may assume the process doing first $k$ steps down and
then $k+y-x$ steps up. By the strong Markov property of the random walk
and the independence of the jumps,
%
\begin{eqnarray}\label{tr-prk}
p_t(x,y)&=&\sum_{k\geq0}\Pb(\{
\mbox{$k+y-x$ up-steps and $k$ down-steps}\}
\mbox{ during time $t$})\nonumber\\
&=& e^{-2t} \sum^{\infty}_{k\geq0}\frac{t^k}{k!}\frac{t^{y-x
+k}}{(y-x+k)!}\\
&=&e^{-2t}
I_{|x-y|}(2t).\nonumber
\end{eqnarray}
The modified Bessel function has the following expressions (for $n\in
\Z$)
%
\begin{equation}\label{eqbesselI}
I_n(2t) =\frac{1}{2\pi\I}\oint_{S^1}\frac
{dz}{z}e^{t(z+z^{-1})}z^{\pm
n} =\sum^{\infty}_{k=0}\frac{t^k}{k!}\frac{t^{k+|n|}}{(k+|n|)!}
\end{equation}
with $S^1=\{z\in\C| |z|=1\}$.

Consider now $n=2m+1$ ($m\in\N$) continuous time random walks starting
from $-m,-m+1,\ldots,m-1,m$ at time $\tau=-t$, returning at the
starting positions at time $\tau=t$, and \textit{conditioned not to
intersect}. Denote by $x_k(\tau)$ the position at time $\tau$ of the
random walk which started from $m+1-k$ (i.e., the $k$th highest one),
see Figure~\ref{figTrajectories} for an illustration with $m=2$.

The probability at time $\tau=0$ is easily obtained by the
Karlin--McGregor formula~\cite{KM59}, namely
%
\begin{eqnarray}\label{1}
&& \Pb\Biggl(\bigcap_{k=1}^{2m+1}\{x_k(0)=y_k\}\bigg| \bigcap
_{k=1}^{2m+1}\{x_k(t)=x_k(-t)=m+1-k\}\Biggr)\nonumber\\
&&\qquad=\mathrm{const}\times\det[p_t(m+1-i,y_j) ]_{1\leq i,j\leq
2m+1}\\
&&\qquad\quad{}\times\det[p_t(y_i,m+1-j) ]_{1\leq i,j\leq2m+1}\nonumber\\
&&\qquad=\mathrm{const}\times(\det[I_{y_i+j-1-m}(2t) ]_{1\leq
i,j\leq
2m+1})^2.\nonumber
\end{eqnarray}
It is well known by~\cite{Bor98} that the process above
%
\begin{equation}
\mathbf{x}(\tau):=\{x_k(\tau),1\leq k\leq2m+1\},\qquad
\tau\in[-t,t],
\end{equation}
with a measure of this form, gives rise to a determinantal point
process (random point measure)
%
\begin{equation}\label{eta}
\eta=\sum_{k=1}^{2m+1}\delta_{x_k(0)}
\end{equation}
with a certain kernel $\BK_m(x,y)$, to be computed in Theorem
\ref{thmKernelPart}.

Instead of the process $\mathbf{x}(\tau)$,
we shall analyze its complementary (dual) process, which we denote by
%
\begin{equation}
\tilde{\mathbf{x}}(\tau)=\{\tilde x_k(\tau),k\in\Z\setminus[1,
2m+1]\}, \qquad\tau\in[-t,t].
\end{equation}
If $\mathbf{x}$ denotes the trajectories of the $2m+1$ particles, then
let $\tilde{\mathbf{x}}$ denote the trajectories of the holes, obtained
by the particle-hole transformation; see Figures~\ref{figTrajectoriesB}
and~\ref{figTrajectoriesC}.

The reason for starting with the process $\mathbf{x}$ is that the
Karlin--McGregor formula applies to a finite number of paths, while
$\tilde{\mathbf{x}}$\vadjust{\goodbreak} has an infinite number of paths. By the
complementation principle in the Appendix of~\cite{BOO00}, the dual
point process at $\tau=0$,
%
\begin{equation}\label{tilde-eta}
\tilde\eta=\sum_{k}\delta_{\tilde x_k(0)},
\end{equation}
is also determinantal with correlation kernel
%
\begin{equation}\label{kernel1b}
\widetilde\BK_m(x,y)=\delta_{x,y}-\BK_m(x,y).
\end{equation}
First of all, we compute the kernel $\BK_m(x,y)$ in a form which will
be suitable for asymptotic analysis.
%
\begin{theorem}\label{thmKernelPart}
The point processes $\eta$ and $\tilde\eta$, defined in (\ref{eta})
and (\ref{tilde-eta}), are determinantal with correlation kernel $\BK
_m$ and $\widetilde\BK_m$ given below. Thus, for any finite subset
$E\subset\Z$, the gap probability of $E$ is given by
%
\begin{eqnarray}
\Pb\bigl(\eta(\Id_E)=0\bigr)&=&\det(\Id-\BK_{m})_{\ell
^2(E)},\nonumber\\[-8pt]\\[-8pt]
\Pb \bigl(\tilde\eta(\Id_E)=0\bigr)&=&\det(\Id- \widetilde\BK
_{m})_{\ell^2(E)}\nonumber
\end{eqnarray}
with kernels $\BK_m(x,y)$ and $\widetilde\BK_m(x,y)$,
invariant\footnote
{As it should from the geometry of the problem! The involution
interchanges the two double integrals in (\ref{eqThmKernelPart}), as is
seen from renaming $w\leftrightarrow z $ in the second double integral;
also the third term, the single integral, only depends on $|x-y|$, as
is seen from $z\to z^{-1}$.} under the involution
$(x,y)\leftrightarrow(-y,-x)$, namely
%
\begin{eqnarray}\label{eqThmKernelPart}
\BK_m(x,y)&=&\frac{V_m}{(2\pi\I)^2} \oint_{\Gamma_0}dz\oint
_{\Gamma
_{0,z}}dw \frac{e^{t(z-z^{-1})}}{e^{t(w-w^{-1})}}\frac
{w^{y-m-1}}{z^{x-m}}\nonumber\\
&&\hspace*{97pt}{}\times\frac{ H_{2m+1}(w) H_{2m+1}(z^{-1})}{z-w}\nonumber
\\
&&{} + \frac{V_m}{(2\pi\I)^2} \oint_{\Gamma_0}dw\oint_{\Gamma_{0,w}}dz
\frac{e^{t(w-w^{-1})}}{e^{t(z-z^{-1})}}\frac{w^{y+m}}{z^{x+m+1}}\\
&&\hspace*{111pt}{}\times\frac{H_{2m+1}(z)H_{2m+1}(w^{-1})}{w-z}\nonumber\\
&&{} +\frac{V_m}{2\pi\I}\oint_{\Gamma_0}dz\frac{1}{z^{x-y+1}}
H_{2m+1}(z^{-1})H_{2m+1}(z)\nonumber
\end{eqnarray}
and
%
\begin{eqnarray}
\label{eqThmKernelPartTilde}
\widetilde\BK_m(x,y)&=&
-\frac{V_m}{(2\pi\I)^2} \oint_{\Gamma_0}dz\oint_{\Gamma_{0,z}}dw
\frac{e^{t(z-z^{-1})}}{e^{t(w-w^{-1})}}\frac{w^{y-m-1}}{z^{x-m}}\nonumber\\
&&\hspace*{105pt}{}\times\frac
{H_{2m+1}(w) H_{2m+1}(z^{-1})}{z-w}\nonumber\\
&&{} - \frac{V_m}{(2\pi\I)^2} \oint_{\Gamma_0}dw\oint_{\Gamma_{0,w}}dz
\frac{e^{t(w-w^{-1})}}{e^{t(z-z^{-1})}}
\frac{w^{y+m}}{z^{x+m+1}}\\
&&\hspace*{112pt}{}\times\frac{H_{2m+1}(z)H_{2m+1}(w^{-1})}{w-z}\nonumber\\
&&{} -\Id_{[x\neq y]}\frac{V_m}{2\pi\I}\oint_{\Gamma_0}dz\frac{1}{z^{x-y+1}}
H_{2m+1}(z^{-1})H_{2m+1}(z),\nonumber
\end{eqnarray}
where $V_m=1/(H_{2m+1}(0)H_{2m+2}(0))$. The function $H_n$ itself is a
Fredholm determinant on $\ell^2(\{n,n+1,\ldots\})$
%
\begin{equation}\label{eqHn}
H_n(z^{-1}):=\det\bigl(\Id-K(z^{-1})\bigr)_{\ell^2(\{n,n+1,\ldots\})}
\end{equation}
of the kernel
%
\begin{equation}\label{KernelInZvariable}\quad
K(z^{-1})_{k,\ell}:=\frac{(-1)^{k+\ell}}{(2\pi\I)^2}\oint_{\Gamma
_0}du \oint_{\Gamma_{0,u}}dv\frac{u^{\ell}}{v^{k+1}} \frac
{1}{v-u}\frac
{u-z}{v-z}\frac{e^{2t(u-u^{-1})}}{e^{2t(v-v^{-1})}},
\end{equation}
where $\Gamma_{0}$ is any anticlockwise simple loop enclosing $0$ and
similarly $\Gamma_{0,u}$ encircles~$0$ and $u$ only (hence not $z$).
\end{theorem}
\begin{pf}
\textit{Step} 1: \textit{Computing the kernel $\BK_m(x,y)$ for the inliers
$\mathbf{x}(\tau)$ at $\tau=0$, from the Karlin--McGregor
formula} (\ref{1}):
It is well known by~\cite{Bor98} that a measure of the form (\ref{1})
implies that the point process (random point measure)
$ \eta$, as in~(\ref{eta}),
is determinantal with correlation kernel
%
\begin{equation}\label{kernel1}
\BK_m(x,y)=\sum^{2m+1}_{k,\ell=1}\varphi_k(y)[A^{-1}]_{k,\ell}
\varphi
_{\ell}(x),\qquad x,y\in\Z,
\end{equation}
where
%
\begin{equation}\label{eqPhik}
\varphi_k(x)=I_{x+k-1-m}(2t),
\end{equation}
and $A$ is the $(2m+1)\times(2m+1)$ matrix with entries
%
\begin{equation}\label{eqMatrixA}
[A]_{k,\ell}\equiv\la\varphi_k,\varphi_{\ell}\ra=\sum_{x\in\Z
}\varphi
_k(x)\varphi_{\ell}(x).
\end{equation}
Using (\ref{eqbesselI}) and (\ref{eqPhik}), the entries of the
$(2m+1)\times(2m+1)$ matrix $A$, as in (\ref{eqMatrixA}), are given by
%
\begin{eqnarray}
A_{k,\ell}&=&\sum_{x\in\Z}\varphi_k(x)\varphi_\ell(x) =\sum
_{x\geq
0}\varphi_k(x)\varphi_\ell(x) + \sum_{x<0}\varphi_k(x)\varphi
_\ell
(x)\nonumber\\
&=&\sum_{x\geq0} \frac{1}{(2\pi\I)^2}\oint_{\Gamma_0}dz\oint
_{\Gamma
_0}dw \frac{e^{t(z+z^{-1})}e^{t(w+w^{-1})}}{z^{k} w^{\ell}}\frac
{1}{(zw)^{x-m}}\\
&&{}+\sum_{x<0} \frac{1}{(2\pi\I)^2}\oint_{\Gamma_0}dz\oint_{\Gamma_0}dw
\frac{e^{t(z+z^{-1})}e^{t(w+w^{-1})}}{z^{k}
w^{\ell}}\frac{1}{(zw)^{x-m}}.\nonumber
\end{eqnarray}
In the first integrals, we deform the paths to $|z|=1$ and $|w|=R>1$.
Then we take the sum inside the integrals and use $\sum_{x\geq
0}(zw)^{-x}=wz/(wz-1)$. Similarly, in the second integrals, we deform
the paths as $|z|=1$ and $|w|=1/R<1$ and use $\sum
_{x<0}(zw)^{-x}=-wz/(wz-1)$. This leads to
%
\begin{eqnarray}\label{319}
A_{k,\ell}&=&\frac{1}{(2\pi\I)^2}\oint_{|z|=1}dz\oint_{|w|=R}dw
\frac
{e^{t(z+z^{-1})}e^{t(w+w^{-1})}}{z^{k-m} w^{\ell-m}}\frac
{wz}{wz-1}\nonumber\\
&&{}-\frac{1}{(2\pi\I)^2}\oint_{|z|=1}dz\oint_{|w|=1/R}dw \frac
{e^{t(z+z^{-1})}e^{t(w+w^{-1})}}{z^{k-m} w^{\ell-m}}\frac{wz}{wz-1}\\
&=&\frac{1}{2\pi\I}\oint_{|z|=1}dz
\frac{e^{2t(z+z^{-1})}}{z^{k-\ell+1}}=I_{k-\ell}(4t),\nonumber
\end{eqnarray}
since for any value of $z$, the two integrals differ only by the
residue\footnote{This residue argument will reappear later in (\ref
{interchange}).}
at $w=1/z$. However, doing the asymptotics of the kernel $\BK_m(x,y)$
with this choice of basis and thus with this $A^{-1}$ seems to be hopeless.

\textit{Step} 2: \textit{Changing the basis $\varphi_k\mapsto\psi_k$,
such that $A\mapsto\Id$ in the kernel\break $\BK_m(x,y)$, that is, so that
$\BK_m(x,y)=\sum_{k=1}^{2m+1} \psi_k(x)\psi_k(y)$.}
Replace the basis  $(\varphi_k(x))_{k=1,\ldots,2m+1}$ with an
orthonormal basis $(\psi_k(x))_{k=1,\ldots,2m+1}$ with respect to the
$\ell^2(\Z)$ scalar product $\la{\,,\,}\ra$ used in (\ref{eqMatrixA})
[generating the same vector space, i.e., $\det(\varphi_k(x_j))_{1\leq
k,j\leq n}= \mathrm{const}\times\det(\psi_k(x_j))_{1\leq k,j\leq n}$ so
that the measure (\ref{1}) has the same form, but with $A=\Id$]. More
precisely, we shall search for polynomials $P_k$ of degree $k$ such
that, upon defining $d\rho_t(z):=\frac{dz}{2\pi iz}e^{t(z+z^{-1})}$,
%
\begin{eqnarray}\label{psi}
\psi_k(x)&=&\oint_{S^1}\frac{d\rho
_t(z)}{z^{x-m}}P_{k-1}(z^{-1})\nonumber\\[-8pt]\\[-8pt]
&=&\oint
_{S^1}d\rho_t(w)w^{x-m}P_{k-1}(w),\qquad 1\leq k\leq2m+1,\nonumber
\end{eqnarray}
satisfies, using the same argument as in (\ref{319}),
%
\begin{eqnarray}\label{in-prod2}\quad
\delta_{k,l} &=& \la\psi_k,\psi_{\ell}\ra\nonumber\\
&=&\sum_{x\in\Z}\oint
_{\Gamma
_0}d\rho_t(z)\oint_{\Gamma_0}d\rho_t(w) (zw)^{x-m}
P_{k-1}(z)P_{\ell
-1}(w)\nonumber\\[-8pt]\\[-8pt]
&=& \oint_{S^1}d\rho_{2t}(z) P_{k-1}(z)P_{\ell-1}(z^{-1})\nonumber\\
&=:&\lla
P_{k-1}, P_{\ell-1} \rra,
\nonumber
\end{eqnarray}
thus defining a new inner-product $ \lla{\,,\,}\rra$ on the
circle $S^1=\{z\in\C | |z|=1\}$. So it suffices to find an
orthonormal\vadjust{\goodbreak} basis of polynomials on the circle for the weight $d\rho
_{2t}(z)$. A classical expression for the polynomial $P_k(z)$ is (see,
e.g.,~\cite{Sze67})
%
\begin{equation}
P_k(z)=\frac{1}{\sqrt{\det m_k\cdot\det m_{k+1}}}
\det\pmatrix{
&1\cr
[ \mu_{i,j} ]\hspace*{-4pt}\displaystyle \mathop{\mathop{\hphantom{0}}_{0\leq i\leq k}}_{0\leq j
\leq k-1}
&z\cr
& \vdots\vspace*{2pt}\cr
&z^k},
\end{equation}
where $m_k=[\mu_{i,j}]_{0\leq i,j\leq k-1}$ and
%
\begin{equation}
\mu_{i,j}:= \lla z^i, z^j \rra=\oint_{S^1}d\rho
_{2t}(z)z^{i-j}=I_{i-j}(4t).
\end{equation}
Hence the $P_k(z)$ are polynomials of $z$ with real coefficients.
Orthonormal polynomials on the circle satisfy a
Christoffel--Darboux-type formula, due to Szeg\H{o}; see~\cite{Sim04}.
Namely, with the notation $P^*_n(z)=z^n \overline{P(\bar
z^{-1})}$ and further using the reality of the coefficients, one
obtains for $z,w\in S^1$,
%
\begin{eqnarray}\label{Darboux}
\sum^{n-1}_{\ell=0} {P_{\ell}(z^{-1})}P_{\ell}(w)&=&
\sum^{n-1}_{\ell=0}\overline{P_{\ell}(z)}P_{\ell}(w)\nonumber\\
&=&\frac
{\overline
{P^*_n(z)}P^*_n(w)-\overline{P_n(z)}P_n(w)}{1-\bar z
w}\nonumber\\[-8pt]\\[-8pt]
&=&\frac{\overline{z^n\overline{P_n(\bar z^{-1})}}w^n\overline
{P_n(\bar
w^{-1})}-\overline{P_n(z)}P_n(w)}{1-w/z}\nonumber\\
&=&\frac{z^{-n}P_n(z)w^nP_n(w^{-1})-P_n(z^{-1})P_n(w)}{1-w/z}.
\nonumber
\end{eqnarray}

\textit{Step} 3: \textit{Expressing the polynomials $P_n(z)$ in terms of
the Fredholm determinant $H_n(z^{-1})$}, \textit{as in} (\ref{eqHn}). In order
to do this, one first introduces the bilinear form
%
\begin{equation}\label{in-prod3}
\la f,g\ra_{{\mathbf t},{\mathbf s}}:=\frac{1}{2\pi\I}\oint_{S^1}\frac
{du}{u}f(u)g(u^{-1})e^{\sum_{j=1}^{\infty}(t_ju^j-s_ju^{-j})},
\end{equation}
upon setting ${\mathbf t}:=(t_1,t_2,\ldots)\in\C^{\infty}$ and ${\mathbf s}:=(s_1,s_2,\ldots)\in\C^{\infty}$.
It was shown in \mbox{\cite{AvM97,AvM00}} (see also the lecture notes \cite
{vM10}) that the functions\footnote{For $\alpha\in\C$, one defines
$[\alpha]=(\alpha,\frac{\alpha^2}{2},\frac{\alpha
^3}{3},\ldots
)\in\C^{\infty}$.}
%
\begin{eqnarray}\label{tau1}
p_n^{(1)}({\mathbf t},{\mathbf s};z)&:=&z^n\frac{\tau_n({\mathbf t}-[z^{-1}],{\mathbf s})}{\sqrt{\tau_n({\mathbf t},{\mathbf s})\tau_{n+1}({\mathbf t},{\mathbf
s})}},
\nonumber\\[-8pt]\\[-8pt]
p_n^{(2)}({\mathbf t},{\mathbf s};z)&:=&z^n\frac{\tau_n({\mathbf t},{\mathbf s}+[z^{-1}])}{\sqrt{\tau_n({\mathbf t},{\mathbf s})\tau_{n+1}({\mathbf t},{\mathbf s})}}
\nonumber
\end{eqnarray}
are bi-orthonormal polynomials with regard to the bilinear form (\ref
{in-prod3}). In the formulas above, the $\tau_n({\mathbf t},{\mathbf s})$ are
2-Toda $\tau$-functions and are defined as Toeplitz determinants, which
are also expressible as a Fredholm determinant of the kernel (\ref
{BO-kernel}) below, using the Borodin--Okounkov identity~\cite{BO99}.
We obtain
%
\begin{eqnarray}\label{327}\hspace*{5pt}
\tau_n({\mathbf t},{\mathbf s})&:=&\det\biggl[\frac{1}{2\pi\I}\oint
_{S^1}\frac
{du}{u} u^{k-\ell}
e^{\sum_{j=1}^\infty(t_ju^j-s_ju^{-j})}\biggr]_{1\leq k,\ell\leq
n}\nonumber\\[-8pt]\\[-8pt]
&=&Z({\mathbf t},{\mathbf s})\det\bigl(\Id-\mathbf{K}({\mathbf t},{\mathbf s})
\bigr)_{\ell^2(\{n,n+1,\ldots\})},\qquad
Z({\mathbf t},{\mathbf s}) :=e^{-\sum_{j=1}^{\infty}j t_js_j},\hspace*{-22pt}
\nonumber
\end{eqnarray}
where the kernel $\mathbf{K}({\mathbf t},{\mathbf s})$ is given by
%
\begin{equation} \label{BO-kernel}\quad
\mathbf{K}({\mathbf t},{\mathbf s})_{k,\ell}:=\frac{1}{(2\pi\I)^2} \oint
_{\Gamma_0}du\oint_{\Gamma_{0,u}}dv\frac{u^{\ell}}{v^{k+1}}\frac
{1}{v-u} \frac{e^{\sum_{j=1}^{\infty}(t_jv^{-j}+s_jv^j)}}{e^{\sum
_{j=1}^{\infty}(t_ju^{-j}+s_ju^j)}}.
\end{equation}
The coefficients $t_j, s_j$ have to be such that the expression $\sum
_{j=1}^\infty(t_ju^j-s_ju^{-j})$ appearing in the exponent of (\ref
{327}) is analytic in the annulus $\rho<|z|<\rho^{-1}$ for $0<\rho<1$.
Then, the Borodin--Okounkov identity (\ref{327}) gives a kernel
$\mathbf{K}({\mathbf t},{\mathbf s})$, with contours given by
$|u|=|v|^{-1}=\rho
'$, with $0<\rho<\rho'<1$. Assume, using Cauchy's theorem, that the
contours may be deformed to any circle of radius $0<\rho<1$.
Then, using $\sum_{j=1}^\infty(v/z)^j/j=-\ln(1-v/z)$ (for $|v/z|<1$),
we obtain
%
\begin{eqnarray}
&&
\mathbf{K}({\mathbf t},{\mathbf s}+[z^{-1}])_{k,\ell}\nonumber\\[-8pt]\\[-8pt]
&&\qquad=\frac{1}{(2\pi\I)^2}
\oint_{\Gamma_0}du\oint_{\Gamma_{0,u}}dv\frac{u^{\ell
}}{v^{k+1}}\frac
{1}{v-u}\frac{1-{u}/{z}}{1-{v}/{z}} \frac{e^{\sum
_{j=1}^{\infty
}(t_jv^{-j}+s_jv^j)}}{e^{\sum_{j=1}^{\infty}(t_ju^{-j}+s_ju^j)}}\nonumber
\end{eqnarray}
and
%
\begin{equation}
Z({\mathbf t},{\mathbf s}+[z^{-1}])=e^{-\sum_{j=1}^{\infty}j t_j(s_j
+z^{-j}/j)}=Z({\mathbf t},{\mathbf s}) e^{-\sum_{j=1}^{\infty}t_jz^{-j}}.
\end{equation}

We now specialize all this to the locus
%
\begin{equation}\label{locus}
{\cal L} =\{
{\mathbf t}=(2t,0,0,\ldots),
{\mathbf s}=(-2t,0,0,\ldots)
\}.
\end{equation}
On this locus, one checks that
$Z({\mathbf t},{\mathbf s})|_{\cal L} =e^{4t^2}$,
that
$\mathbf{K}({\mathbf t},{\mathbf s})$ and its translation, restricted to the
locus ${\cal L}$, are closely related to the kernel $K(z^{-1})$ defined
in (\ref{KernelInZvariable})\footnote{With $A\stackrel{\mathrm{conj}}{=}B$
we mean that the two kernels $A$ and $B$ are conjugate kernels. In the
present case, the conjugation factor is $(-1)^{k-\ell}$. We remind the
reader that two conjugate kernels define the same determinantal point process.}
%
\begin{eqnarray}\label{Klocus}
\mathbf{K}({\mathbf t},{\mathbf s})|_{\cal
L}&\stackrel{\mathrm{conj}}{=}&
K(0),\nonumber\\[-8pt]\\[-8pt]
\mathbf{K}({\mathbf t},{\mathbf s}+[z^{-1}])|_{\cal L}&\stackrel{\mathrm{conj}}{=}& K(z^{-1}),
\nonumber
\end{eqnarray}
and that the restriction of $\tau_n({\mathbf t},{\mathbf s})$ to ${\cal L}$
leads to the Fredholm determinant $H_n(z^{-1})$ as defined in (\ref{eqHn}),
%
\begin{eqnarray}\label{tau2}\quad
\tau_n({\mathbf t},{\mathbf s})|_{\cal L}&=& H_n(0) Z({\mathbf t},{\mathbf s})
|_{\cal L}=e^{4t^2}H_n(0),\nonumber\\
\tau_n({\mathbf t},{\mathbf s}+[z^{-1}])|_{\cal L}&=& H_n(z^{-1}) e^{-2t/z}
Z({\mathbf t},{\mathbf s})|_{\cal L}\\
&=&H_n(z^{-1}) e^{4t^2-2t/z}.
\nonumber
\end{eqnarray}
Moreover, the bilinear form $\la f,g\ra_{{\mathbf t},{\mathbf s}}$ defined
in (\ref{in-prod3}) reduces to the inner-product $\lla f,g \rra$ defined in (\ref{in-prod2}),
%
\begin{equation}
\la f,g\ra_{{\mathbf t},{\mathbf s}}|_{{\cal L}}=\frac{1}{2\pi\I}\oint
_{S^1}\frac{du}{u}e^{2t(u+u^{-1})}f(u)g(u^{-1}) = \lla f,g \rra.
\end{equation}
It follows that the bi-orthogonal functions for $\la f(z),g(z)\ra
_{{\mathbf t},{\mathbf s}}$, restricted to the locus~${\cal L}$, coincide with the
orthonormal polynomials defined by (\ref{in-prod2}), which by (\ref
{tau1}), (\ref{tau2}) and (\ref{Klocus}) yields
%
\begin{equation}\label{eqPolyn}
P_n(z)=p_n^{(1)}({\mathbf t},{\mathbf s};z)|_{\cal L}=p_n^{(2)}({\mathbf t},{\mathbf s};z)|_{\cal L} = \frac{z^n e^{-2t/z} H_n(z^{-1})}{\sqrt
{H_n(0) H_{n+1}(0)}},
\end{equation}
where
%
\begin{equation}
H_n(z^{-1})=\det\bigl(\Id- K(z^{-1})\bigr)_{\ell^2(\{n,n+1,\ldots\})}
\end{equation}
with the kernel $K(z^{-1})$ as in (\ref{KernelInZvariable}); this
follows from (\ref{Klocus}).
The fact that the $p_n^{(1)}$ and $p_n^{(2)}$ are equal on the locus
${\cal L}$ is a consequence of the symmetry of the inner-product
$\lla{\,,\,} \rra$, as in (\ref{in-prod2}). However, one easily
verifies it with the above formulas. The equivalence of the Fredholm
determinant parts is evident only after the change of variable $v\to
1/\tilde u$ and $u\to1/\tilde v$. Then, the kernel obtained for
$p_n^{(1)}$ is the transpose of the one for $p_n^{(2)}$.

\textit{Step} 4: \textit{Expressing the kernel $\BK_m(x,y)$ as} (\ref
{eqThmKernelPart}). Using this new basis $\psi_k$, as in (\ref{psi}),
and using the Christoffel--Darboux formula (\ref{Darboux}),
the kernel $\BK_m(x,y)$ becomes, by Step 2 (recall that $n=2m+1$),
%
\begin{eqnarray}\label{eq236}\quad
\BK_m(x,y)&=&\sum_{k=1}^{n} \psi_k(x)\psi_k(y)\nonumber\\
&\stackrel{*}{=}&\oint
_{S^1}d\rho_t(z)\oint_{S^1}d\rho_t(w) \frac{w^{y-m}}{z^{x-m}} \sum
_{k=0}^{n-1}P_k(z^{-1})P_k(w)\nonumber\\[-8pt]\\[-8pt]
&=&\oint_{\Gamma_0} d\rho_t(z)\oint_{\Gamma
_{0,z}} d\rho_t(w) \frac{w^{y-m}}{z^{x-m-1}}\frac{1}{z-w}
\biggl(\biggl(\frac{w}{z}\biggr)^n
P_n(z)P_n(w^{-1})\nonumber\\
&&\hspace*{188pt}{}-P_n(z^{-1})P_n(w)\biggr).
\nonumber
\end{eqnarray}
Note that the $w$-integrand in the double integral $\stackrel{*}{=}$
has no pole at $w=z$, enabling one to deform the $w$-contour so as to
include $z\in S^1$; this has the advantage that the double integral of
the difference can be written as the difference of two double
integrals, each of them being finite.

Inserting (\ref{eqPolyn}) into (\ref{eq236}) and setting
$V_m=1/(H_{2m+1}(0)H_{2m+2}(0))$ we get
%
\begin{eqnarray}\quad
&&
\BK_m(x,y)\nonumber\\
&&\qquad=\frac{V_m}{(2\pi\I)^2} \oint_{\Gamma_0}dz\oint
_{\Gamma
_{0,z}}dw \frac{e^{t(z-z^{-1})}}{e^{t(w-w^{-1})}}\frac
{w^{y-m-1}}{z^{x-m}}\frac{ H_{2m+1}(w)
H_{2m+1}(z^{-1})}{z-w}\\
&&\qquad\quad{} -\frac{V_m}{(2\pi\I)^2} \oint_{\Gamma_0}dz\oint_{\Gamma_{0,z}}dw
\frac{e^{t(w-w^{-1})}}{e^{t(z-z^{-1})}}
\frac{w^{y+m}}{z^{x+m+1}}\frac{H_{2m+1}(z)H_{2m+1}(w^{-1})}{z-w}.
\nonumber
\end{eqnarray}
The expression in (\ref{eqThmKernelPart}) is finally obtained by
noticing that
%
\begin{eqnarray}\label{interchange}
&&\frac{1}{(2\pi\I)^2}\oint_{\Gamma_0}dz\oint_{\Gamma
_{0,z}} dw \frac{F(z,w)}{w-z}\nonumber\\[-8pt]\\[-8pt]
&&\qquad=
\frac{1}{(2\pi\I)^2}\oint_{\Gamma_0}dw\oint_{\Gamma
_{0,w}} dz \frac{F(z,w)}{w-z}+\oint_{\Gamma_0} \frac{dz}{2\pi\I}
F(z,z),\nonumber
\end{eqnarray}
proving formula (\ref{eqThmKernelPart}).

\textit{Step} 5: \textit{Expressing the dual kernel $\widetilde\BK_m(x,y)$
as} (\ref{eqThmKernelPartTilde}). First of all, by (\ref{eqPolyn}),
we have
%
\begin{equation}
H_n(z^{-1})=P_n(z) e^{2t/z} z^{-n} \sqrt{H_n(0) H_{n+1}(0)}.
\end{equation}
Thus (with $n=2m+1$), the last term of (\ref{eqThmKernelPart}) is
given by
%
\begin{eqnarray}\label{eq240}
&&
\frac{V_m}{2\pi\I}\oint_{\Gamma_0}\frac{dz}{z^{x-y+1}}
H_{2m+1}(z^{-1})H_{2m+1}(z) \nonumber\\[-8pt]\\[-8pt]
&&\qquad= \frac{1}{2\pi\I}\oint_{S^1} \frac
{dz}{z^{x-y+1}} e^{2t(z+z^{-1})} P_n(z)P_n(z^{-1}).\nonumber
\end{eqnarray}
In particular, at $x=y$ we have
%
\begin{equation}
(\ref{eq240})|_{x=y}= \lla P_n, P_n \rra=1
\end{equation}
and thus
%
\begin{eqnarray}
&&\frac{V_m}{2\pi\I}\oint_{\Gamma_0}\frac{dz}{z^{x-y+1}}
H_{2m+1}(z^{-1})H_{2m+1}(z)\nonumber\\[-8pt]\\[-8pt]
&&\qquad=\delta_{x,y}+(1-\delta_{x,y})\frac{V_m}{2\pi\I}\oint_{\Gamma
_0}dz\frac{1}{z^{x-y+1}} H_{2m+1}(z^{-1})H_{2m+1}(z).
\nonumber
\end{eqnarray}
So, $\widetilde\BK_m(x,y)=\delta_{x,y}-\BK_m(x,y)=\widetilde\BK
^{\mathrm{ext}}_m(0,x_1;0,x_2)$ of (\ref{eqThmKernelHolesIntroExt}), thus
establishing Theorem~\ref{thmKernelPart}. This also ends the proof of
Theorem~\ref{MainTheorem} for $t_1=t_2=0$.
\end{pf}

\section{Reshaping, motivation and Bessel representation}\label{shape}
In this section we first reshape the kernel (\ref
{eqThmKernelHolesIntroExt}) of Theorem~\ref{MainTheorem} for
$t_1=t_2=0$, to make it adequate for asymptotic analysis. Second, we
rewrite all the terms using Bessel functions and the Bessel kernel.
This will allow us to use known asymptotics for Bessel functions and
kernel, without the need for new asymptotic analysis.

\subsection{Reshaping}\label{shape1}

Note that the kernel $K(z^{-1})$, defined in (\ref{KernelInZvariable}),
with $|u|<|v|<|z|$, namely
%
\begin{equation}\label{KernelInZvariable1}
K(z^{-1})_{k,\ell}:=\frac{(-1)^{k+\ell}}{(2\pi\I)^2}\oint_{\Gamma
_0}du \oint_{\Gamma_{0,u}}dv\frac{u^{\ell}}{v^{k+1}} \frac
{1}{v-u}\frac
{u-z}{v-z}\frac{e^{2t(u-u^{-1})}}{e^{2t(v-v^{-1})}}
\end{equation}
is a rank-one perturbation
%
\begin{equation}
K(z^{-1})_{k,\ell}=K(0)_{k,\ell} + h_k(z^{-1}) g_\ell
\end{equation}
of the symmetric\footnote{As is seen by replacing $u\mapsto
1/u, v\mapsto1/v$.} kernel
%
\begin{equation}\label{KernelZInfinity}
K(0)_{k,\ell} =\frac{(-1)^{k+\ell}}{(2\pi\I)^2}\oint_{\Gamma_0}du
\oint_{\Gamma_{0,u}}dv\frac{u^{\ell}}{v^{k+1}} \frac{1}{v-u}\frac
{e^{2t(u-u^{-1})}}{e^{2t(v-v^{-1})}},
\end{equation}
upon using the identity
%
\begin{equation}
\frac{1}{v-u}\frac{u-z}{v-z}=\frac{1}{v-u}-\frac{1}{v-z},
\end{equation}
where (remember $|v|<|z|$ in the first integration below)
%
\begin{eqnarray} \label{eqFk}
h_k(z^{-1}) &=& \frac{-1}{2\pi\I}\oint_{\Gamma_{0}}\frac
{dv}{(-v)^{k+1}}\frac{e^{-2t(v-v^{-1})}}{v-z} \nonumber\\
&=& \frac{-1}{2\pi\I}\oint_{\Gamma_{0,z}}\frac
{dv}{(-v)^{k+1}}\frac
{e^{-2t(v-v^{-1})}}{v-z}+\frac{e^{-2t(z-z^{-1})}}{(-z)^{k+1}}
\\
&=:&\bar{h}_k(z^{-1})+ \frac
{{e^{-2t(z-z^{-1})}}}{(-z)^{k+1}}\nonumber
\end{eqnarray}
and
%
\begin{equation}\label{eqGl}
g_\ell=\frac{-1}{2\pi\I}\oint_{\Gamma_0} du (-u)^\ell e^{2t(u-u^{-1})}.
\end{equation}
In (\ref{eqFk}), one has replaced the integration about a small circle
around $0$ by an integration about a contour containing $z$ as well;
this is done in order to be able to expand, later on, $1/(v-z)$ in a
power series in $z/v$.\vadjust{\goodbreak} Therefore we can rewrite the Fredholm
determinant $H_n(z^{-1})$ of $K(z^{-1})$ as
%
\begin{equation}\label{eqHnB}
H_n(z^{-1})=H_n(0)\bigl(1-R_n(z^{-1})\bigr),
\end{equation}
where\footnote{For $a=(a_k)_{k\in\Z}$ and $b=(b_k)_{k\in\Z}$, the
inner-product $\langle a,b \rangle:=\sum_{k\in\Z} a_k b_k$.}
%
\begin{eqnarray}\label{RQ}
R_n(z^{-1})&:=&\langle Q,\chi_n h(z^{-1})\rangle,\nonumber\\[-8pt]\\[-8pt]
Q_k&:=&\bigl(\bigl(\Id-\chi_n
K(0)\chi_n\bigr)^{-1}\chi_n g\bigr)_k\nonumber
\end{eqnarray}
and $\chi_n(k)=\Id_{[k\geq n]}$; here the symmetry of $K(0)$ is being used.
Accordingly $R_n(z^{-1})=\la Q,\chi_nh(z^{-1})\ra$, as in (\ref{RQ}),
decomposes as (recall that $n=2m+1$)
%
\begin{equation}
R_n(z^{-1})=S_n(z^{-1})+ \frac{e^{-2t(z-z^{-1})}}{(-z)^n}T_n(z^{-1})
\end{equation}
with
%
\begin{equation}\label{Rbar}
S_n(z^{-1})=\langle Q,\chi_n \bar{h}(z^{-1})\rangle,\qquad
T_n(z^{-1})= \sum_{k\geq1} \frac{Q_{n+k-1}}{(-z)^{k}}.
\end{equation}
We set for $x\in\Z$,
%
\begin{eqnarray}\label{eqABCD}
A(x)&:=&\frac{-1}{2\pi\I}\oint_{\Gamma_0} dz \frac
{e^{t(z-z^{-1})}}{(-z)^{x-m}}\bigl(1-S_n(z^{-1})\bigr),\nonumber\\
B(x)&:=&\frac{-1}{2\pi\I}\oint_{\Gamma_0} dz \frac
{e^{-t(z-z^{-1})}}{(-z)^{x+m+1}}T_n(z^{-1}),\nonumber\\
C_1(x)&:=&\frac{-1}{2\pi\I}\oint_{\Gamma_0} dz \frac{T_n
(z^{-1})T_n(z)}{(-z)^{x+1}},\\
C_2(x)&:=&\Id_{[x\neq0]}\frac{-1}{2\pi\I}\oint_{\Gamma_0} dz
\frac
{R_n(z^{-1})+R_n(z)-R_n(z^{-1})R_n(z)}{(-z)^{x+1}},\nonumber\\
C(x)&:=& 2C_1(x)+C_2(x).\nonumber
\end{eqnarray}
Remark that $C_1(x)=C_1(-x)$ and $C_2(x)=C_2(-x)$. Also introduce
functions $E_i(z,w)$, which also depend on $n=2m+1$,
%
\begin{eqnarray}\label{eqE1234}\quad
E_1(z,w)&:=&\frac{e^{t(z-z^{-1})}}{e^{t(w-w^{-1})}}\biggl(\frac
{z}{w}\biggr)^m\bigl(1-S_n(z^{-1})\bigr)\bigl(1-S_n(w)\bigr),\nonumber\\
E_2(z,w)&:=&-\frac{e^{t(z-z^{-1})}}{e^{-t(w-w^{-1})}}(-z)^m
(-w)^{m+1}\bigl(1-S_n(z^{-1})\bigr)T_n(w),\nonumber\\[-8pt]\\[-8pt]
E_3(z,w)&:=&-\frac
{e^{-t(z-z^{-1})}}{e^{t(w-w^{-1})}}(-z)^{-m-1}(-w)^{-m}
T_n(z^{-1})\bigl(1-S_n(w)\bigr),\nonumber\\
E_4(z,w)&:=&
-\frac{e^{t(z-z^{-1})}}{e^{t(w-w^{-1})}}\biggl(\frac{z}{w}
\biggr)^{m}T_{n}(z)T_{n}(w^{-1}).\nonumber
\end{eqnarray}
With these notations, the following statement holds.
%
\begin{proposition}\label{propDecomp}
The kernel $\widetilde\BK_m(x,y)$ in (\ref{eqThmKernelPartTilde}) has
the following expression:
%
\begin{eqnarray}\label{eq35c}\qquad
&&(-1)^{x-y}\frac{H_{n+1}(0)}{H_{n}(0)}\widetilde\BK_m(x,y)\nonumber\\
&&\qquad=C(x-y)\\
&&\qquad\quad{}+ \frac{1}{(2\pi\I)^2}\oint_{\Gamma_0}dz \oint_{\Gamma_{0,z}} dw
\frac{\sum_{i=1}^4 E_i(z,w)}{z-w}\biggl(\frac{(-w)^{y-1}}{(-z)^x}
+\frac{(-z)^{y }}{(-w)^{x+1}}\biggr)
\nonumber
\end{eqnarray}
as well as the Airy kernel-like expression:
%
\begin{eqnarray}\label{eqpropDecomp}
&&(-1)^{x-y}\frac{H_{n+1}(0)}{H_n(0)}\widetilde\BK
_m(x,y)\nonumber\\
&&\qquad=C(x-y)\nonumber\\
&&\qquad\quad{}+\sum_{c\geq0} \bigl(
A(x-c)A(y-c)+A(-x-c)A(-y-c)\nonumber\\[-8pt]\\[-8pt]
&&\hspace*{62.4pt}{}-A(x-c)B(y-c)-A(-x-c)B(-y-c)\nonumber\\
&&\hspace*{62.4pt}{}-B(x-c)A(y-c)-B(-x-c)A(-y-c)\bigr)\nonumber\\
&&\qquad\quad{}-\sum_{c<0} \bigl(B(x-c)B(y-c)+B(-x-c)B(-y-c)\bigr).
\nonumber
\end{eqnarray}
\end{proposition}
\begin{pf} Let us first prove (\ref{eq35c}). Consider the kernel
$\widetilde\BK_m(x,y)$ as in (\ref{eqThmKernelPartTilde}); one uses
$H_n(z^{-1})=H_n(0)(1-R_n(z^{-1}))$, as in (\ref{eqHnB}), and one
renames the integration variables $(w,z)\to(z,w)$ in the second double
integral, enabling us to combine the two double integrals. Then, taking
into account the prefactor,
%
\begin{eqnarray}\label{eq35a}\quad
&&(-1)^{x-y}\frac{H_{n+1}(0)}{H_{n}(0)}\widetilde\BK_m(x,y)\nonumber\\
&&\qquad= \frac
{\Id
_{[x\neq y]}}{2\pi\I}\oint_{\Gamma_0} \frac{dz}{(-z)^{x-y+1}}
\bigl(1-R_n(z^{-1})\bigr)\bigl(1-R_n(z)\bigr)\nonumber\\[-8pt]\\[-8pt]
&&\qquad\quad{}+\frac{1}{(2\pi\I)^2}\oint_{\Gamma_0}dz \oint_{\Gamma_{0,z}} dw
\frac
{e^{t(z-z^{-1})}}{e^{t(w-w^{-1})}}\biggl(\frac{z}{w} \biggr)^m
\biggl(\frac{(-w)^{y -1}}{(-z)^{x }}
+\frac{(-z)^{y }}{(-w)^{x +1}}\biggr)\nonumber\\
&&\hspace*{110pt}\qquad\quad{}\times\frac{(1-R_n(z^{-1}))(1-R_n(w))}{z-w}.
\nonumber
\end{eqnarray}
That the single integral above equals $C_2$, defined in (\ref{eqABCD}),
follows from the fact that the $-1$ term can be deleted, since $\frac
{1}{2\pi\I} \oint_{\Gamma_0} dz \,z^{y-x-1}=\delta_{x,y}$ and
$\delta_{x,y}\Id_{x\neq y}=0$. Multiply out
$(1-R_n(z^{-1}))(1-R_n(w))$, use the expression (\ref{Rbar}) of $R_n$
and the functions $E_i$'s defined in (\ref{eqE1234}) with the
result
%
\begin{eqnarray}\label{eq35}\quad
\mbox{(\ref{eq35a})}&=&\frac{1}{(2\pi\I)^2}\oint_{\Gamma_0}dz \oint
_{\Gamma
_{0,z}} dw \frac{1}{z-w}\biggl({\frac{(-w)^{y-1}} {(-z)^x} } + {\frac
{(-z)^y}{(-w)^{x+1}}} \biggr)\nonumber\\
&&\hspace*{46pt}{}\times\biggl(E_1(z,w)+E_2(z,w)
+E_3(z,w)-\frac{w}{z} E_4(w,z)
\biggr)\\
&&{}+C_2(x-y).
\nonumber
\end{eqnarray}
The double integral, involving the last expression in brackets, is not
in a usable form, in view of the saddle point method and the topology
of the contours (see the discussion after the proof). Namely, the \textit{integrations have to be interchanged}, at the expense of a residue
term, as is given by the general formula (\ref{interchange}). So, using
this formula, and further renaming $z\leftrightarrow w$, the double
integral with $E_4$ becomes
%
\begin{eqnarray}\label{eq35b}
&&\frac{1}{(2\pi\I)^2} \oint_{\Gamma_{0 }}dz\oint_{\Gamma_{0 ,z}}dw
\frac{1}{z-w} \biggl({\frac{(-w)^{y-1}} {(-z)^x} } + {\frac
{(-z)^y}{(-w)^{x+1}}} \biggr) E_4(z,w)\nonumber\\[-8pt]\\[-8pt]
&&\qquad{}+2C_1(x-y),\nonumber
\end{eqnarray}
where $C_1(x)$ is defined in (\ref{eqABCD}). So, taking equation (\ref
{eq35}) and (\ref{eq35b}) into account, we find that formula (\ref
{eq35c}) for the kernel $\widetilde\BK_m(x,y)$ holds.

Next we prove (\ref{eqpropDecomp}). The first observation is that the
kernel (\ref{eq35c}) depends on $x$ and $y$ through the expression in
brackets only; the latter itself is invariant for the interchange
$(x,y)\mapsto(-y,-x)$. So it suffices to consider the double integral
associated with the first term $(-w)^{y-1}(-z)^{-x}$ only; the other
one is automatic.
Since the integration paths can be taken to satisfy $|z|<|w|$, in the
double integral of (\ref{eq35c}), one may use the series
%
\begin{equation}
\frac{1}{z-w}=\frac{1}{(-w)}\sum_{c\geq0} \biggl(\frac
{-z}{-w}
\biggr)^c\qquad \mbox{valid for }|w|>|z|,
\end{equation}
and one notices that for each of the $E_i$, the double integral
decouples into the product of two integrals over $\Gamma_0$:
%
\begin{eqnarray}
&&\frac{1}{(2\pi\I)^2} \oint_{\Gamma_0}dz \oint_{\Gamma_{0,z}} dw
\frac
{E_1(z,w)}{z-w} \frac{(-w)^{y-1}} {(-z)^x}\nonumber\\
&&\qquad =\sum_{c\geq0} \oint_{\Gamma_0}\frac{-dz}{2\pi\I} \frac
{e^{t(z-z^{-1})} }{(-z)^{x-m-c}} \bigl(1-S_n(z^{-1})\bigr)
\nonumber\\[-8pt]\\[-8pt]
&&\qquad\hspace*{24pt}{}\times\oint_{\Gamma_0}\frac
{-dw}{2\pi\I}\frac{(-w)^{y-m-c-2}}{e^{t(w-w^{-1})}} \bigl(1-S_n(w)\bigr)\nonumber\\
&&\qquad=\sum_{c\geq0} A(x-c)A(y-c).\nonumber
\end{eqnarray}
To see that the second integral equals $A(y-c)$, one performs the
change of variable $w\mapsto1/w$. Since the only poles are at $w=0$
and $w^{-1}=0$, this is allowed; so, we do not pick up further poles.
The same decoupling occurs for the other $E_i$'s, which yields
%
\begin{eqnarray}\qquad
\frac{1}{(2\pi\I)^2}\oint_{\Gamma_0}dz \oint_{\Gamma_{0,z}} dw
\frac
{E_2(z,w)}{z-w} \frac{(-w)^{y-1}} {(-z)^x}&=& -\sum_{c\geq0}
A(x-c)B(y-c),\nonumber\\[-8pt]\\[-8pt]
\frac{1}{(2\pi\I)^2}\oint_{\Gamma_0}dz \oint_{\Gamma_{0,z}} dw
\frac
{E_3(z,w)}{z-w} \frac{(-w)^{y-1}} {(-z)^x}&=& - \sum_{c\geq0} B(x-c)A(y-c)
\nonumber
\end{eqnarray}
and
%
\begin{eqnarray}
&&
\frac{1}{(2\pi\I)^2}\oint_{\Gamma_0}dz \oint_{\Gamma_{0,z}} dw
\frac
{E_4(z,w)}{z-w} \frac{(-w)^{y-1}} {(-z)^{x}}\nonumber\\
&&\qquad=- \sum_{c\geq0} B(-x+c+1)B(-y+c+1) \\
&&\qquad=-\sum_{c< 0} B(-x-c )B(-y - c ).
\nonumber
\end{eqnarray}
Then adding the same expressions with the interchange $(x,y)\mapsto
(-y,-x)$ yields formula (\ref{eqpropDecomp}), completing the proof of
Proposition~\ref{propDecomp}.
\end{pf}

In anticipation of Section~\ref{IntegralReprKernel} on the integral
representation of the limiting kernel, which will be obtained by saddle
point analysis, some comments must be made here; they will also explain
the interchange of integrals, which occurred in (\ref{eq35b}). Given
the future rescaling $m \simeq2t$ with $x=\xi_1 t^{1/3}$, $y=\xi_2
t^{1/3}$ for $t\rightarrow\infty$, the steepest descent method applied
to $A(x)$ and $B(x)$ at $z=-1$, in particular to the part of the
integrand $ {e^{\pm t(z-z^{-1})}}{(-z)^{\pm m}}=e^{\pm tF(z)}$,
respectively, uses the Taylor expansions
%
\begin{eqnarray} \label{Taylor}
F(z)&:=& z-z^{-1}+2 \log(-z) = \tfrac13
(z+1)^3+\Or(z+1)^4,\nonumber\\[-8pt]\\[-8pt]
\log(-z)&=&-(z+1)-\tfrac12 (z+1)^2+\Or(z+1)^3.
\nonumber
\end{eqnarray}
The steepest descent path for $A(x)$ will therefore look like $\lcont$
with an angle of approximately $\pm\pi/3$, whereas for $B(x)$ it will
look like $\rcont$ with an angle of approximately\footnote{The angles
can be within the range $\pi/3 \pm\pi/6$ and $2\pi/3 \pm\pi/6$.}
$\pm
2\pi/3$ with the positive real axis. The contours of \textit{the four
double integrals} of equation (\ref{eq35c}), associated with each one
of the $E_i$'s, from the point of view of steepest descent analysis
about $z,w=-1$, are topologically two circles, a $z$-circle inside a
$w$-circle, which are deformed so that locally near $z=w=-1$ they look
like the set of pictures in Figure~\ref{figPaths1} (see
Section~\ref{IntegralReprKernel}), with the two circles intersecting
the real axis at the common point $z,w=-1$ and to the right of $-1$.

\subsection{Bessel reformulation}

The purpose of this section is to express the functions
$A(x), B(x), C_1(x)$ and $C_2(x)$, as in (\ref{eqABCD}) in terms of
Bessel functions, the expressions $Q_k$ and the Bessel kernel $K(0)$,
as in (\ref{RQ}) and (\ref{KernelZInfinity}). Throughout we will be
using the integral representation of the Bessel function of order $n\in
\Z$, together with its symmetries,
%
\begin{equation}\label{eqBesselDef}
J_n(2t)=\frac1{2\pi\I}\oint_{\Gamma_0}dz\frac
{e^{t(z-z^{-1})}}{z^{n+1}}=(-1)^n J_{-n}(2t)=(-1)^n J_n(-2t).
\end{equation}
$J_n(2t)$ is different from the modified Bessel function $I_n(2t)$,
defined in (\ref{eqbesselI}). To do so, we shall need the following
Bessel function expressions for the basic building blocks.
%
\begin{lemma}\label{lemmaBessel}
The kernel $K(0)$ defined in (\ref{KernelZInfinity}), the expressions $
h_k$ and $g_\ell$ given in (\ref{eqFk}) and (\ref{eqGl}) and the
functions $T_n(z^{-1})$ and $S_n(z^{-1})$, given in (\ref{Rbar}), can
be expressed in terms of Bessel functions as follows:
%
\begin{eqnarray}\label{eqLemBessel}
K(0)_{k,\ell}&=&\sum_{a\geq0} J_{k+a+1}(4t) J_{\ell+a+1}(4t)\nonumber\\
&=:&
B_{2t}(k+1,\ell+1),\qquad g_\ell=J_{\ell+1}(4t),\nonumber\\
{h}_k(z^{-1})&=&-\sum_{a\geq0} (-z)^a J_{k+a+1}(4t)+\frac
{e^{-2t(z-z^{-1})}}{(-z)^{k+1}}\nonumber\\[-8pt]\\[-8pt]
&=&\bar h_k(z^{-1})+
\frac{e^{-2t(z-z^{-1})}}{(-z)^{k+1}},\nonumber\\
T_n(z^{-1})&=&\sum_{k\geq n} \frac{Q_k}{(-z)^{k-n+1}}
,\nonumber\\
S_n(z^{-1})&=&-\mathop{\mathop{\sum_{a\geq0}}}_{k\geq
n}(-z)^aQ_kJ_{k+a+1}(4t)
,\nonumber
\end{eqnarray}
where $B_t(i,j)$ is the Bessel kernel in~\cite{PS02}. Also,
%
\begin{equation}\label{eq415}
Q_k=\sum_{\ell\geq n} P_{k,\ell} J_{\ell+1}(4t) \qquad
\mbox{with
}P_{k,\ell}=\bigl((\Id-\chi_n K(0) \chi_n)^{-1}\bigr)_{k,\ell}.
\end{equation}
\end{lemma}
\begin{pf}
$\!\!\!$For\vspace*{1pt} $K(0)_{k,\ell}$ one uses in (\ref{KernelZInfinity}) the series
$1/(v-u)=v^{-1}\sum_{a\geq0}(u/v)^a$ for $|u|<|v|$ and then (\ref
{eqBesselDef}). The same geometric series is used for $h_k(z^{-1})$
in (\ref{eqFk}) but with $u$ replaced by $z$, from which formula (\ref
{eqLemBessel}) for $h_k(z^{-1})$ and the formula for $S_n$ by (\ref
{Rbar}) follow. Finally, one has $g_\ell=(-1)^{\ell-1} J_{-1-\ell
}(2t)=J_{\ell+1}(2t)$.
\end{pf}

The more intricate term is $C_2$ from (\ref{eqABCD}).
%
\begin{lemma}\label{lemLastTerm}
The expression $C_2(x)$, as in (\ref{eqABCD}), equals
%
\begin{equation}
C_2(x)=\Id_{[x\neq0]} C_2^*(x),
\end{equation}
where
%
\begin{eqnarray}\label{eqLemLastTerm}\qquad
C_2^*(x)&=&(-1)^{x}\frac{1}{2\pi\I}\oint_{\Gamma_0}dz\frac{1}{z^{x+1}}
\bigl(R_n(z^{-1})+R_n(z)-R_n(z^{-1})R_n(z)\bigr)\nonumber\\[-7pt]\\[-7pt]
&=&\sum_{k\geq n}Q_k
\biggl(\Id_{[x\neq0]} J_{k-|x|+1}(4t)
-Q_{k+|x|}+ \sum_{\ell\geq n} Q_\ell K(0)_{k,\ell-|x|}\biggr).
\nonumber
\end{eqnarray}
\end{lemma}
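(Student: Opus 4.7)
The first equality in~(\ref{eqLemLastTerm}) is the elementary identity $-1/(-z)^{x+1} = (-1)^x/z^{x+1}$ for integer $x$. For the substantive identity my plan is to split
\begin{equation*}
C_2^*(x) = I_1(x) + I_2(x) - I_3(x),
\end{equation*}
where $I_1,I_2,I_3$ denote the integrals of $(-1)^x/(2\pi\I z^{x+1})$ against $R_n(z^{-1})$, $R_n(z)$, and $R_n(z^{-1})R_n(z)$ respectively, and to evaluate each by producing a clean Laurent expansion of $R_n$ at $z=0$.

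The key intermediate step is to show
\begin{equation*}
R_n(z^{-1}) = \sum_{m \geq 1} (-1)^m \beta_{-m}\, z^{-m}, \qquad \beta_j := \sum_{k \geq n} Q_k J_{k+j+1}(4t),
\end{equation*}
and therefore, by $z \leftrightarrow z^{-1}$, $R_n(z) = \sum_{m \geq 1}(-1)^m \beta_{-m}\, z^m$. To establish this, I insert the series for $S_n$ and $T_n$ from Lemma~\ref{lemmaBessel} into $R_n(z^{-1}) = S_n(z^{-1}) + (-z)^{-n} e^{-2t(z-z^{-1})} T_n(z^{-1})$ and expand the exponential as $e^{-2t(z-z^{-1})} = \sum_{m \in \Z}(-1)^m J_m(4t) z^m$; after collecting powers of $z$ one sees that the nonnegative-power contributions of $S_n$ cancel exactly those coming from the second term, leaving only negative powers. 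Reading off the residue at $z=0$ then gives $I_1 = \Id_{[x<0]}\beta_x$ and $I_2 = \Id_{[x>0]}\beta_{-x}$, whence
\begin{equation*}
I_1 + I_2 = \Id_{[x \neq 0]}\sum_{k \geq n}Q_k J_{k-|x|+1}(4t),
\end{equation*}
the first piece of the RHS. For $I_3$, multiplying the two Laurent series and extracting the coefficient of $z^x$ yields, after the substitution $a=m-1$,
\begin{equation*}
I_3 = \sum_{a \geq 0}\sum_{k, \ell \geq n} Q_k Q_\ell\, J_{k-a}(4t)\, J_{\ell-a-|x|}(4t).
\end{equation*}

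The hard part, and the step where the kernel $K(0)$ finally enters, is to recast this one-sided sum as a Kronecker delta minus $K(0)$. For this I will invoke the Bessel completeness identity $\sum_{m \in \Z} J_m(4t) J_{m+r}(4t) = \delta_{r,0}$, which follows from Parseval applied to the unimodular function $e^{t(z-z^{-1})} = \sum_m J_m(2t) z^m$ on $|z|=1$. Setting $r = \ell - k - |x|$ and substituting $m=k-a$ rewrites the sum as $\sum_{m \leq k} J_m(4t) J_{m+r}(4t)$; subtracting it from the full two-sided sum leaves $\sum_{m \geq k+1} J_m J_{m+r}$, which after the shift $b = m-k-1 \geq 0$ is precisely $K(0)_{k,\ell-|x|}$ by the formula of Lemma~\ref{lemmaBessel}. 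This yields
\begin{equation*}
\sum_{a \geq 0} J_{k-a}(4t)\, J_{\ell-a-|x|}(4t) = \delta_{\ell, k+|x|} - K(0)_{k,\ell-|x|},
\end{equation*}
so that $I_3 = \sum_{k \geq n} Q_k Q_{k+|x|} - \sum_{k,\ell \geq n} Q_k Q_\ell\, K(0)_{k,\ell-|x|}$. Combining with $I_1 + I_2$ produces the claimed expression.
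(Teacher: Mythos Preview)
Your argument is correct and takes a somewhat different route from the paper. The paper keeps the decomposition $R_n(z^{-1})=\sum_{k\ge n}Q_k h_k(z^{-1})$ with the two-piece formula for $h_k$ (the Bessel sum $\bar h_k$ plus the exponential piece), and then evaluates $\oint (h_k(z^{-1})+h_k(z))z^{-x-1}\,dz$ and $\oint h_k(z^{-1})h_\ell(z)\,z^{-x-1}\,dz$ term by term; the product produces four cross terms which are simplified individually using the symmetries of $J_n$ and collapse to $\delta_{\ell-k,x}-K(0)_{k,\ell-|x|}$. You instead first consolidate the two pieces of $h_k$ into a single Laurent expansion $R_n(z^{-1})=\sum_{m\ge 1}(-1)^m\beta_{-m}z^{-m}$, which makes the vanishing of all nonnegative Laurent coefficients explicit, and then read off $I_1,I_2$ by residues; for the cross term you invoke the Bessel completeness identity $\sum_{m\in\Z}J_m J_{m+r}=\delta_{r,0}$ to split the half-line sum into $\delta-K(0)$. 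Your approach is conceptually tidier (one clean series, one orthogonality identity) and avoids the four-term bookkeeping, while the paper's direct computation stays closer to the building blocks $h_k$ that are reused elsewhere.
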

\begin{pf} One first notices that the integrand in (\ref
{eqLemLastTerm}) is invariant under the mapping $z\mapsto z^{-1}$.
Then, using formula (\ref{eqLemBessel}) for $R_n(z^{-1})=\break\sum
_{k\geq n}Q_k h_k(z^{-1})$,\vspace*{2pt} one breaks up the calculation as follows:

(a) \textit{Terms from $R_n(z^{-1})+R_n(z)$}. We have
%
\begin{equation}
R_n(z^{-1})+R_n(z) = \sum_{k\geq n} Q_k \bigl(h_k(z^{-1})+h_k(z)\bigr)
\end{equation}
and thus, by integration, one checks first for $x>0$, then for $x<0$
and for \mbox{$x=0$}, that, using the symmetry properties of the Bessel
functions [see~(\ref{eqBesselDef})],
%
\begin{eqnarray}
&&\frac{1}{2\pi\I}\oint_{\Gamma_0}dz \frac
{h_k(z^{-1})+h_k(z)}{z^{x+1}} \nonumber\\[1pt]
&&\qquad=
(-1)^x \bigl(\Id_{x>0}J_{k+1- x }(4t)+\Id_{x<0}J_{k+1+ x }(4t)\bigr) \\[1pt]
&&\qquad=(-1)^{x}\Id_{[x\neq0]} J_{k+1-|x|}(4t).
\nonumber
\end{eqnarray}
Substituting into the left-hand side of (\ref{eqLemLastTerm}) gives the
first term on the right-hand side of (\ref{eqLemLastTerm}).

(b) \textit{Terms from $R_n(z^{-1})R_n(z)$}. We have
%
\begin{equation}\label{eq232}
R_n(z^{-1})R_n(z) = \sum_{k,\ell\geq n} Q_k Q_\ell h_k(z^{-1})h_\ell(z).
\end{equation}
From (\ref{eqBesselDef}) and (\ref{eqLemBessel}) we get
%
\begin{eqnarray}
&&
\frac{1}{2\pi\I}\oint_{\Gamma_0}dz \frac{h_k(z^{-1})h_\ell
(z)}{z^{x+1}} \nonumber\\
&&\qquad= \sum_{a,b\geq0} (-1)^{a-b}\delta_{a-b,x}
J_{k+a+1}(4t) J_{b+\ell+1}(4t)\nonumber\\
&&\qquad\quad{}-\sum_{b\geq0} (-1)^{b+k+1} J_{\ell
+b+1}(4t)J_{k+b+1+x}(-4t)\nonumber\\[-8pt]\\[-8pt]
&&\qquad\quad{}-\sum_{a\geq0}(-1)^{a+\ell+1}J_{k+a+1}(4t)J_{x-\ell
-a-1}(4t)+(-1)^{x}\delta_{\ell-k,x}\nonumber\\
&&\qquad=(-1)^{x}\biggl(\delta_{\ell-k,x}-\sum_{a\geq0} J_{k+a+1}(4t)
J_{\ell
+a+1-|x|}(4t)\biggr)\nonumber\\
&&\qquad= (-1)^x\bigl(\delta_{\ell-k,x}-K(0)_{k,\ell-|x|} \bigr)
,\nonumber
\end{eqnarray}
using in the last equality the expression (\ref{eqLemBessel}) for the
kernel $K(0)$. In the second equality we used the symmetries (\ref
{eqBesselDef}) of the Bessel functions. Substituted into~(\ref
{eq232}), this gives the last two terms in (\ref{eqLemLastTerm}).
\end{pf}
%
\begin{proposition}\label{LemmaABCD}
The expressions $A(x), B(x), C(x)$, defined in (\ref{eqABCD}) for
$x\in
\Z$, can be expressed in terms of Bessel functions $J_k$, $Q_k$ and the
kernel $K(0)$, as follows:
%
\begin{eqnarray}\label{eq314}
A(x) &=& J_{m+1-x}(2t)+\sum_{k\geq n}\sum_{a\geq0} Q_k
J_{k+1+a}(4t)J_{m+1+a-x}(2t),\nonumber\\[-8pt]\\[-8pt]
B(x) &=& \sum_{k\geq n} Q_k J_{k-m+x}(2t)\nonumber
\end{eqnarray}
and
%
\begin{eqnarray}\label{eq315}
C(x)&=&\sum_{k\geq n} Q_k \bigl(J_{k-x+1}(4t)+J_{k+x+1}(4t)
\bigr)\nonumber\\[-8pt]\\[-8pt]
&&{}+\sum
_{k,\ell\geq n} Q_k Q_\ell\bigl(K(0)_{k+x,\ell} +K(0)_{k-x,\ell
}\bigr).\nonumber
\end{eqnarray}
\end{proposition}
\begin{pf}
The formulas for $A$ and $B$ follow directly from (\ref{eqABCD}) and
the expressions for $T_n$ and $S_n$ in (\ref{eqLemBessel}), together
with the symmetries (\ref{eqBesselDef}) of the Bessel functions. Then
%
\begin{eqnarray}
C_1(x)&=&\frac{-1}{2\pi\I}\oint_{\Gamma_0} dz \frac
{T_n(z^{-1})T_n(z)}{(-z)^{x+1}}\nonumber\\
&=&\sum_{k,\ell\geq n}Q_kQ_\ell\frac{(-1)^x}{2\pi\I}\oint_{\Gamma
_0}dz \frac{(-z)^{k-\ell}}{z^{x+1}}\\
&=&
\sum_{k,\ell\geq n}Q_k Q_\ell\delta_{k-\ell,x}=\sum_{k\geq n} Q_k
Q_{k+|x|}.
\nonumber
\end{eqnarray}
From Lemma~\ref{lemLastTerm}, it follows that
%
\begin{equation}\label{eq315b}\qquad
C_2(x)=\Id_{[x\neq0]}\sum_{k\geq n} Q_k
\biggl(J_{k-|x|+1}(4t)-Q_{k+|x|}+\sum_{\ell\geq n} Q_\ell K(0)_{k,\ell-|x|}
\biggr).
\end{equation}
Next we show that $\Id_{[x\neq0]}$ can actually be omitted. To do so,
it suffices to show that the sum on the right-hand side of (\ref
{eq315b}) vanishes when $x=0$.

Indeed, setting $P=(\Id-\chi_n K(0)\chi_n)^{-1}$, as in (\ref{eq415}),
remember that $g_\ell=J_{\ell+1}(4t)$ and that $Q_k=(P\chi_n g)_k$.
Then, denoting $\langle\cdot, \cdot\rangle$ the canonical scalar
product on $\ell^2(\Z)$ we get, for $x=0$, that the right-hand side of (\ref
{eq315b}) equals
%
\begin{eqnarray}
&&\langle P \chi_n g, \chi_n g\rangle-\langle P \chi_n g,\chi_n
P\chi_n
g\rangle+\langle P \chi_n g,\chi_n K(0)\chi_n P\chi_n g\rangle
\nonumber
\\
&&\qquad=\langle P \chi_n g, \chi_n g\rangle-\bigl\langle P \chi_n g,\chi_n
\bigl(\Id
-\chi_n K(0)\chi_n \bigr) P\chi_n g\bigr\rangle\\
&&\qquad=\langle P \chi_n g, \chi_n g\rangle- \langle P \chi_n g,\chi_n
g\rangle
=0.\nonumber
\end{eqnarray}
Plugging these results into $C(x)=2C_1(x)+C_2(x)$ we obtain
%
\begin{equation}\label{eq428}
C(x)=\sum_{k\geq n} Q_k \biggl(J_{k-|x|+1}(4t)+Q_{k+|x|}+\sum_{\ell
\geq
n} Q_\ell K(0)_{k,\ell-|x|} \biggr).
\end{equation}
It follows from the relation $P=\Id+\chi_nK(0)\chi_nP$ [see the
definition of $Q$ and $P$ in (\ref{eq415})] that acting on $\chi_ng$
and taking the $k$th entry,
%
\begin{equation}
Q_k=\Id_{[k\geq n]} \biggl(J_{k+1}(4t)+\sum_{\ell\geq n}
K(0)_{k,\ell}
Q_\ell\biggr).
\end{equation}
Using this relation for $Q_{k+|x|}$ in (\ref{eq428}) we obtain
%
\begin{eqnarray}
C(x)&=&\sum_{k\geq
n}Q_k\bigl(J_{k-|x|+1}(4t)+J_{k+|x|+1}(4t)\bigr)\nonumber\\[-8pt]\\[-8pt]
&&{}+\sum_{k,\ell\geq n}Q_k Q_\ell\bigl(K(0)_{k,\ell
-|x|}+K(0)_{k+|x|,\ell
}\bigr).
\nonumber
\end{eqnarray}
Finally, since $K(0)$ is symmetric, we replace $K(0)_{k,\ell
-|x|}=K(0)_{\ell-|x|,k}$ and change the labeling $k\leftrightarrow
\ell
$. This yields (\ref{eq315}), except for replacing $|x|$ by $x$, which
can then be done.
\end{pf}

\section{Extended kernel for finite time}\label{s5}
Formula (\ref{eqpropDecomp}) (in Proposition~\ref{propDecomp}) with
$A(x),B(x),C(x)$ given by Proposition~\ref{LemmaABCD} gives the kernel
$\widetilde\BK_m^{\rm}$ governing the fluctuations of the walkers
near the point of meeting of the two groups of nonintersecting random
walkers at time $\tau=0$. In this section we prove Theorem \ref
{MainTheorem} and we extend Proposition~\ref{propDecomp} to the
multitime setting (Theorem~\ref{ThmExtKernel}).

Consider the $n=2m+1$ walks whose positions were denoted by $x_k(\tau)$
in Section~\ref{SectFiniteSyst}. Consider $p$ different time slices
$\tau_1<\tau_2<\cdots<\tau_p$ in the interval $(-t,t)$. Then, the
probability measure at these times of the positions of the random walks
is given by
%
\begin{eqnarray}\label{eq51}
&& \Pb\Biggl(\bigcap_{j=1}^p\bigcap_{k=1}^{n}\{x_k(\tau_j)=y_k^j\}
\bigg|
\bigcap_{k=1}^{n}\{x_k(t)=x_k(-t)=m+1-k\}\Biggr)\nonumber\\
&&\qquad=\mathrm{const}\times\det[p_{t+\tau_1}(m+1-i,y_j^1) ]_{1\leq
i,j\leq n}\nonumber\\[-8pt]\\[-8pt]
&&\qquad\quad{}\times\Biggl(\prod_{\ell=1}^{p-1}\det[p_{\tau_{\ell+1}-\tau
_\ell
}(y_i^\ell,y_j^{\ell+1}) ]_{1\leq i,j\leq n}\Biggr)\nonumber\\
&&\qquad\quad{}\times\det[p_{t-\tau_p}(y_i^p,m+1-j) ]_{1\leq i,j\leq
n}.\nonumber
\end{eqnarray}
It is well known that a measure of this form has determinantal
correlations in space--time~\cite{EM97,TW98,FN98,Jo03b,RB04}, as stated
in the following proposition.
%
\begin{theorem}\label{ThmDetCorrExtended} Any probability measure on
$\{
x_i^{(\ell)},1\leq i \leq n,1\leq\ell\leq p\}$ of the form\footnote
{The functions $\phi(\tau_\ell,x ;\tau_{\ell+1},y)$ themselves may in
fact vary with $\ell$ above.}
%
\begin{eqnarray}\label{eq31}\quad
&&\frac{1}{Z}\det\bigl(\phi\bigl(\tau_0,a_i;\tau_1,x^{(1)}_j\bigr)
\bigr)_{1\leq
i,j\leq n}
\prod_{\ell=1}^{p-1}\det\bigl(\phi\bigl(\tau_\ell,x^{(\ell)}_i;\tau
_{\ell
+1},x^{(\ell+1)}_j\bigr)\bigr)_{1\leq i,j\leq n}\nonumber\\[-8pt]\\[-8pt]
&&\qquad{}\times\det\bigl(\phi\bigl(\tau_p,x^{(p)}_i;\tau_{p+1},b_j\bigr)
\bigr)_{1\leq
i,j\leq n}
\nonumber
\end{eqnarray}
has, assuming $Z\neq0$, the following determinantal $k$-point
correlation functions for $t_1,\ldots,t_k\in\{\tau_1,\ldots,\tau
_p\}$:
%
\begin{equation}
\rho^{(k)}(t_1,x_1,\ldots,t_k,x_k)=\det
(K(t_i,x_i;t_j,x_j)
)_{1\leq i,j\leq k}.
\end{equation}
The space--time kernel $K$ (often called extended kernel) is given by
%
\begin{eqnarray}
\label{eqExtKernelGeneral}
K(t_1,x_1;t_2,x_2)&=&
-\phi(t_1,x_1;t_2,x_2)\Id(t_2>t_1)\nonumber\\[-8pt]\\[-8pt]
&&{}+ \sum_{i,j=1}^n\phi(t_1,x_1;\tau_{p+1},b_i) [B^{-1}]_{i,j} \phi
(\tau
_0,a_j;t_2,x_2)
\nonumber
\end{eqnarray}
with ($*$ means integration with regard to the consecutive dots)
%
\begin{equation}\label{35}
\phi(\tau_r,x;\tau_s,y)=\cases{
\phi(\tau_r,x;\tau_{r+1},\cdot)*\cdots*\phi(\tau_{s-1},\cdot
;\tau
_s,y), &\quad if $\tau_r<\tau_s$,\cr
0, &\quad if $\tau_r\geq\tau_s$,}\hspace*{-26pt}
\end{equation}
and with the $n\times n$ matrix $B$ having entries $B_{i,j}=\phi(\tau
_0,a_i;\tau_{p+1},b_j)$.
\end{theorem}

Our measure (\ref{eq51}) has the form required by Theorem \ref
{ThmDetCorrExtended}. The normalization constant $Z$ is nothing else
but the partition function and it is nonzero since the set of $n$
paths satisfying the nonintersection constraint is nonempty. We
already determined the one-time kernel for $\tau=0$. To get the
extended kernel one has to let the one-time kernel ``evolve'' by means
of the operator of the random walk. This formulation was already
present in the work of Pr\"ahofer and Spohn on the Airy$_2$
process~\cite{PS02}.
%
\begin{lemma} \label{LemmaExt} The extended kernel $\widetilde\BK
_m^{\mathrm{ext}}(t_1,x_1;t_2,x_2)$ of the time-dependent point process
$\tilde\eta(\tau,x)$ is given in terms of the kernel $\widetilde\BK
_m(x_1,x_2)=\widetilde\BK_m^{\mathrm{ext}}(0,x_1;\break 0,x_2)$ of the same point
process $\tilde\eta(x)$ at $\tau=0$ by the formula
%
\begin{eqnarray} \label{Kext0}
\widetilde\BK_m^{\mathrm{ext}}(t_1,x_1;t_2,x_2)
&=& -\Id_{[t_2<t_1]} \bigl(e^{(t_2-t_1){\cal
H}}\bigr)(x_1,x_2)\nonumber\\[-8pt]\\[-8pt]
&&{}+
(e^{-t_1 \cal H} \widetilde\BK_m e^{t_2 \cal H}
)(x_1,x_2) ,\nonumber
\end{eqnarray}
where the infinitesimal generator ${\cal H}$ of the single random walk,
the discrete Laplacian, acts on functions $f$ as
%
\begin{equation}\label{H}
{\cal H}f(x)=f(x+1)+f(x-1)-2f(x),\qquad x\in\Z.
\end{equation}
\end{lemma}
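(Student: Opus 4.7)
The plan is to deduce (\ref{Kext0}) from the Eynard--Mehta formula of Theorem~\ref{ThmDetCorrExtended}, combined with the space-time version of the Borodin--Olshanski--Okounkov complementation principle already invoked at $\tau=0$ in Step~5 of the proof of Theorem~\ref{thmKernelPart}.

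First, the joint law (\ref{eq5.1}) of the particle process $\textbf{x}(\tau)$ is of the Eynard--Mehta form (\ref{eq3.1}) with $\tau_0=-t$, $\tau_{p+1}=t$, $a_i=b_i=m+1-i$ and transition function $\phi(\tau,x;\tau',y):=p_{\tau'-\tau}(x,y)=e^{(\tau'-\tau){\cal H}}(x,y)$; the identification of the transition probability with the semigroup kernel is the content of the remark following Theorem~\ref{MainTheorem}. Applying Theorem~\ref{ThmDetCorrExtended} directly gives the particle extended kernel
\begin{equation*}
\BK_m^{\rm ext}(t_1,x_1;t_2,x_2)=-\phi(t_1,x_1;t_2,x_2)\Id_{[t_2>t_1]}+\sum_{i,j=1}^{2m+1}\phi(t_1,x_1;t,b_i)[B^{-1}]_{i,j}\,\phi(-t,a_j;t_2,x_2),
\end{equation*}
with $B_{i,j}=p_{2t}(a_i,b_j)$.

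Using the semigroup factorizations $\phi(t_1,x_1;t,b_i)=(e^{-t_1{\cal H}}p_t(\cdot,b_i))(x_1)$ and $\phi(-t,a_j;t_2,x_2)=(p_t(a_j,\cdot)e^{t_2{\cal H}})(x_2)$, one pulls the two time factors out of the double sum. What remains, $\sum_{i,j}p_t(x,b_i)[B^{-1}]_{i,j}p_t(a_j,y)$, is exactly the $\tau=0$ particle kernel $\BK_m(x,y)$ produced in Step~1 of the proof of Theorem~\ref{thmKernelPart}. Since $\phi(t_1,x_1;t_2,x_2)=e^{(t_2-t_1){\cal H}}(x_1,x_2)$, this yields the semigroup-conjugation form of the particle extended kernel:
\begin{equation*}
\BK_m^{\rm ext}(t_1,x_1;t_2,x_2)=-\Id_{[t_2>t_1]}e^{(t_2-t_1){\cal H}}(x_1,x_2)+\bigl(e^{-t_1{\cal H}}\BK_m e^{t_2{\cal H}}\bigr)(x_1,x_2).
\end{equation*}

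To pass from particles to holes, I would invoke the space-time lift of the complementation principle of \cite{BOO00}: the dual process $\tilde\eta(\tau,x)$ is determinantal on $\{\tau_1,\ldots,\tau_p\}\times\Z$ with extended kernel
\begin{equation*}
\widetilde\BK_m^{\rm ext}(t_1,x_1;t_2,x_2)=\Id_{[t_1=t_2]}\delta_{x_1,x_2}-\BK_m^{\rm ext}(t_1,x_1;t_2,x_2),
\end{equation*}
which follows by inclusion--exclusion applied to the space-time correlations $\E\prod_i(1-\eta(t_i,x_i))$: the alternating sum collapses, by determinantality of the particle correlations, into $\det(\Id-\BK_m^{\rm ext})$ on the multi-set of space-time points. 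Substituting $\BK_m=\Id-\widetilde\BK_m$ in the previous display, regrouping via
\begin{equation*}
\Id_{[t_1=t_2]}\delta_{x_1,x_2}+\Id_{[t_2>t_1]}e^{(t_2-t_1){\cal H}}(x_1,x_2)=e^{(t_2-t_1){\cal H}}(x_1,x_2)-\Id_{[t_2<t_1]}e^{(t_2-t_1){\cal H}}(x_1,x_2),
\end{equation*}
and using $e^{(t_2-t_1){\cal H}}=e^{-t_1{\cal H}}e^{t_2{\cal H}}$ as bounded operators on $\ell^2(\Z)$ recovers (\ref{Kext0}).

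The main obstacle is the extended complementation in the penultimate paragraph: the single-time statement quoted from the appendix of \cite{BOO00} is not literally what is needed, and one must verify that the correct Kronecker delta appears in both time \emph{and} space. This is a bookkeeping exercise on $k$-point inclusion--exclusion combined with the expansion $\det(\Id-K)=\sum_{S}(-1)^{|S|}\det K|_S$, but it is the only genuinely new ingredient; everything else is the Eynard--Mehta formula plus the semigroup structure of the random-walk transitions, exactly in the spirit of the Pr\"ahofer--Spohn construction of the $\mathrm{Airy}_2$ process in \cite{PS02}.
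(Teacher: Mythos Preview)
Your proof is correct and follows essentially the same route as the paper: apply the Eynard--Mehta formula of Theorem~\ref{ThmDetCorrExtended}, use the semigroup factorization $p_{t_2-t_1}=e^{(t_2-t_1){\cal H}}$ to identify the double sum with $e^{-t_1{\cal H}}\BK_m e^{t_2{\cal H}}$, then pass to the dual via the space-time complementation $\widetilde\BK_m^{\rm ext}=\Id_{[t_1=t_2]}\delta_{x_1,x_2}-\BK_m^{\rm ext}$ and regroup. The only cosmetic difference is that the paper switches to the orthonormal basis $\psi_k$ (so that $B=\Id$) before conjugating by the semigroup, whereas you keep the general $B^{-1}$ and recognize the resulting sum as the Step~1 expression for $\BK_m$; both are equivalent since $B$ is symmetric and the conjugation commutes with the change of basis. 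Your explicit flag on the extended complementation principle is well placed---the paper simply asserts it in (\ref{5.19}) without further comment.
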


Comparing the first term of (\ref{eqExtKernelGeneral}) and (\ref
{Kext0}), one sees a different ordering in the times. This is
consequence of the dual transformation.
\begin{pf*}{Proof of Lemma~\ref{LemmaExt}}
The operator ${\cal H} $ in (\ref{H}) is the generator of the
continuous time process defined by the transition probability
$p_t(x,y)$, in (\ref{tr-pr0}). Indeed, one checks that this transition
probability is given by [the reader is reminded of the notation
following formula (\ref{Laplacian0})]
%
\begin{eqnarray}\label{tr-pr1}
p_t(x,y)&=& e^{-2t} I_{|x-y|}(2t)=\frac{1}{2\pi\I} \oint_{\Gamma
_0}dz
\frac{e^{t(z+z^{-1}-2)}}{z^{x-y+1}}\nonumber\\[-8pt]\\[-8pt]
&=&e^{t{\cal H}}\Id(x,y)=(e^{t{\cal
H}})(x,y),\nonumber
\end{eqnarray}
because
%
\begin{eqnarray}
\frac{\partial}{\partial t}p_t(x,y)
&=&\frac{1}{2\pi\I} \oint_{\Gamma_0}\frac
{dz}{z^{x-y+1}}(z+z^{-1}-2)e^{t(z+z^{-1}-2)}\nonumber\\
&=&p_t(x-1,y)+p_t(x+1,y)-2p_t(x,y)\\
&=&({\cal H} p_t)(x,y)
\nonumber
\end{eqnarray}
with initial conditions $p_0(x,y)=\Id(x,y)$. Here, $\Id$ denotes the
identity operator on~$\Z$, that is, $\Id(x,y)=1$ if $x=y$ and $\Id
(x,y)=0$ if $x\neq y$. The one-point kernel in Section \ref
{SectFiniteSyst}, formula (\ref{eq236}), was written as a sum
involving $\psi_k(x)$ and $\psi_k(y)$. Under the time flow, they will
become different functions; therefore, we set $\Psi_k(0,x)=\Phi
_k(0,x)=\psi_k(x)$, and thus, with this new notation, the kernel reads
%
\begin{equation}
\BK_m(x_1,x_2)=\sum_{k=1}^{n} \psi_k(x_1)\psi_k(x_2) =
\sum_{k=1}^n \Psi_k(0,x_1)\Phi_k(0,x_2).
\end{equation}
The two set of functions $\{\Phi_k(0,x),k=1,\ldots,n\}$ and $\{\Psi
_k(0,x),k=1,\ldots,n\}$ satisfy
%
\begin{eqnarray}\qquad
\operatorname{span}\{\Phi_k(0,x),k=1,\ldots,n\}&=&\operatorname{span}\{
p_t(m+1-k,x),k=1,\ldots,n\},\nonumber\\
\operatorname{span}\{\Psi_k(0,x),k=1,\ldots,n\}&=&\operatorname{span}\{
p_t(x,m+1-k),k=1,\ldots,n\}\\
&&\eqntext{\mbox{with }
\langle\Phi_k(0,x),\Psi_p(0,x)\rangle= \delta_{k,p},}
\end{eqnarray}
so that the matrix $B$ defined in (\ref{eqExtKernelGeneral}) becomes
the identity matrix.

Let us consider the functions of Theorem~\ref{ThmDetCorrExtended}.
First of all, the function $\phi{(t_1,x_1;t_2,x_2)}$ appearing
in (\ref
{eqExtKernelGeneral}) becomes
%
\begin{eqnarray}\label{517}
\phi{(t_1,x_1;t_2,x_2)}\Id_{[t_2>t_1]} &=& \Id
_{[t_2>t_1]}p_{t_2-t_1}(x_1,x_2)\nonumber\\[-8pt]\\[-8pt]
&=&\Id_{[t_2>t_1]}\bigl(e^{(t_2-t_1){\cal
H}}\bigr)(x_1,x_2),\nonumber
\end{eqnarray}
where $t_1,t_2\in\{\tau_1,\ldots,\tau_p\}$. Next, with $\tau_0=-t$,
$\tau_{p+1}=t$ we have
%
\begin{equation}
\phi(t_1,x;t,b_k)=p_{t-t_1}(x,b_k)=(e^{-t_1{\cal H}})(x,\cdot)*\phi
(0,\cdot;t,b_k)
\end{equation}
and
%
\begin{equation}
\phi(-t,a_k;t_2,x)=p_{t_2+t}(a_k,x)=\phi(-t,a_k;0,\cdot
)*(e^{t_2{\cal
H}})(\cdot,x).
\end{equation}
With the choice of basis used for the kernel at $\tau=0$, we have that
$\phi(0,\cdot;t,b_k)$ is replaced by $\Psi_k(0,\cdot)$ and $\phi
(-t,a_k;0,\cdot)$ by $\Phi_k(0,\cdot)$ (so that $B=\Id$). Thus in
Theorem~\ref{ThmDetCorrExtended} we have replaced
%
\begin{eqnarray}
\phi(t_1,x;t,b_k)&\to&(e^{-t_1{\cal H}})(x,\cdot)*\Psi_k(0,\cdot
)\nonumber\\[-8pt]\\[-8pt]
&=&(e^{-t_1{\cal H}}\Psi_k(0,\cdot))(x)=:\Psi_k(t_1,x)\nonumber
\end{eqnarray}
and
%
\begin{eqnarray}
\phi(-t,a_k;t_2,x)&\to&\Phi_k(0,\cdot)*(e^{t_2{\cal H}})(\cdot
,x)=(\Phi
_k(0,\cdot) e^{t_2{\cal H}})(x) \nonumber\\[-8pt]\\[-8pt]
&=& (e^{t_2{\cal H}^\top}\Phi
_k(0,\cdot))(x)=:\Phi_k(t_2,x).
\nonumber
\end{eqnarray}
Therefore the extended kernel has the following expression in terms of
the kernel $\BK_m$ in (\ref{eq236}):
%
\begin{eqnarray}\label{eqExtKernel}\qquad
\BK_m^{\mathrm{ext}}(t_1,x_1;t_2,x_2)
&=&- \Id_{[t_1<t_2]}p_{t_2-t_1}(x_1,x_2)
+\sum_{k=1}^n \Psi_k(t_1,x_1)\Phi_k(t_2,x_2)\nonumber\\[-8pt]\\[-8pt]
&=& - \Id_{[t_1<t_2]}\bigl(e^{(t_2-t_1){\cal H}}\bigr)(x_1,x_2)
+(e^{-t_1 \cal H}\BK_m e^{t_2 \cal H})(x_1,x_2).
\nonumber
\end{eqnarray}

Notice that, using the semi-group property of $e^{t{\cal H}}$, we have
the consistency relations (for $i=1,\ldots,p$)
%
\begin{eqnarray}
\Psi_k(\tau_i,x)&=&\bigl(e^{(\tau_p-\tau_i){\cal H}} \Psi_k(\tau
_p,\cdot
)\bigr)(x),\nonumber\\[-8pt]\\[-8pt]
\Phi_k(\tau_i,x)&=&\bigl(\Phi_k(\tau_1,\cdot)e^{(\tau_i-\tau
_1){\cal H}}
\bigr)(x).
\nonumber
\end{eqnarray}

The kernel $\widetilde\BK_m^{\mathrm{ext}}$ for the dual random walk is
then given by taking the complement. Using (\ref{eqExtKernel}) and
remembering that $\widetilde\BK_m=\Id-\BK_m$ from formula (\ref
{kernel1b}), we get
%
\begin{eqnarray} \label{519}
&&\widetilde\BK_m^{\mathrm{ext}}(t_1,x_1;t_2,x_2)\nonumber\\
&&\qquad=\Id_{[t_1=t_2]}\Id(x_1,x_2)-\BK_m^{\mathrm{ext}}(t_1,x_1;t_2,x_2)\nonumber
\\
&&\qquad= \Id_{[t_1=t_2]}\Id(x_1,x_2)+\Id_{[t_1<t_2]}\bigl(e^{(t_2-t_1){\cal
H}}\bigr)(x_1,x_2)\nonumber\\
&&\qquad\quad{}-(e^{-t_1{\cal H}}\BK_m e^{t_2{\cal H}}
)(x_1,x_2)\\
&&\qquad=\Id_{[t_1=t_2]}\bigl(e^{(t_2-t_1){\cal H}}\bigr)(x_1,x_2)+
\Id_{[t_1<t_2]}\bigl(e^{(t_2-t_1){\cal H}}\bigr)(x_1,x_2) \nonumber\\
&&\qquad\quad{}- \bigl(e^{(t_2-t_1){\cal H}}\bigr)(x_1,x_2)+\bigl(e^{-t_1{\cal H}}(\Id- \BK
_m)e^{t_2 {\cal H}}\bigr)(x_1,x_2) \nonumber\\
&&\qquad= -\Id_{[t_2<t_1]}\bigl(e^{(t_2-t_1){\cal H}}\bigr)(x_1,x_2)+ (e^{-t_1{\cal H}}
\widetilde\BK_m e^{t_2 {\cal
H}})(x_1,x_2),\nonumber
\end{eqnarray}
yielding (\ref{Kext0}), completing the proof of Lemma~\ref{LemmaExt}.
\end{pf*}

With the help of Lemma~\ref{LemmaExt}, we can easily prove
Theorem~\ref{MainTheorem}, starting from Theorem~\ref{thmKernelPart}.
\begin{pf*}{Proof of Theorem~\ref{MainTheorem}}
One of the key ingredients is that $f(x):=u^x$ is an eigenfunction of
$\cal H$ with eigenvalue $u+u^{-1}-2$. Indeed,
%
\begin{eqnarray}
({\cal H} f)(x)&=& u^{x+1}+u^{x-1}-2 u^x =
(u+u^{-1}-2)u^x\nonumber\\[-8pt]\\[-8pt]
&=&(u+u^{-1}-2) f(x).\nonumber
\end{eqnarray}
Moreover, $\cal H$ is symmetric. Therefore,
%
\begin{eqnarray}\label{S1}
(e^{t{\cal H}} f)(x) &=& e^{t(u+u^{-1}-2)}f(x),\nonumber\\[-8pt]\\[-8pt]
(f e^{t{\cal H}})(x) &=& (e^{t{\cal
H}^\top}f)(x)=e^{t(u+u^{-1}-2)}f(x).\nonumber
\end{eqnarray}
Then, (\ref{eqThmKernelHolesIntroExt}) follows straightforwardly from
(\ref{eqThmKernelPartTilde}) by applying $e^{-t_1 {\cal H}}$ to the
left, $e^{t_2 {\cal H}}$ to the right of $\widetilde\BK_m$ [together
with (\ref{tr-pr1}) for the first term of (\ref{Kext0})].
\end{pf*}

For the further analysis, we extend the reformulation of the kernel for
$\tau=0$, as in Proposition~\ref{propDecomp}, to the extended case. For
that purpose, we first define the basic functions replacing $A$, $B$,
and $C$ of the one-time case (see Proposition~\ref{LemmaABCD}). To do
so, define a new function $J_x^{(\tau)}(2t)$ dependent on a parameter
$\tau$
%
\begin{eqnarray}\label{eq613}
J_{x}^{^{(\tau)}}(2t)&:=&\oint_{\Gamma_0} \frac{dz}{2\pi\I z} \frac
{e^{t(z-z^{-1})}}{z^x} e^{\tau(z+z^{-1}-2)}\nonumber\\[-8pt]\\[-8pt]
&=&e^{-2\tau}\biggl(\frac
{t+\tau
}{t-\tau}\biggr)^{x/2} J_{x}\bigl(2\sqrt{t^2-\tau^2}\bigr).\nonumber
\end{eqnarray}
Also define a $\tau$-dependent extension of the kernel
$K^{}(0)_{k,\ell
}$, as in (\ref{eqLemBessel}), namely
%
\begin{equation}\label{55}
K^{(\tau)}(0)_{k,\ell} := \sum_{a\geq0} J^{(\tau
)}_{a+k+1}(4t)J_{a+\ell+1}(4t).
\end{equation}
Then define new functions $A(\tau,x),B(\tau,x),C(\tau,x)$ with $\tau
\in\R$ and $x\in\Z$, which extend the functions $A(x),B(x),C(x)$,
first defined in (\ref{eqABCD}) and re-expressed in (\ref{eq314}), by
%
\begin{eqnarray}\label{53}
A(\tau,x)&:=&J^{(\tau)}_{m+1-x}(2t)+\sum_{k\geq n}\sum_{a\geq0} Q_k
J_{k+1+a}(4t) J^{(\tau)}_{m+1+a-x}(2t),\nonumber\\
B(\tau,x)&:=&\sum_{k\geq n} Q_k
J^{(\tau)}_{k-m+x}(2t),\nonumber\\[-8pt]\\[-8pt]
C(\tau,x)&:=&\sum_{k\geq n} Q_k \bigl(J^{(\tau
)}_{k+1+x}(4t)+J^{(\tau
)}_{k+1-x}(4t)\bigr)\nonumber\\
&&{}+\sum_{k,\ell\geq n}Q_k Q_\ell\bigl(K^{(\tau)}(0)_{k+x,\ell
}+K^{(\tau
)}(0)_{k-x,\ell}\bigr).
\nonumber
\end{eqnarray}
Remember $H_n(0)=\det(\Id-K(0))_{\ell^2(n,n+1,\ldots)}$.
%
\begin{lemma}\label{LemmaExtendedWithE}
Given the notation (\ref{eqE1234}) for the $E_i$'s, the extended kernel
$\widetilde\BK_m^{\mathrm{ext}}$ is given by
%
\begin{eqnarray}\label{eq35ext}
&&\frac{(-1)^{x_2} e^{4t_2}}{(-1)^{x_1} e^{4t_1}} \frac
{H_{n+1}(0)}{H_{n}(0)} \widetilde\BK_m^{\mathrm{ext}}(t_1,x_1;t_2,x_2)\nonumber\\
&&\qquad=-\Id_{[t_2<t_1]} p_{t_1-t_2}(x_1,x_2)\frac{H_{n+1}(0)}{H_{n}(0)}
+C(t_1-t_2,x_1-x_2)\nonumber\\[-8pt]\\[-8pt]
&&\qquad\quad{}+\frac{1}{(2\pi\I)^2}\oint_{\Gamma_0}dz \oint_{\Gamma_{0,z}} dw
\frac
{1}{z-w} \sum_{i=1}^4 E_i(z,w)\nonumber\\
&&\hspace*{11pt}\qquad\quad{}\times\biggl({\frac{(-w)^{x_2-1}}{(-z)^{x_1}}\frac
{e^{-t_1(z+z^{-1}+2)}}{e^{-t_2(w+w^{-1}+2)}}}
+ {\frac
{(-z)^{x_2}}{(-w)^{x_1+1}}\frac{e^{-t_1(w+w^{-1}+2)}}{e^{-t_2(z+z^{-1}+2)}}}
\biggr).\nonumber
\end{eqnarray}
\end{lemma}
%
\begin{theorem}\label{ThmExtKernel}
The extended kernel $\widetilde\BK_m^{\mathrm{ext}}$ is also expressed
as
%
\begin{eqnarray}\label{Kext}
&&\frac{(-1)^{x_2} e^{4t_2}}{(-1)^{x_1} e^{4t_1}}\frac
{H_{n+1}(0)}{H_{n}(0)} \widetilde\BK_m^{\mathrm{ext}}(t_1,x_1;t_2,x_2)
\nonumber\\
&&\qquad=-\Id_{[t_2<t_1]} p_{t_1-t_2}(x_1,x_2) \frac
{H_{n+1}(0)}{H_{n}(0)}+C(t_1-t_2,x_1-x_2)\nonumber\\
&&\qquad\quad{}+\sum_{c\geq0}
\bigl(A(t_1,x_1-c)A(-t_2,x_2-c)+A(t_1,-x_1-c)A(-t_2,-x_2-c)\nonumber\\[-8pt]\\[-8pt]
&&\hspace*{29pt}\qquad\quad{}-A(t_1,x_1-c)B(-t_2,x_2-c)-A(t_1,-x_1-c)B(-t_2,-x_2-c)\nonumber\\
&&\hspace*{29pt}\qquad\quad{}-B(t_1,x_1-c)A(-t_2,x_2-c)-B(t_1,-x_1-c)A(-t_2,-x_2-c)\bigr)\nonumber
\\
&&\qquad\quad{}-\sum_{c<0}
\bigl(B(t_1,x_1-c)B(-t_2,x_2-c)+B(t_1,-x_1-c)B(-t_2,-x_2-c)
\bigr).\nonumber
\end{eqnarray}
\end{theorem}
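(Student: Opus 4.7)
The plan is to derive Theorem~\ref{ThmExtKernel} from Lemma~\ref{LemmaExtendedWithE} by mirroring the proof of Proposition~\ref{propDecomp}, which already established the analogous passage from the $E_i$-form~(\ref{eq3.5c}) to the decoupled Airy-like form~(\ref{eqpropDecomp}) at $\tau=0$. The starting expression~(\ref{eq3.5ext}) already has the correct skeleton---four double integrals $\int E_i(z,w)/(z-w)$ against the bracket of two monomials, plus the time-extended residue $C(t_1-t_2,x_1-x_2)$, plus the free transition term---so all that remains is to decouple each double integral into products of single integrals.

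First I would arrange contours with $|z|<|w|$ and expand $\frac{1}{z-w}=\sum_{c\geq 0}\frac{(-z)^c}{(-w)^{c+1}}$, as in the proof of Proposition~\ref{propDecomp}. The new ingredient is the time factor $\frac{e^{-t_1(z+z^{-1}+2)}}{e^{-t_2(w+w^{-1}+2)}}$ multiplying each $E_i/(z-w)$; since it factorises cleanly between $z$ and $w$, the decoupling proceeds with the $z$-integral inheriting $e^{-t_1(z+z^{-1}+2)}$ and the $w$-integral inheriting $e^{t_2(w+w^{-1}+2)}$. The outer prefactor $e^{4t_2}/e^{4t_1}$ absorbs the constant shifts, leaving precisely the exponentials $e^{\tau(z+z^{-1}-2)}$ that appear in the representation~(\ref{eq6.13}).

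Next I would identify each single integral with $A(t_1,\cdot)$, $B(t_1,\cdot)$, $A(-t_2,\cdot)$ or $B(-t_2,\cdot)$ using~(\ref{eq6.13}) and the definitions~(\ref{5.3}): $z$-integrals coming from the $(1-S_n(z^{-1}))$ pieces produce $A(t_1,\cdot)$, those coming from the $T_n(z^{-1})$ pieces produce $B(t_1,\cdot)$, and the $w$-integrals produce $A(-t_2,\cdot)$ or $B(-t_2,\cdot)$ (the sign $-t_2$ arising from the orientation of the $w$-exponential). The four combinations $E_1,E_2,E_3,E_4$ then contribute $+AA$, $-AB$, $-BA$, and $-BB$ respectively; the $E_4$ term, after the reindexing $c\mapsto -c-1$ employed in Proposition~\ref{propDecomp}, produces the sums $\sum_{c<0}$ in~(\ref{Kext}). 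The companion bracket term $\frac{(-z)^{x_2}}{(-w)^{x_1+1}}$ in Lemma~\ref{LemmaExtendedWithE}, with its swapped exponentials, yields the terms with $x_1,x_2$ replaced by $-x_1,-x_2$ after the changes of variable $z\mapsto z^{-1}$, $w\mapsto w^{-1}$ in the decoupled single integrals, which are admissible since the only poles sit at $0$ and $\infty$.

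I expect the main obstacle to lie upstream, in securing Lemma~\ref{LemmaExtendedWithE} itself, specifically in showing that the residue produced by the interchange of integrations in the $E_4$ piece (cf.~(\ref{interchange})) extends cleanly to the claimed $C(t_1-t_2,x_1-x_2)$. In the time-dependent setting that interchange now deposits an extra factor $e^{(t_2-t_1)(z+z^{-1})}$ in the $z=w$ residue integrand, and matching this against $T_n(z)T_n(z^{-1})$ and the $R_n$-pieces so as to reproduce $C(\tau,x)$ of~(\ref{5.3}) at $\tau=t_1-t_2$ requires a $\tau$-dependent rerun of Lemmas~\ref{lemmaBessel}--\ref{lemLastTerm} and Proposition~\ref{LemmaABCD}, with the $\tau$-extension~(\ref{5.5}) of $K(0)$ and~(\ref{eq6.13}) in place of the vanilla Bessel integral. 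Once this identification is verified, the decoupling described above assembles all contributions into the form~(\ref{Kext}).
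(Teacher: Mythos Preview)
Your decoupling argument is correct in principle, but it takes a different and noticeably longer route than the paper's. You propose to ``first extend in time, then decouple'': start from the $E_i$-form of Lemma~\ref{LemmaExtendedWithE}, expand $1/(z-w)$ geometrically, and identify the resulting single integrals with the time-dependent $A(\tau,\cdot)$, $B(\tau,\cdot)$. The paper instead does ``first decouple, then extend in time'': it applies the semigroup $e^{-t_1{\cal H}}(\,\cdot\,)e^{t_2{\cal H}}$ of Lemma~\ref{LemmaExt} directly to the \emph{already decoupled} one-time kernel~(\ref{eqpropDecomp}). The crucial shortcut is the eigenfunction identity $({\cal H}u^x)=(u+u^{-1}-2)u^x$, which makes the semigroup act diagonally on each $(-1)^x J_{N\pm x}(2t)$ and hence converts every occurrence of $J$ to $J^{(\tau)}$ in one stroke; this promotes $A\to A(\tau,\cdot)$, $B\to B(\tau,\cdot)$, $C\to C(\tau,\cdot)$ without touching any contour integral. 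Your route would recover the same answer, but at the cost of repeating the entire decoupling of Proposition~\ref{propDecomp} with time factors attached.

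Your anticipated ``main obstacle''---a $\tau$-dependent rerun of Lemmas~\ref{lemmaBessel}--\ref{lemLastTerm} to secure the $C(t_1-t_2,x_1-x_2)$ term in Lemma~\ref{LemmaExtendedWithE}---is therefore not a real obstacle. In the paper, that term is obtained not by redoing the interchange~(\ref{interchange}) with an extra $e^{(t_2-t_1)(z+z^{-1})}$ factor, but simply by applying the semigroup to $(-1)^{x_1-x_2}C(x_1-x_2)$: since $C(x)$ as written in Proposition~\ref{LemmaABCD} is a linear combination of $J_{N\pm x}(4t)$, the same eigenfunction argument replaces each $J$ by $J^{(t_1-t_2)}$, yielding~(\ref{5.3}) directly. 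No Bessel identities need to be reproved.
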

\begin{pf*}{Proofs of Lemma~\ref{LemmaExtendedWithE} and
Theorem~\ref{ThmExtKernel}}
First of all, let us focus on the term $(e^{(t_2-t_1){\cal
H}})(x_1,x_2)$ in (\ref{Kext0}). Remember that $t_2-t_1<0$, so we
can rewrite
%
\begin{eqnarray}
\frac{(-1)^{x_2} e^{4t_2}}{(-1)^{x_1} e^{4t_1}}
\bigl(e^{(t_2-t_1){\cal
H}}\bigr)(x_1,x_2) &=& \frac{(-1)^{x_2} e^{2t_2}}{(-1)^{x_1} e^{2t_1}}
I_{|x_1-x_2|}\bigl(2(t_2-t_1)\bigr)\nonumber\\
&=& \frac{e^{2t_2}}{e^{2t_1}} I_{|x_1-x_2|}\bigl(2(t_1-t_2)\bigr) \\
&=& p_{t_1-t_2}(x_1,x_2),
\nonumber
\end{eqnarray}
where we used the property $I_{n}(-2t)=(-1)^n I_n(2t)$ of the modified
Bessel function; see (\ref{eqbesselI}).

Next we derive the double integrals in (\ref{eq35ext}). The
corresponding expression of the kernel $\widetilde\BK_m$ in (\ref
{eq35c}) is a linear combination [not forgetting the conjugation
factor of the left-hand side of (\ref{eq35c})] of
%
\begin{equation}\label{eq527}
-\frac{w^{x_2-1}}{z^{x_1}}-\frac{z^{x_2}}{w^{x_1+1}}.
\end{equation}
Applying $e^{-t_1 {\cal H}}$ to the left and $e^{t_2 {\cal H}}$ to the
right, (\ref{eq527}) transforms into
%
\begin{equation}\label{eq527B}
-\frac{w^{x_2-1}}{z^{x_1}}\frac
{e^{-t_1(z+z^{-1}-2)}}{e^{-t_2(w+w^{-1}-2)}}-\frac
{z^{x_2}}{w^{x_1+1}}\frac{e^{-t_1(w+w^{-1}-2)}}{e^{-t_2(z+z^{-1}-2)}}.
\end{equation}
The multiplication by the prefactor $\frac{(-1)^{x_2}
e^{4t_2}}{(-1)^{x_1} e^{4t_1}}$ leads then to the expression in (\ref
{eq35ext}).

Next derive the terms with the sums in (\ref{Kext}) and the expression
for $C$. We act with the semigroup on the summation part of the
kernel (\ref{eqpropDecomp}), which is expressed in terms of
$A(x),B(x),C(x)$, namely
%
\begin{eqnarray}\qquad
\frac{H_{n+1}(0)}{H_{n}(0)}
\widetilde\BK_m(x_1,x_2)&=&\sum_{c\geq
0}[(-1)^{x_1}A(x_1-c)][(-1)^{x_2}A(x_2-c)]+\cdots\nonumber\\[-8pt]\\[-8pt]
&&{}+(-1)^{x_1-x_2}C(x_1-x_2)
\nonumber
\end{eqnarray}
with $A(x), B(x), C(x)$ given in Proposition~\ref{LemmaABCD}. So,
except for the term $C(x_1-x_2)$, the expression above is a sum of
decoupled terms. Therefore acting on the $(-1)^x A(\pm x-c)$'s and
$(-1)^x B(\pm x-c)$'s with $e^{-t_1 {\cal H}}$ to the left amounts (by
linearity) to acting on the $(-1)^x J_{N\pm x}(2t)$ (for some $N$
depending on the terms) and finally to acting on $1/(-z)^{\pm x}$
inside the integration. More precisely, by (\ref{S1}) with
$f(x):=1/(-z)^{\pm x}$, we have
%
\begin{eqnarray}
(e^{-t_1 {\cal H}} f)(x)&=&e^{t_1(z+z^{-1}+2)} f(x)
\quad\mbox{and}\nonumber\\[-8pt]\\[-8pt]
\quad (f e^{t_2 {\cal H}})(x)&=&
e^{-t_2(z+z^{-1}+2)}f(x),\nonumber
\end{eqnarray}
from which, by linearity,
%
\begin{eqnarray}
\sum_{y\in\Z}(e^{-t_1 {\cal H}})(x,y) (-1)^y J_{N\pm y}(2t)&=&\oint
_{\Gamma_0} \frac{dz}{2\pi\I z} \frac{e^{t(z-z^{-1})}}{z^N} (e^{-t_1
{\cal H}} f)(x)\nonumber\\
&=&\oint_{\Gamma_0} \frac{dz}{2\pi\I z} \frac{e^{t(z-z^{-1})}}{z^N
(-z)^{\pm x}} e^{t_1(z+z^{-1}+2)}\\
&=&(-1)^x e^{4 t_1} J_{N\pm x}^{(t_1)}(2t)
\nonumber
\end{eqnarray}
and
%
\begin{equation}
\sum_{y\in\Z}(-1)^y J_{N\pm y}(2t)(e^{t_2 {\cal H}})(y,x) =(-1)^x e^{-4
t_2} J_{N\pm x}^{(-t_2)}(2t).
\end{equation}
This extends to the functions $(-1)^x A(\pm x-c)$, $(-1)^x B(\pm x-c)$
because they are linear in the $(-1)^x J_{N\pm x}(2t)$ [see (\ref
{eq314})]. Explicitly, applying $e^{-t_1 {\cal H}}$ (to the left) to
$(-1)^x A(\pm x-c)$ amounts to replacing $A(\pm x-c)$ with $e^{4
t_1}A(t_1,\break\pm x-c)$. Similarly, applying $e^{t_2 {\cal H}}$ (to the
right) to $(-1)^x A(\pm x-c)$ amounts to replacing $A(\pm x-c)$ with
$e^{-4t_2}A(-t_2,\pm x-c)$. The same holds for $B$ instead of $A$. Thus
we have obtained the terms in kernel (\ref{Kext}) including $A$'s and $B$'s.

Exactly the same procedure applies for the term
$(-1)^{x_1-x_2}C(x_1-x_2)$, because it is again a linear combination of
$(-1)^{x_1-x_2} J_{N\pm x_1\mp x_2}(4t)$. Therefore acting with
$e^{-t_1 {\cal H}}$ and $e^{t_2 {\cal H}}$ as before on
$(-1)^{x_1-x_2}C(x_1-x_2)$ leads to the replacement of $C(x_1-x_2)$ by
$e^{4(t_1-t_2)}C(t_1-t_2,x_1-x_2)$. This completes the proof of
formulas (\ref{eq35ext}) and (\ref{Kext}) for the extended kernel,
thus establishing Lemma~\ref{LemmaExtendedWithE} and Theorem \ref
{ThmExtKernel}.
\end{pf*}

\section{Asymptotics}\label{SectAsymptotics}
In this section we prove the first half of Theorem \ref
{ThmExtKernelAsympt}, namely formula (\ref{ExtKernelA}). From the
discussion in Section~\ref{Model} after Theorem~\ref{MainTheorem},
concerning the interaction between the top and bottom sets of random
walks, we rescale space, time and the gap $n=2m+1$ between the two
groups of walkers, as follows:
%
\begin{equation}\label{eqScalingXY}
m=2t+\sigma t^{1/3},\qquad x_i=\xi_i t^{1/3},\qquad t_i=s_i
t^{2/3},\qquad i=1,2,
\end{equation}
where $\sigma\in\R$ is a fixed parameter modulating the ``strength of
interaction'' between the upper and lower sets of walks. To prove
formula (\ref{ExtKernelA}) of Theorem~\ref{ThmExtKernelAsympt}, we
first analyze the asymptotics of the building blocks and determine some
bounds which will be used later to show that we can exchange (by
dominated convergence) the large time limit with the integrals (sums).

Recall from (\ref{eq613}), (\ref{Q}) and (\ref{55}) the functions
$J^{(\tau)}_{x}(2t)$ and ${\cal Q}$, and the kernel $K^{(\tau
)}(0)_{k,\ell} $,
%
\begin{eqnarray} \label{63}
J^{(\tau)}_{x}(2t)&=&e^{-2\tau}\biggl(\frac{t+\tau}{t-\tau}\biggr)^{x/2}
J_{x}\bigl(2\sqrt{t^2-\tau^2}\bigr),\nonumber\\
K^{(\tau)}(0)_{k,\ell} &=& \sum_{a\geq0} J^{(\tau
)}_{a+k+1}(4t)J_{a+\ell
+1}(4t),\\
{\cal Q}(\kappa) &=& [(\Id-\chi_{\tilde\sigma} K_{\mathrm{Ai}} \chi
_{\tilde
\sigma})^{-1}\chi_{\tilde\sigma} {\Ai}](\kappa)\qquad
\mbox{with }
\tilde \sigma:=2^{2/3}\sigma,\nonumber
\end{eqnarray}
and where $\chi_{a}(x)=\Id_{[x>a]}$.
Remember from (\ref{Airy}) the definition of
%
\begin{equation}
\Ai^{(s)} (\xi):= e^{\xi s+(2/3) s^3}\Ai(\xi+ s^2)
\end{equation}
and define the Airy-like kernel
%
\begin{equation}
K^{(s)}_{\mathrm{Ai}}(\kappa,\lambda):=\int_0^{\infty} d\gamma \Ai
^{(s2^{-2/3})}(\kappa+\gamma)
\Ai(\lambda+\gamma).
\end{equation}
Also define the following step functions of $\kappa,\lambda\in\R$, for
which---by anticipation---we indicate the limits for $t\to\infty$:
%
\begin{eqnarray}\label{eq49}
{\cal J}^{(s)}_{t}(\kappa)&:=&t^{1/3} J^{(s t^{2/3})}_{[2t+\kappa
t^{1/3}+1]}(2t) \rightarrow\Ai^{(s)}(\kappa),\nonumber\\
\hspace*{18pt}{\cal K}^{(s)}_t(\kappa,\lambda
)&:=&(2t)^{1/3}K^{(st^{2/3})}(0)_{[4t+\kappa(2t)^{1/3}],[4t+\lambda
(2t)^{1/3}]} \rightarrow K^{(s)}_{\mathrm{Ai}}(\kappa,\lambda
),\nonumber\\
{\cal Q}_t(\kappa)&:=&(2t)^{1/3}Q_{[4t+\kappa(2t)^{1/3}]}\\
&=&\bigl[\bigl(\Id-\chi_{({n-4t})/{(2t)^{1/3}}} {\cal K}^{(0)}_t
\chi
_{({n-4t})/{(2t)^{1/3}}}\bigr)^{-1} \chi_{({n-4t})/{(2t)^{1/3}}}
{\cal J}^{(0)}_{2t}\bigr](\kappa) \nonumber\\
&\rightarrow&{\cal
Q}(\kappa).\nonumber
\end{eqnarray}

\begin{lemma}\label{LemBoundsJandK}
We have the following bounds and limits for ${\cal J}^{(s)}_{t}$ and
${\cal K}^{(s)}_t$ defined in (\ref{eq49}). There exists a $t_0>0$
such that uniformly for $t\geq t_0$ it holds that
%
\begin{equation}\label{eq416}
\bigl|{\cal J}^{(s)}_{t}(\kappa)\bigr|
\leq c_1 \min\{1,e^{-\theta\kappa}\}
,\qquad
\bigl|{\cal K}^{(s)}_t(\kappa,\lambda)\bigr|\leq c_2 e^{-\theta(\kappa
+\lambda)}
\end{equation}
for any fixed $\theta>0$ and some constants $c_1,c_2>0$ (independent
of $t$).
Moreover
%
\begin{equation}\label{L0}
\lim_{t\to\infty} {\cal J}^{(s)}_{t}(\kappa)=
\Ai^{(s)}(\kappa),\qquad \lim_{t\to\infty} {\cal
K}^{(s)}_{t}(\kappa
,\lambda)=K_{\mathrm{Ai}}^{(s)}(\kappa,\lambda)
\end{equation}
uniformly for $\kappa, \lambda$, and $s$ in a bounded set.
\end{lemma}
\begin{pf}
We have
%
\begin{eqnarray}\label{67}
{\cal J}^{(s)}_{t}(\xi)&=& t^{1/3} J^{(s t^{2/3})}_{[2t+\xi t^{1/3}]}(2t)
\nonumber\\
&=& e^{-2s t^{2/3}}\biggl(\frac{1+s t^{-1/3}}{1-s t^{-1/3}}
\biggr)^{
(1/2)[2t+ \xi t^{1/3}]} \\
&&{}\times t^{1/3} J_{[2t+\xi
t^{1/3}]}\bigl(2t\sqrt{1-s^2t^{-2/3}}\,\bigr).\nonumber
\end{eqnarray}
The prefactor can be estimated for $t\to\infty$, as follows:
%
\begin{equation}\label{eq626}
e^{-2st^{2/3}}\biggl(\frac{1+s t^{-1/3}}{1-s t^{-1/3}}
\biggr)^{t+
(1/2) \xi t^{1/3}} = e^{\xi s+(2/3) s^3}\bigl(1+\Or(t^{-1/3})\bigr),
\end{equation}
where the $\Or(t^{-1/3})$ is uniform for $s$ in a bounded set and
independent of~$\xi$.
Therefore, for $t$ large enough, $|(\ref{eq626})|\leq\exp(2|\xi s|
+|s^3|)$. Concerning the remaining part of (\ref{67}), using (\ref
{eqA1}), one readily obtains
%
\begin{equation}
\lim_{t\to\infty}t^{1/3} J_{[2t+\xi t^{1/3}]}\bigl(2t\sqrt
{1-s^2t^{-2/3}}\,\bigr) = \Ai(\xi+s^2).
\end{equation}
Regarding the bound, for $s$ in a bounded set, if $t$ is large enough
it follows from bound (\ref{eqA3}) that
%
\begin{equation}
\bigl|t^{1/3} J_{[2t+\xi t^{1/3}]}\bigl(2t\sqrt{1-s^2t^{-2/3}}\,
\bigr)\bigr|
\end{equation}
is first of all uniformly bounded and for large $\xi$ it decays as
$e^{-\beta\xi}$ for any choice of $\beta>0$. The statements in the
first parts of (\ref{eq416}) and (\ref{L0}) then follow if we choose
$\beta$ satisfying $\beta\geq\theta+2|s|$ for any $s$ in the given
bounded set.

To compute the limit of ${\cal K}^{(s)}_t$, one uses definition (\ref
{eq49}) and formula (\ref{63}) for $K^{(st^{2/3})}(0)$, but with $J$
replaced by ${\cal J}$ in the last equality below,
%
\begin{eqnarray}\label{eq417}\quad
{\cal K}^{(s)}_t(\kappa,\lambda) &=& (2t)^{1/3}K^{(st^{2/3})}(0)
_{[4t+\kappa(2t)^{1/3}],[4t+\lambda(2t)^{1/3}]} \nonumber\\
&=& (2t)^{1/3}\sum_{\gamma\in(2t)^{-1/3}\N}
J^{(s2^{-2/3} (2t)^{2/3})}_{[4t+(\gamma+\kappa)(2t)^{1/3}]}(4t)
J_{[4t+(\gamma+\lambda) (2t)^{1/3}]}(4t)\\
&=& \frac{1}{(2t)^{1/3}}\sum_{\gamma\in(2t)^{-1/3} \N} {\cal
J}^{(s2^{-2/3})}_{2t}(\kappa+\gamma){\cal J}^{(0)}_{2t}(\lambda
+\gamma).
\nonumber
\end{eqnarray}
From this, using bound (\ref{eq416}) on $\cal J$, we obtain
%
\begin{equation}
\bigl|{\cal K}^{(s)}_t(\kappa,\lambda)\bigr|\leq c_1^2 e^{-\theta(\kappa
+\lambda
)} \frac{1}{(2t)^{1/3}}\sum_{\gamma\in(2t)^{-1/3} \N} e^{-2\theta
\gamma
}\leq c_2 e^{-\theta(\kappa+\lambda)}
\end{equation}
for $t\geq t_0=1$ and some $c_2>0$, uniformly for $s$ in a bounded set.

We can think of the sum in (\ref{eq417}) as an integral of piece-wise
constant functions. The first bound in (\ref{eq416}) allows us to use
dominated convergence to exchange the limit and the integral. Then,
$\lim_{t\to\infty} {\cal J}^{(s)}_{t}(\kappa)=\Ai^{(s)}(\kappa)$ yields
%
\begin{equation}\quad
\lim_{t\to\infty} {\cal K}^{(s)}_t(\kappa,\lambda) =\int
_0^{\infty}
d\gamma \Ai^{(2^{-2/3}s)}(\kappa+\gamma) \Ai(\lambda+\gamma
)=K^{(s)}_{\mathrm{Ai}}(\kappa,\lambda).
\end{equation}
\upqed\end{pf}
%
\begin{lemma}\label{lemMt}
Set $\tilde\sigma_t:=\frac{n-4t}{(2t)^{1/3}}$ and define the operator
${\cal M}_t=\chi_{\tilde\sigma_t} {\cal K}^{(0)}_t \chi_{\tilde
\sigma
_t}$, appearing in the definition (\ref{eq49}) of ${\cal Q}_t$. Then,
uniformly for $t\geq t_0$, we have for the operator-norm\footnote{Where
$\|A\|=\sup_{|f|\leq1} |A f|$.} \mbox{$\| \cdot\|$},
%
\begin{equation}
\|{\cal M}_t\|<1,
\end{equation}
which implies that
%
\begin{equation}
\|(\Id-{\cal M}_t)^{-1}\|\leq(1-\|{\cal M}_t\|)^{-1}\leq C<\infty
\end{equation}
for some finite constant $C$ independent of $t$.
\end{lemma}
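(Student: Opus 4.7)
\textit{Plan.} The natural strategy is to compare $\mathcal{M}_t$ with its $t\to\infty$ limit $\mathcal{M}_\infty:=\chi_{\tilde\sigma}K_{\mathrm{Ai}}\chi_{\tilde\sigma}$, establish that the limit operator has operator norm strictly less than $1$, and then upgrade pointwise kernel convergence (from Lemma~\ref{LemBoundsJandK}) to a sufficiently strong operator-topology convergence so that $\|\mathcal{M}_t\|$ inherits a bound bounded away from $1$ for all large $t$. Throughout I view the discrete operator $\chi_n K(0)\chi_n$ as an integral operator on $L^2(\mathbb{R})$ via the natural isometric embedding that sends a sequence $(a_k)_{k\ge n}$ to the piecewise constant function $(2t)^{1/6}a_{[4t+\kappa(2t)^{1/3}]}$ on $(\tilde\sigma_t,\infty)$; under this identification the kernel of $\mathcal{M}_t$ is the piecewise constant function $\mathcal{K}^{(0)}_t(\kappa,\lambda)$ of~(\ref{eq4.9}), and operator norms agree.

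\textit{Step 1: the limit has norm strictly less than $1$.} The operator $\mathcal{M}_\infty$ is self-adjoint, positive, and trace class on $L^2((\tilde\sigma,\infty))$, because $K_{\mathrm{Ai}}$ is the kernel of the orthogonal projection onto the positive spectral subspace of the Airy operator (so $K_{\mathrm{Ai}}^2=K_{\mathrm{Ai}}$ and $K_{\mathrm{Ai}}\le\Id$), which gives $\langle f,\mathcal{M}_\infty f\rangle=\|K_{\mathrm{Ai}}\chi_{\tilde\sigma}f\|^2\le\|f\|^2$. Hence $\mathrm{spec}(\mathcal{M}_\infty)\subset[0,1]$. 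Moreover, the Fredholm determinant $\det(\Id-\mathcal{M}_\infty)=F_2(\tilde\sigma)$ is a value of the (continuous, nondegenerate) Tracy--Widom distribution, hence $F_2(\tilde\sigma)>0$; by the product formula $\det(\Id-\mathcal{M}_\infty)=\prod_i(1-\lambda_i)$ over the nonzero eigenvalues, the point $1$ is not an eigenvalue. Since $\mathcal{M}_\infty$ is compact and self-adjoint, its nonzero spectrum is discrete and accumulates only at $0$, so $\rho_\infty:=\|\mathcal{M}_\infty\|=\max\mathrm{spec}(\mathcal{M}_\infty)<1$.

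\textit{Step 2: convergence in Hilbert--Schmidt norm.} From Lemma~\ref{LemBoundsJandK} I have the pointwise limit $\mathcal{K}^{(0)}_t(\kappa,\lambda)\to K_{\mathrm{Ai}}(\kappa,\lambda)$ uniformly on compact sets, together with the uniform bound $|\mathcal{K}^{(0)}_t(\kappa,\lambda)|\le c_2 e^{-\theta(\kappa+\lambda)}$ valid for all $t\ge t_0$; the same exponential bound holds for $K_{\mathrm{Ai}}$. Combined with $\tilde\sigma_t\to\tilde\sigma$, this gives by dominated convergence
\[
\|\mathcal{M}_t-\mathcal{M}_\infty\|_{\mathrm{HS}}^2 = \int\!\!\int \bigl|\mathcal{K}^{(0)}_t(\kappa,\lambda)\chi_{\tilde\sigma_t}^{\otimes 2}-K_{\mathrm{Ai}}(\kappa,\lambda)\chi_{\tilde\sigma}^{\otimes 2}\bigr|^2 d\kappa\, d\lambda \longrightarrow 0,
\]
where the uniform exponential envelope dominates the integrand on $(\min\{\tilde\sigma_t,\tilde\sigma\},\infty)^2$. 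A priori the piecewise constant character of $\mathcal{K}^{(0)}_t$ makes $\mathcal{M}_t$ land on a different (finer) subspace for each $t$, but the HS integral is insensitive to this.

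\textit{Step 3: conclusion.} Since operator norm is dominated by HS norm, Step~2 yields $\|\mathcal{M}_t\|\to\|\mathcal{M}_\infty\|=\rho_\infty<1$. Fix $\varepsilon>0$ with $\rho_\infty+\varepsilon<1$; then for $t$ larger than some $t_0$ one has $\|\mathcal{M}_t\|\le\rho_\infty+\varepsilon<1$, and the Neumann series gives
\[
\|(\Id-\mathcal{M}_t)^{-1}\| \le \sum_{k\ge 0}\|\mathcal{M}_t\|^k \le \frac{1}{1-\rho_\infty-\varepsilon}=:C<\infty,
\]
uniformly in $t\ge t_0$, as claimed. The main obstacle is Step~2: the delicate point is that $\mathcal{M}_t$ is naturally a discrete (lattice) operator with a moving cutoff $\tilde\sigma_t$, while the limit lives on the continuum with cutoff $\tilde\sigma$, and one must ensure that the uniform exponential decay from Lemma~\ref{LemBoundsJandK} is strong enough to sweep all of this into a genuine HS convergence---once that is secured, the rest is soft spectral theory and the strict Fredholm positivity $F_2(\tilde\sigma)>0$.
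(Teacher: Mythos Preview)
Your proof is correct and follows essentially the same route as the paper: pointwise kernel convergence plus the uniform exponential envelope from Lemma~\ref{LemBoundsJandK} give Hilbert--Schmidt (hence operator-norm) convergence $\mathcal{M}_t\to\mathcal{M}_\infty=\chi_{\tilde\sigma}K_{\mathrm{Ai}}\chi_{\tilde\sigma}$, and then $\|\mathcal{M}_\infty\|<1$ yields the uniform bound via the triangle inequality and the Neumann series. The only difference is that the paper simply cites~\cite{TW94} for $\|\mathcal{M}_\infty\|<1$, whereas you supply a self-contained argument via the projection property of $K_{\mathrm{Ai}}$ and the strict positivity $F_2(\tilde\sigma)>0$.
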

\begin{pf}
By Lemma~\ref{LemBoundsJandK} and the fact that $\tilde\sigma_t\to
\tilde
\sigma$ as $t\to\infty$, it follows that
%
\begin{equation}\label{eq422}
\lim_{t\to\infty}{\cal M}_t=\chi_{\tilde\sigma} K_{\mathrm{Ai}} \chi
_{\tilde
\sigma}=:{\cal M}
\end{equation}
pointwise. Moreover,
%
\begin{eqnarray}\label{eq421}
{\lim_{t\to\infty}}\|{\cal M}_t-{\cal M}\|^2&\leq&{\lim_{t\to\infty
}}\|
{\cal M}_t-{\cal M}\|_{\mathrm{HS}}^2\nonumber\\
&=&\lim_{t\to\infty}\int d\kappa\,
d\lambda
|{\cal M}_t(\kappa,\lambda)-{\cal M}(\kappa,\lambda)
|^2\\
&=&\int d\kappa \,d\lambda{\lim_{t\to\infty}}|{\cal M}_t(\kappa
,\lambda
)-{\cal
M}(\kappa,\lambda)|^2=0,\nonumber
\end{eqnarray}
where we use by Lemma~\ref{LemBoundsJandK} dominated convergence to
exchange the limit and the integral together with (\ref{eq422}). It is
known that $\lambda_{\max}=\|{\cal M}\|<1$ for any fixed $\tilde
\sigma$
(see, e.g.,~\cite{TW94}). This, together with (\ref{eq421}), implies that
%
\begin{equation}
\|{\cal M}_t\|\leq\|{\cal M}\|+\|{\cal M}_t-{\cal M}\|<1
\end{equation}
for $t$ large enough.
\end{pf}
%
\begin{lemma}\label{LemBoundsQ}
Consider ${\cal Q}_t$ as defined in (\ref{eq49}). There exists a
$t_0>0$ such that, uniformly for $t\geq t_0$, it holds
%
\begin{equation}\label{eq419}
|{\cal Q}_{t}(\kappa)|\leq c_3 e^{-\theta\kappa}
\end{equation}
for any $\theta>0$ and some constant $c_3>0$ (independent of $t$). Moreover,
%
\begin{equation}\label{Qlim}
\lim_{t\to\infty} {\cal Q}_{t}(\kappa)={\cal Q}(\kappa)
\end{equation}
uniformly for $\kappa$ in a bounded set.
\end{lemma}
\begin{pf}
For the sake of this proof, set ${\cal J}_t:={\cal J}^{(0)}_t$ and
${\cal K}_t:={\cal K}^{(0)}_t$. First of all we prove that ${\cal
Q}_t(\kappa)$ is uniformly bounded for $t\geq t_0$. Recall that ${\cal
Q}_t(\kappa)=[(\Id-{\cal M}_t)^{-1} \chi_{\tilde\sigma_t}
J_{2t}](\kappa
)$. Since $(\Id-{\cal M}_t)^{-1}$ exists, we can use the identity
%
\begin{equation}
(\Id-{\cal M}_t)^{-1} = \Id+ \chi_{\tilde\sigma_t}{\cal K}_t \chi
_{\tilde\sigma_t}(\Id-{\cal M}_t)^{-1},
\end{equation}
which upon integrating from $\tilde\sigma$ to $\infty$ against the
function ${\cal J}_{2t}$ gives
%
\begin{equation}
{\cal Q}_t(\kappa)=\chi_{\tilde\sigma_t} {\cal J}_{2t}(\kappa) +
\int
_{\tilde\sigma_t}^\infty d\lambda\, {\cal K}_t(\kappa,\lambda)
[(\Id
-{\cal M}_t)^{-1}\chi_{\tilde\sigma_t} {\cal J}_{2t}](\lambda).
\end{equation}
Thus,
%
\begin{equation}\qquad
|{\cal Q}_t(\kappa)|\leq|\chi_{\tilde\sigma_t} {\cal
J}_{2t}(\kappa
)|+\int_{\tilde\sigma_t}^\infty d\lambda |{\cal K}_t(\kappa
,\lambda)| |[(\Id-{\cal M}_t)^{-1}\chi_{\tilde\sigma
_t} {\cal
J}_{2t}](\lambda)|.
\end{equation}
But
%
\begin{equation}
|[(\Id-{\cal M}_t)^{-1}\chi_{\tilde\sigma_t} {\cal J}_{2t}](\lambda
)|\leq\|(\Id-{\cal M}_t)^{-1}\| |{\cal J}_{2t}|_\infty
\end{equation}
is uniformly bounded for $t\geq t_0$ (by Lemmas~\ref{LemBoundsJandK} and
\ref{lemMt}). Then, using the bound for ${\cal K}_t$ and ${\cal
J}^{(s)}_{t}(\kappa)$ in (\ref{eq416}) we obtain the bound
(\ref{eq419}).\vadjust{\goodbreak}

To prove (\ref{Qlim}), we show that
%
\begin{equation}\label{eq4C}
|{\cal Q}_t-{\cal Q}|_\infty= \sup_{\kappa} |{\cal Q}_t(\kappa
)-{\cal
Q}(\kappa)|\to0
\end{equation}
as $t\to\infty$. We have
%
\begin{eqnarray}\label{eq431}
|{\cal Q}_t-{\cal Q}|_\infty&=&|(\Id-{\cal M}_t)^{-1} \chi
_{\tilde
\sigma_t}{\cal J}_{2t}-(\Id-{\cal M})^{-1} {\chi_{\tilde\sigma}\Ai}
|_\infty\nonumber\\
&\leq&| [(\Id-{\cal M}_t)^{-1} -(\Id-{\cal M})^{-1}
]\chi_{\tilde\sigma}{\cal J}_{2t}|_\infty\\
&&{}+|(\Id-{\cal M})^{-1}[\chi_{\tilde\sigma}{\cal
J}_{2t}-\chi_{\tilde\sigma}{\Ai}]|_\infty+\Or
(t^{-1/3}),\nonumber
\end{eqnarray}
where the correction term $\Or(t^{-1/3})$ comes from the fact that the
difference between $\tilde\sigma_t$ and $\tilde\sigma$ is not larger
than $(2t)^{-1/3}$.
Then,
%
\begin{eqnarray}
\mbox{(\ref{eq431})}&\leq&\|(\Id-{\cal M}_t)^{-1} -(\Id-{\cal
M})^{-1}\| |\chi_{\tilde\sigma}{\cal J}_{2t}
|_\infty
\nonumber\\[-8pt]\\[-8pt]
&&{}+\|(\Id-{\cal M})^{-1}\| |\chi_{\tilde\sigma
}{\cal
J}_{2t}-\chi_{\tilde\sigma}{\Ai}|_\infty
+\Or(t^{-1/3}).
\nonumber
\end{eqnarray}
The first term goes to zero as $t\to\infty$. Indeed, $|\chi
_{\tilde
\sigma}{\cal J}_{2t}|_\infty\leq C<\infty$ by Lem\-ma~\ref
{LemBoundsJandK}, and, using the identity
%
\begin{equation}\quad
(\Id-{\cal M}_t)^{-1} -(\Id-{\cal M})^{-1} =(\Id-{\cal
M}_t)^{-1}
[{\cal M}_t-{\cal M}](\Id-{\cal M})^{-1}
\end{equation}
together with the fact that $\|{\cal M}_t\|<1$, $\|{\cal M}\|<1$, and
$\|{\cal M}-{\cal M}_t\|\to0$ in the $t\to\infty$ limit [see
Lemma~\ref{lemMt} and (\ref{eq421})]; so one has $\|(\Id
-{\cal
M}_t)^{-1} -(\Id-{\cal M})^{-1}\|\to0$. The second term goes to
zero as well, since $\|(\Id-{\cal M})^{-1}\|$ is bounded
and, by Lemma~\ref{LemBoundsJandK},
$|\chi_{\tilde\sigma}{\cal J}_{2t}-\chi_{\tilde\sigma}{\Ai}
|_\infty\to0$.
\end{pf}
\begin{pf*}{Proof of Theorem~\ref{ThmExtKernelAsympt}, formula
(\ref{ExtKernelA})}
We now define new functions ${\cal A}_t(s,\xi)$, ${\cal B}_t(s,\xi)$,
${\cal C}_t(s,\xi)$, which are rescaled versions of $A(\tau,x)$,
$B(\tau
,x)$, $C(\tau,x)$ [see formula (\ref{53})] under the scaling (\ref
{eqScalingXY}):
%
\begin{eqnarray}
{\cal A}_t(s,\xi)&:=&t^{1/3} A(st^{2/3},\xi t^{1/3}),\nonumber\\
{\cal B}_t(s,\xi)&:=&t^{1/3} B(st^{2/3},\xi t^{1/3}),\\
{\cal C}_t(s,\xi)&:=& t^{1/3} C(st^{2/3},\xi t^{1/3}).\nonumber
\end{eqnarray}
As $t\to\infty$, these functions will converge to ${\cal A}(s,\xi),
{\cal B}(s,\xi), {\cal C}(s,\xi)$ of (\ref{eqSpaceTimeAB}) and (\ref
{eqSpaceTimeC}).

One then recognizes in these expressions functions (\ref{eq49}), thus yielding
%
\begin{eqnarray}\qquad
{\cal A}_t(s,\xi)&=&{\cal J}^{(s)}_{t}(\sigma-\xi)\nonumber\\
&&{}+\frac{1}{(2t)^{1/3}}\sum_{\kappa\in I_{n,t}}\frac
{1}{(2t)^{1/3}}\sum
_{\alpha\in(2t)^{-1/3} \N}
{\cal Q}_t(\kappa) {\cal J}^{(0)}_{2t}(\kappa+\alpha)\nonumber\\
&&\hspace*{151pt}{}\times {\cal
J}^{(s)}_{t}(2^{1/3}\alpha+\sigma-\xi),\\
{\cal B}_t(s,\xi)&=&\frac{1}{(2t)^{1/3}}\sum_{\kappa\in
I_{n,t}}{\cal
Q}_t(\kappa) {\cal
J}^{(s)}_t(\xi-\sigma+2^{1/3}\kappa-t^{-1/3}),\nonumber\\[-1pt]
{\cal C}_t(s,\xi)&=&\frac{2^{-1/3}}{(2t)^{1/3}}\sum_{\kappa\in
I_{n,t}}{\cal Q}_t(\kappa)
\bigl({\cal J}^{(2^{-2/3}s)}_{2t}(\kappa-2^{-1/3}\xi)\nonumber\\[-1pt]
&&\hspace*{89pt}{} + {\cal
J}^{(2^{-2/3}s)}_{2t}(\kappa+2^{-1/3}\xi)\bigr)\nonumber\\[-1pt]
&&{}+\frac{2^{-1/3}}{(2t)^{2/3}}\sum_{\kappa,\lambda\in I_{n,t}}{\cal
Q}_t(\kappa) {\cal Q}_t(\lambda)
\bigl({\cal K}^{(s)}_{t}(\kappa-2^{-1/3}\xi,\lambda)\nonumber\\[-1pt]
&&\hspace*{135pt}{}+{\cal
K}^{(s)}_{t}(\kappa+2^{-1/3}\xi,\lambda)\bigr).\nonumber
\end{eqnarray}
For instance, the function $J_{k+1+a}(4t)$ in $A(\tau,x)$ becomes, upon
setting $a=\alpha(2t)^{1/3}$ and $\kappa:=(2t)^{-1/3}(k-4t)$,
%
\begin{equation}
J_{k+1+a}(4t)=J_{[4t+(\kappa+\alpha
)(2t)^{1/3}+1]}(4t)=(2t)^{-1/3}{\cal
J}_{2t}^{(0)}(\kappa+\alpha).
\end{equation}
Notice that the sum over $k\geq n$ in the expressions (\ref{53})
becomes a sum over $\kappa\in I_{n,t}$ with
%
\begin{equation}
I_{n,t}:=(2t)^{-1/3}(\{n,n+1,\ldots\}-4t),
\end{equation}
so that the condition $k\geq n=2m+1=4t+2 \sigma t^{1/3}+1$ translates into
$\kappa=(2t)^{-1/3}(k-4t)> 2^{2/3}\sigma=\tilde\sigma$.
Setting the summation variable $c=\gamma t^{1/3}$, rewrite the
kernel (\ref{Kext}) in Theorem~\ref{ThmExtKernel}, with the scaling
(\ref{eqScalingXY})
%
\begin{eqnarray}\label{eq47}
&&\frac{(-1)^{x_2} e^{4t_2}}{(-1)^{x_1} e^{4t_1}}\frac
{H_{n+1}(0)}{H_{n}(0)} \widetilde\BK_m^{\mathrm{ext}}(t_1,x_1;t_2,x_2)\nonumber\\[-1pt]
&&\qquad = -\Id_{[s_1>s_2]} \frac{H_{n+1}(0)}{H_{n}(0)} t^{1/3}
p_{(s_1-s_2)t^{2/3}}(\xi_1t^{1/3},\xi_2t^{1/3})+ {\cal
C}_t(s_1-s_2,\xi
_1-\xi_2)\nonumber\\[-1pt]
&&\qquad\quad{}+\frac{1}{t^{1/3}}
\sum_{\gamma\in t^{-1/3}\N}
\bigl({\cal A}_t(s_1,\xi_1-\gamma){\cal A}_t(-s_2,\xi_2-\gamma)\nonumber\\[-1pt]
&&\hspace*{105pt}{}+{\cal A}_t(s_1,-\xi_1-\gamma){\cal
A}_t(-s_2,-\xi_2-\gamma)\nonumber\\[-1pt]
&&\hspace*{105pt}{}- {\cal A}_t(s_1,\xi_1-\gamma){\cal
B}_t(-s_2,\xi_2-\gamma)\nonumber\\[-8pt]\\[-8pt]
&&\hspace*{105pt}{}-{\cal
A}_t(s_1,-\xi_1-\gamma){\cal
B}_t(-s_2,-\xi_2-\gamma)\nonumber\\[-1pt]
&&\hspace*{105pt}{}-{\cal B}_t(s_1,\xi_1-\gamma){\cal A}_t(-s_2,\xi_2-\gamma)\nonumber\\[-1pt]
&&\hspace*{107pt}{}-{\cal
B}_t(s_1,-\xi_1-\gamma){\cal A}_t(-s_2,-\xi_2-\gamma)\bigr)\nonumber\\[-1pt]
&&\qquad\quad{}-\frac{1}{t^{1/3}}\sum_{\gamma\in t^{-1/3}\Z_-} \bigl({\cal
B}_t(s_1,\xi
_1-\gamma){\cal B}_t(-s_2,\xi_2-\gamma)\nonumber\\[-1pt]
&&\hspace*{110pt}{}+{\cal B}_t(s_1,-\xi
_1-\gamma
){\cal
B}_t(-s_2,-\xi_2-\gamma)\bigr).\nonumber
\end{eqnarray}
In view of (\ref{eq313}) we have $\lim_{t\to\infty
}H_{n+1}(0)/H_n(0)=1$ and in the $t\to\infty$ limit,
$(n-4t)/(2t)^{1/3}\to\tilde\sigma$. Notice that the sums with the
preceding volume element, $1/t^{1/3}$ or $1/(2t)^{1/3}$ depending on
the case, can be just thought of as integrals with the integrand being
piece-wise constant. What follows holds uniformly in $t$ for $t\geq
t_0$ where $t_0$ is a fixed constant.
The exponential bounds of Lemmas~\ref{LemBoundsJandK} and \ref
{LemBoundsQ} imply that for any $\theta>0$ there exists some $c>0$ (the
constant $c$ depends on $\sigma$, which is, however, fixed)
%
\begin{equation}\label{eq412}
|{\cal A}_t(s,-\xi)|\leq c e^{-\theta\xi} \quad\mbox{and}\quad \lim
_{t\to\infty} {\cal A}_t(s,\xi)={\cal A}(s,\xi).
\end{equation}
Moreover ${\cal A}_t(s,\xi)$ tends to ${\cal A}(s,\xi)$ uniformly on
bounded sets, by uniform convergence on bounded sets and dominated
convergence of the integrand.
Using the exponential bound of Lemma~\ref{LemBoundsQ} and the fact that
${\cal J}_t$ is just bounded, we obtain similarly
%
\begin{equation}\label{eq413}
|{\cal B}_t(s,\xi)|\leq c\min\{1,e^{-\theta\xi}\}
\quad\mbox{and}\quad
\lim_{t\to\infty} {\cal B}_t(s,\xi)={\cal B}(s,\xi).
\end{equation}
Finally, the exponential bounds of Lemmas~\ref{LemBoundsJandK} and
\ref{LemBoundsQ} imply that
%
\begin{equation}\label{eq414}
|{\cal C}_t(s,\xi)|\leq c \quad\mbox{and}\quad \lim_{t\to\infty}
{\cal
C}_t(s,\xi)={\cal C}(s,\xi),
\end{equation}
where the last limit holds uniformly for $\xi$ and $s$ in bounded sets.

Using the bounds in (\ref{eq412}), (\ref{eq413}) and (\ref{eq414}),
one concludes that the integrands (summands) in (\ref{eq47}) are
uniformly bounded by functions which are integrable (summable). This is
uniform for $\xi,\eta$ and $s$ in a bounded set. Then, by dominated
convergence, we can take the limit inside, thus yielding (\ref
{ExtKernelA}). Finally, the Gaussian term in (\ref{ExtKernelA}) comes
from the known asymptotic (for $s>0$),
%
\begin{equation}
\lim_{t\to\infty} t^{1/3}e^{-2st^{2/3}} I_{\xi t^{1/3}}(2s t^{2/3})=
\frac{1}{\sqrt{4\pi s}}\exp\bigl(-\xi^2/(4s)\bigr),
\end{equation}
which can be derived from a saddle point argument.
\end{pf*}

\section{Integral representation of the Tacnode kernel}\label
{IntegralReprKernel}
To derive the double integral representation (\ref{ExtKernelB}) of
Theorem~\ref{ThmExtKernelAsympt} there are two ways. One can use the
Airy functions integral representations (\ref{A4k}) together with
%
\begin{equation}
\int_0^\infty d\lambda\, e^{-\lambda(u-v)}=\frac{1}{u-v}
\qquad\mbox{whenever }\Re(u-v)>0.
\end{equation}
This is quite straightforward, but it requires several computations
which are not reported here.

The second is to do a steepest descent analysis starting from
formula (\ref{eq35ext}) in Lemma~\ref{MainTheorem}. Here we merely
indicate a sketch of the saddle point argument (not a proof). The
limits of the other terms have been discussed in the previous section.
The main task\vadjust{\goodbreak} here is to take the limit of this double integral, when
$t\to\infty$, with the scaling
%
\begin{eqnarray}\label{sc}
n&=&2m+1,\qquad m = 2t+ \sigma t^{1/3}, \nonumber\\
z&=&-1+\zeta t^{-1/3} \quad\mbox{and}\quad w=-1+\omega t^{-1/3},\\
x_i&=&\xi_it^{1/3} \quad\mbox{and}\quad t_i=s_it^{2/3},\qquad
i=1,2.\nonumber
\end{eqnarray}

Also recall the definitions (\ref{Laplace}) of the Laplace transforms
$\hat{\cal Q}(\zeta)$ and $\hat{\cal P}(\zeta)$, as well as the
function ${\cal C}$ in (\ref{eqSpaceTimeC}). The reader is reminded of
the steepest descent discussion in Section~\ref{shape1}.
For taking the limit of the extended kernel, we need the following lemma.
%
\begin{lemma} \label{Lemma71} Given the scaling (\ref{sc}) above, the
following limits hold:
%
\begin{equation}
\lim_{t\to\infty}e^{t(z-z^{-1})} (-z)^m=e^{{\zeta^3}/3-\sigma
\zeta}
\end{equation}
and
%
\begin{eqnarray}
\lim_{t\to\infty}T_n(z^{-1})&=&e^{-2\sigma\zeta} \hat{\cal
Q}(\zeta
),\qquad
\lim_{t\to\infty}T_n(w) =e^{2\sigma\omega} \hat{\cal Q}(-\omega
),\nonumber\\[-8pt]\\[-8pt]
\lim_{t\to\infty}S_n(z^{-1})&=&\hat{\cal P}(\zeta),\qquad
\lim
_{t\to\infty}S_n(w) =\hat{\cal P}(-\omega),
\nonumber
\end{eqnarray}
where $\hat{\cal P}$ and $\hat{\cal Q}$ are the Laplace transforms
defined in (\ref{Laplace}). One also checks
%
\begin{equation}\label{A3}
\lim_{t\to\infty}\frac{(-w)^{x_2-1}}{(-z)^{x_1}}=\frac{e^{\xi
_1\zeta
}}{e^{\xi_2\omega}} \quad\mbox{and}\quad
\lim_{t\to\infty}e^{-t_i(z+z^{-1}+2)} = e^{s_i\zeta^2}.
\end{equation}
\end{lemma}
\begin{pf}
Letting $t\rightarrow\infty$, setting $n=2m+1$, $m=2t+\sigma t^{1/3}$,
the critical point will be at $z,w=-1$, and thus the leading
contribution will come from the neighborhood of the critical points,
which suggests the scalings in $z$ and $w$ above. The Taylor expansion
of the $F$-function (\ref{Taylor}) gives
%
\begin{eqnarray}\label{A3k}
e^{t(z-z^{-1})} (-z)^m &=& e^{t(z-z^{-1})+m\log(-z)} =e^{tF(z)+\sigma
t^{1/3}\log(-z)}\nonumber\\
&=& e^{tF(-1+\zeta t^{-1/3})+\sigma t^{1/3}\log(1-\zeta t^{-1/3})}\\
&=&
e^{{\zeta^3}/3-\sigma\zeta}\bigl(1+\Or(t^{-1/3})\bigr).
\nonumber
\end{eqnarray}
Setting in addition the scaling for $t_i $ and $x_i$, one finds by
Taylor expanding about $z=-1$ and $w=-1$ the limits (\ref{A3}).
Introducing the running variable $k=4t+\kappa(2t)^{1/3}$, one gets
%
\begin{eqnarray}\label{A4}
\lim_{t\to\infty}T_n(z^{-1})&=&\lim_{t\to\infty}\sum_{k\geq n}
\frac
{Q_k}{(-z)^{k-n+1}}\nonumber\\
&=& \lim_{t\to\infty}\sum_{k\geq n} Q_k
e^{-(k-n+1)\log(-z)} \nonumber\\
&=&\lim_{t\to\infty}(2t)^{-1/3} \sum_{\kappa\geq\tilde\sigma
+(2t)^{-1/3}} (2t)^{1/3} {Q_{4t+\kappa(2t)^{1/3} }}\\
&&\hspace*{109pt}{}\times e^{-(\kappa-\tilde
\sigma)(2t)^{1/3}(-\zeta t^{-1/3})}\nonumber\\
&=&\int_{\kappa\geq\tilde\sigma} d\kappa\, {\cal Q}(\kappa)
e^{(\kappa-\tilde\sigma)\zeta2^{1/3}}
= e^{-2\sigma\zeta} \hat{\cal Q}(\zeta)
\nonumber
\end{eqnarray}
and similarly
%
\begin{equation}\label{A4b}
\lim_{t\to\infty}T_n(w) =e^{2\sigma\omega} \int_{\kappa\geq
\tilde
\sigma}
d\kappa \,{\cal Q}(\kappa) e^{-\kappa\omega2^{1/3}} = e^{2\sigma
\omega} \hat{\cal Q}(-\omega) .
\end{equation}
The limit of the expression $S_n$, as in (\ref{Rbar}), involves $\bar
h_k$, as in (\ref{eqLemBessel}). Using the formula (\ref{eqLemBessel})
for $\bar h_k(z^{-1})$ in terms of Bessel functions and Lemma \ref
{LemBoundsJandK},
one checks, introducing the running variable $a=\mu(2t)^{1/3}$,
%
\begin{eqnarray}\quad
\lim_{t\to\infty}\bar{h}_k(z^{-1})&=& -\lim_{t\to\infty}\sum
_{a\geq0}
(-z)^a J_{k+a+1}(4t)\nonumber\\
&=&-\lim_{t\to\infty}(2t)^{-1/3}\sum_{\kappa\geq\tilde\sigma
+(2t)^{-1/3}} e^{\mu(2t)^{1/3}\log(1-\zeta t^{-1/3})}{\cal
J}_{2t}^{(0)}(\kappa+\mu)\\
&=&-\int_0^\infty d\mu\, e^{-\mu\zeta2^{1/3}}
\Ai(\kappa+\mu).\nonumber
\end{eqnarray}
Therefore, one finds
%
\begin{eqnarray}\label{A5}
\lim_{t\to\infty}S_n(z^{-1})&=&\lim_{t\to\infty}\la Q,\chi_n \bar
h(z^{-1})\ra=\lim_{t\to\infty}\sum_{k\geq n}Q_k \bar
h_k(z^{-1})\nonumber\\
&=& \lim_{t\to\infty}(2t)^{-1/3}\sum_{\kappa\geq\tilde\sigma
+(2t)^{-1/3}} (2t)^{1/3} {\cal Q}_t(\kappa) \bar h_k(z^{-1}) \\
&=&- \int_{\kappa\geq\tilde\sigma} d\kappa\, {\cal Q}(\kappa)
\int
_0^\infty d\mu\, e^{-\mu\zeta2^{1/3}} \Ai(\kappa+\mu)=\hat{\cal
P}(\zeta).
\nonumber
\end{eqnarray}
This completes the proof of Lemma~\ref{Lemma71}.
\end{pf}

\textit{Sketch of Proof of Theorem}~\ref{ThmExtKernelAsympt},
\textit{formula} (\ref{ExtKernelB}).
Since the sum in brackets in (\ref{eq35ext}) is invariant under the
involution $x_1\leftrightarrow-x_2$ and $t_1\leftrightarrow-t_2$, it
suffices to consider the double integral, with the first term only. The
second half comes for free by acting with the involution!
Given scaling (\ref{sc}), Lemma~\ref{Lemma71} yields
%
\begin{equation}\label{A}
\lim_{t\to\infty} t^{1/3}\frac{dz \,dw}{z-w} \frac
{(-w)^{x_2-1}}{(-z)^{x_1}} \frac{e^{-t_1(z+z^{-1}+2)}}{e^{-t_2(w+w^{-1}+2)}}
=\frac{d\zeta \,d\omega}{\zeta-\omega}
\biggl(\frac{e^{\xi_1\zeta}}{e^{\xi_2\omega}}\biggr)
\frac{e^{s_1\zeta^2}}{e^{s_2\omega^2}}
\end{equation}
and, from (\ref{eqE1234}),
%
\begin{eqnarray}\label{B}
\lim_{t\to\infty}\sum_{i=1}^4 E_i(z,w)
&=& \frac{e^{{\zeta^3}/3-\sigma\zeta}}
{e^{{\omega^3}/3-\sigma\omega}} \bigl(1-\hat{\cal P}(\zeta)\bigr)
\bigl(1-\hat{\cal P}(-\omega)\bigr)\nonumber\\
&&{} - \frac{e^{{\zeta^3}/3-\sigma\zeta
}}{e^{-{\omega^3}/3+\sigma\omega}}
e^{2\sigma\omega} \bigl(1-\hat{\cal P}(\zeta)\bigr) \hat{\cal
Q}(-\omega)\nonumber\\[-8pt]\\[-8pt]
&&{} - \frac{e^{-{\zeta^3}/3+\sigma\zeta}}
{e^{{\omega^3}/3-\sigma\omega}}
e^{-2\sigma\zeta} \bigl(1-\hat{\cal P}(-\omega)\bigr) \hat{\cal
Q}(\zeta)\nonumber\\
&&{}- \frac{e^{{\zeta^3}/3-\sigma\zeta}}
{e^{{\omega^3}/3-\sigma\omega}}
\frac{e^{2\sigma\zeta} }{e^{2\sigma\omega}} \hat{\cal Q}(- \zeta
)\hat{\cal Q}(\omega).
\nonumber
\end{eqnarray}

Combining (\ref{A}) and (\ref{B}) yields the following limit below,
first with the contours as indicated in Figure~\ref{figPaths1}, which
then can be transformed into the vertical lines above in Figure \ref
{figPaths2}, compatible with Figure~\ref{figPaths1}.
%
\begin{figure}

\includegraphics{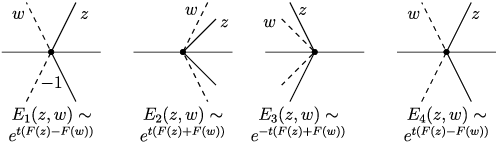}

\caption{Contours $z\in\Gamma_0$ and $w\in\Gamma_{0,z}$ in the
neighborhood of $z=w=-1$.}
\label{figPaths1}
\end{figure}
Indeed, to pick steepest descent paths about $z=w=-1$ respecting the
integration contours in $\oint_{\Gamma_0}dz \oint_{\Gamma_{0,z}} dw$
of (\ref{ExtKernelC}), one must choose the local paths, as illustrated
in Figure~\ref{figPaths1}; these paths must be completed by closed
%
\begin{figure}[b]

\includegraphics{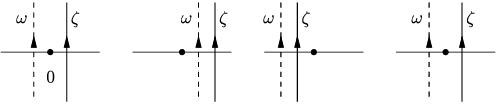}

\caption{Vertical lines $\pm\delta+\I\R$ and $\pm2 \delta+\I\R
$ of
integration for $\zeta$ and $\omega$.}
\label{figPaths2}
\end{figure}
contours encircling the origin deformed to provide steepest descent
contours. In the $\zeta,\omega$ scale, there are $4$ rays emanating
from the origin $\omega=\zeta=0$; one is then free to deform these rays
so as to obtain two parallel imaginary lines near the origin, as
depicted in Figure~\ref{figPaths2}. Therefore the following limit holds
for the first double integral
%
\begin{eqnarray}\label{ExtKernelC}
&&\lim_{t\to\infty}
\frac{t^{1/3}}{(2\pi\I)^2}\oint_{\Gamma_0}dz \oint_{\Gamma_{0,z}}
\frac{dw }{z-w}
\frac{(-w)^{x_2-1}}{(-z)^{x_1}}\frac
{e^{-t_1(z+z^{-1}+2)}}{e^{-t_2(w+w^{-1}+2)}}
\sum_{i=1}^4 E_i(z,w) \nonumber\\
&&\qquad=\frac{1}{(2\pi\I)^2}\int_{\delta+\I\R} d\zeta
\int
_{-\delta+\I\R} d\omega
\frac{e^{{\zeta^3}/3-\sigma\zeta}}{e^{{\omega
^3}/3-\sigma
\omega}}\frac{e^{s_1 \zeta^2}}{e^{s_2 \omega^2}}
\biggl(\frac{e^{\xi_1 \zeta}}{e^{\xi_2 \omega}}\biggr)\nonumber\\[-8pt]
&&\quad\hspace*{271pt}\mbox{\quad(i)}\nonumber\\[-8pt]
&&\qquad\quad\hspace*{118pt}{}\times
\frac{(1-\hat{\cal P}(\zeta))(1-\hat{\cal P}(-\omega))}{\zeta
-\omega
} \nonumber\\
&&\qquad\quad{} -\frac{1}{(2\pi\I)^2}\int_{2\delta+\I\R} d\zeta
\int
_{\delta+\I\R} d\omega
\frac{e^{{\zeta^3}/3-\sigma\zeta}}{e^{-{\omega
^3}/3-\sigma
\omega}}\frac{e^{s_1 \zeta^2}}{e^{s_2 \omega^2}}
\biggl(\frac{e^{\xi_1 \zeta}}{e^{\xi_2 \omega}} \biggr)\nonumber\\[-8pt]
&&\quad\hspace*{271pt}\mbox{\quad(ii)}\nonumber\\[-8pt]
&&\qquad\quad\hspace*{128pt}{}\times
\frac{(1-\hat{\cal P}(\zeta)) \hat{\cal Q}(-\omega)}{\zeta-\omega}
\\
&&\qquad\quad{} -\frac{1}{(2\pi\I)^2}\int_{-\delta+\I\R} d\zeta
\int
_{-2\delta+\I\R} d\omega
\frac{e^{-{\zeta^3}/3-\sigma\zeta}}{e^{{\omega
^3}/3-\sigma
\omega}} \frac{e^{s_1 \zeta^2}}{e^{s_2 \omega^2}}
\biggl(\frac{e^{\xi_1 \zeta}}{e^{\xi_2 \omega}}\biggr)\nonumber\\[-8pt]
&&\quad\hspace*{271pt}\mbox{\quad(iii)}\nonumber\\[-8pt]
&&\qquad\quad\hspace*{141pt}{}\times
\frac{(1-\hat{\cal P}(-\omega)) \hat{\cal Q}(\zeta)}{\zeta-\omega}
\nonumber\\
&&\qquad\quad{} -\frac{1 }{(2\pi\I)^2}\int_{\delta+\I\R} d\zeta
\int
_{-\delta+\I\R} d\omega
\frac{e^{{\zeta^3}/3+\sigma\zeta}}{e^{{\omega
^3}/3+\sigma
\omega}}
\frac{e^{s_1 \zeta^2}}{e^{s_2 \omega^2}}
\biggl(\frac{e^{\xi_1 \zeta}}{e^{\xi_2 \omega}}\biggr)\nonumber\\[-8pt]
&&\quad\hspace*{271pt}\mbox{\quad(iv)}\nonumber\\[-8pt]
&&\hspace*{164pt}{}\times
\frac{\hat{\cal Q}(-\zeta) \hat{\cal Q}(\omega)}{\zeta-\omega}.
\nonumber
\end{eqnarray}
In view of the scaling (\ref{sc}), the involution $x_1\leftrightarrow
-x_2$ and $t_1\leftrightarrow-t_2$ induces the involution $\xi
_1\leftrightarrow-\xi_2$ and $s_1\leftrightarrow-s_2$, so that the
limit of the other double integral is given by the same formula (\ref
{ExtKernelC}) above, but with
%
\begin{equation}
\xi_1\leftrightarrow-\xi_2 \quad\mbox{and}\quad
s_1\leftrightarrow-s_2.
\end{equation}

We are also allowed to interchange the integration variables $\zeta
\leftrightarrow-\omega$, provided the contours of integration are
modified accordingly; this last interchange implies
%
\begin{equation}
\int_{\delta+\I\R} d\zeta\int_{-\delta+\I\R
} d\omega\qquad\mbox{remains}
\end{equation}
%
\begin{equation}\quad
\int_{2 \delta+\I\R} d\zeta\int_{\delta+\I\R
} d\omega\quad\mbox{and}\quad
\int_{- \delta+\I\R} d\zeta\int_{-2\delta+\I\R
} d\omega\qquad\mbox{interchange}.
\end{equation}
So, the three combined maps,
%
\begin{equation}\label{invol}
\zeta\leftrightarrow-\omega,\qquad s_1 \leftrightarrow-s_2,\qquad \xi_1
\leftrightarrow-\xi_2
\end{equation}
have the following effect on the four double integrals $\mbox{(i)},\ldots,
\mbox{(iv)}$ in (\ref{ExtKernelC}):
\begin{eqnarray*}
\mbox{double integral (i) with $\dfrac{e^{\xi_1 \zeta}}{e^{\xi_2
\omega
}}$}&\to&\mbox{same double integral (i), except for $\dfrac{e^{-\xi_1
\zeta}}{e^{-\xi_2 \omega}}$};\\
\mbox{double integral (ii) with $\dfrac{e^{\xi_1 \zeta}}{e^{\xi_2
\omega
}}$}&\to&\mbox{same double integral (iii), except for $\dfrac{e^{-\xi_1
\zeta}}{e^{-\xi_2 \omega}}$};\\
\mbox{double integral (iii) with $\dfrac{e^{\xi_1 \zeta}}{e^{\xi_2
\omega}}$}&\to&\mbox{same double integral (ii), except for $\dfrac
{e^{-\xi_1 \zeta}}{e^{-\xi_2 \omega}}$};\\
\mbox{double integral (iv) with $\dfrac{e^{\xi_1 \zeta}}{e^{\xi_2
\omega
}}$}&\to&\mbox{same double integral (iv), except for $\dfrac{e^{-\xi_1
\zeta}}{e^{-\xi_2 \omega}}$.}
\end{eqnarray*}
Therefore the limit
%
\begin{eqnarray}
&&\lim_{t\to\infty} \frac{t^{1/3}}{(2\pi\I)^2}\oint_{\Gamma_0}dz
\oint
_{\Gamma_{0,z}}\frac{dw}{z-w}\sum_{i=1}^4 E_i(z,w)
\nonumber\\[-8pt]\\[-8pt]
&&\qquad{}\times\biggl({\frac{(-w)^{x_2-1}}{(-z)^{x_1}}\frac
{e^{-t_1(z+z^{-1}+2)}}{e^{-t_2(w+w^{-1}+2)}}} + {\frac
{(-z)^{x_2}}{(-w)^{x_1+1}}\frac
{e^{-t_1(w+w^{-1}+2)}}{e^{-t_2(z+z^{-1}+2)}}} \biggr)
\nonumber
\end{eqnarray}
is given by the right-hand side of (\ref{ExtKernelC}) with the replacement
%
\begin{equation}
\frac{e^{\xi_1 \zeta}}{e^{\xi_2 \omega}} \rightarrow\frac
{e^{\xi_1
\zeta}}{e^{\xi_2 \omega}} +\frac{e^{-\xi_1 \zeta}}{e^{-\xi_2
\omega}}.
\end{equation}
Finally, in order to change the sign of the last integral, one switches
the sign $\omega\rightarrow-\omega$ and $\zeta\rightarrow-\zeta$,
which changes
%
\begin{equation}\qquad
-\int_{\delta+\I\R} d\zeta
\int_{-\delta+\I\R} d\omega\frac{1}{\zeta-\omega
}
\qquad\mbox{into }
+\int_{- \delta+\I\R} d\zeta
\int_{\delta+\I\R} d\omega
\frac{1}{\zeta-\omega}.
\end{equation}
Renaming variables $\zeta\to u, \omega\to v$ gives formula (\ref
{ExtKernelB}).

\begin{appendix}\label{app}
\section*{Appendix: Some properties of Bessel and Airy functions}
Let us recall that the Bessel function representation of order $n\in\Z$
%
\setcounter{equation}{0}
\begin{equation}
J_n(2t)=\frac1{2\pi\I}\oint_{\Gamma_0}dz\frac{e^{t(z-z^{-1})}}{z^{n+1}}
\end{equation}
has the symmetries
%
\begin{equation}
J_n(2t)=(-1)^n J_{-n}(2t)=(-1)^n J_n(-2t).
\end{equation}
Moreover,
%
\begin{equation}\label{eqA0}
\frac{1}{2\pi\I}\oint_{\Gamma_0}\frac{dz}{z}\frac
{e^{b(z-z^{-1})}e^{a(z+z^{-1})}}{z^n} =\biggl(\frac{b+a}{b-a}
\biggr)^{n/2} J_n\bigl(2\sqrt{b^2-a^2}\bigr).
\end{equation}
It is well known~\cite{AS84} that
%
\begin{equation}\label{eqA1}
\lim_{t\to\infty} t^{1/3} J_{[2t+\xi t^{1/3}]}(2t)=\Ai(\xi).
\end{equation}
An uniform bound obtained in~\cite{Lan00} is
%
\begin{equation}
|(2t)^{1/3} J_{n}(2t)|\leq c,\qquad c=0.785\ldots,\qquad n\in\Z.
\end{equation}
This bound, together with uniform expansion which can be found in \cite
{AS84} is used in Lemma A.1 of~\cite{Fer04} to get the following
result. Fix any $\theta>0$. Then, there exists a constant $t_0>0$ and a
constant $C>0$ such that, uniformly in $t\geq t_0$,
%
\begin{equation}\label{eqA3}
\bigl|t^{1/3} J_{[2t+\xi t^{1/3}]}(2t)\bigr|\leq C \min\{1,e^{-\theta\xi}\}.
\end{equation}
Actually, the statement of Lemma A.1 of~\cite{Fer04} is for $\theta
=1/2$ but inspecting the proof it is straightforward to see that it
holds for any fixed $\theta>0$.
The Airy function has, among others, the following two integral
representations. For any $\delta>0$, it holds
%
\begin{equation}\label{A4k}\hspace*{23pt}
\Ai(x)=\frac{1}{2\pi\I}\int_{\delta+\I\R}du\,
e^{u^3/3-ux},\qquad \Ai
(x)=\frac{1}{2\pi\I}\int_{-\delta+\I\R}dv\,e^{-v^3/3+vx}.
\end{equation}
Moreover, for any $\delta>0$, it holds
%
\begin{eqnarray}\label{eqA10}
\mathrm{Ai}^{(s)}(x)=e^{s x +2s^3/3}\Ai(x+s^2) &=& \frac{1}{2\pi\I
}\int
_{\delta+\I\R}du\, e^{u^3/3+u^2 s-ux},\nonumber\\[-8pt]\\[-8pt]
\mathrm{Ai}^{(s)}(x)=e^{s x +2s^3/3}\Ai(x+s^2) &=& \frac{1}{2\pi\I
}\int
_{-\delta+\I\R}dv\, e^{-v^3/3+v^2s+vx}.
\nonumber
\end{eqnarray}
\end{appendix}



\printaddresses

\end{document}